\newcommand{\lmd}{\lambda}
\newcommand{\Lmd}{\Lambda}
\newcommand{\Omg}{\Omega}
\newcommand{\sgm}{\sigma}
\newcommand{\p}{\partial}
\newcommand{\al}{\alpha}
\newcommand{\afa}{\alpha}
\newcommand{\veps}{\varepsilon}
\newcommand{\fai}{\varphi}
\newcommand{\rmi}{{\mathrm i}}
\newcommand{\rme}{{\mathrm e}}
\newcommand{\rmT}{{\mathrm T}}
\newcommand{\rmK}{{\mathrm K}}
\newcommand{\rmI}{{\mathrm I}}
\newcommand{\gfm}{M(\eta, c,e)}
\newcommand{\gfmh}{M(\hat{\eta}, c,e)}
\newcommand{\gfme}{M(\eta,c,e,E)}
\newcommand{\gfmhe}{M(\hat{\eta},c,e,E)}
\newcommand{\mcalB}{\mathcal{B}}
\newcommand{\mcalD}{\mathcal{D}}
\newcommand{\mcalF}{\mathcal{F}}
\newcommand{\mcalH}{\mathcal{H}}
\newcommand{\mcalI}{\mathcal{I}}
\newcommand{\mcalJ}{\mathcal{J}}
\newcommand{\mcalL}{\mathcal{L}}
\newcommand{\bbC}{\mathbb{C}}
\newcommand{\bbP}{\mathbb{P}}
\newcommand{\bbZ}{\mathbb{Z}}
\newcommand{\bft}{\mathbf{t}}
\newcommand{\rmP}{\mathrm{P}}
\newcommand{\bfa}{\mathbf{a}}
\newcommand{\bfr}{\boldsymbol{r}}
\newcommand*{\pp}[1]
  {\frac{\partial   }
        {\partial #1}
  }
\newcommand*{\pfrac}[2]
  {\frac{\partial #1}
        {\partial #2}
  }
\newcommand*{\pair}[2]
  {
    \left\langle
      #1,#2
    \right\rangle
  }
  \newcommand*{\Bigset}[2]
  {
   \left\{ #1 \middle| #2 \right\}
  }
\newcommand{\beq}{\begin{equation}}
\newcommand{\eeq}{\end{equation}}
\DeclareMathOperator{\res}{Res}
\DeclareMathOperator{\diag}{diag}
\DeclareMathOperator{\Vect}{Vect}
\DeclareMathOperator{\grad}{grad}
\DeclareMathOperator{\td}{d\!}
\newtheorem{thm}{Theorem}[section]
\newtheorem{rmk}[thm]{Remark}
\newtheorem{cor}[thm]{Corollary}
\newtheorem{lem}[thm]{Lemma}
\newtheorem{prop}[thm]{Proposition}
\newtheorem{defn}[thm]{Definition}
\newtheorem{ex}[thm]{Example}
\numberwithin{equation}{section}
\begin{document}


\title{Legendre transformations of a class of generalized Frobenius manifolds and the associated integrable hierarchies}
\author{Si-Qi Liu, Haonan Qu, Youjin Zhang}
\date{\today}
\maketitle

\begin{abstract}

For two generalized Frobenius manifolds related by a Legendre-type transformation, we show that the associated integrable hierarchies of hydrodynamic type, which are called the Legendre-extended Principal Hierarchies, are related by a certain linear reciprocal transformation; we also show, under the semisimplicity condition, that the topological deformations of these Legendre-extended Principal Hierarchies are  related by the same linear reciprocal transformation.
\end{abstract}

{\small
\noindent\textbf{Keywords.} Frobenius manifold; Principal Hierarchy; Legendre transformation; WDVV equation;
Tau structure; Virasoro symmetry; Loop equation}

\tableofcontents

\section{Introduction}

Since Dubrovin introduced the notion of Frobenius manifold in the early1990s \cite{2dTFT,Dubrovin-FM}, this geometric structure has been playing important roles in different research subjects of mathematical physics,
such as Gromov-Witten theory,
singularity theory and integrable systems, see for example
\cite{BLS, Buryak2015, normal-form, FJR, Givental-1, Givental-2, GiventalMilanov, Hertling-book, Kontsevich-Manin, LWZ1, Ruan-Tian} and references therein.
In applications of Frobenius manifold to these research subjects
some generalizations of this geometric structure arise \cite{Lorenzoni-1, Lorenzoni-2, Getzler-04, Manin-Hertling, GFM2, GFM1, Manin}. Among such generalizations we have the so called generalized Frobenius manifolds with non-flat unity \cite{GFM2, GFM1},
which satisfy all the axioms of Dubrovin's definition of Frobenius manifold but the flatness condition imposed on the unit vector fields.
In this paper, we refer to them simply as \textit{generalized Frobenius manifolds}. Examples of such generalized Frobenius manifolds can be found in \cite{Brini,r-spin, GFM2, double-Hurwitz}.

It is shown in \cite{GFM2} that one can associate with any generalized Frobenius manifold a bihamiltonian integrable hierarchy of hydrodynamic type, which is called the Principal Hierarchy of the generalized Frobenius manifold and is an analogue of the one for a usual Frobenius manifold \cite{normal-form}.
Such an integrable hierarchy  possesses a tau structure, and the tau-cover of the Principal Hierarchy admits a family of Virasoro symmetries.
The condition of linearization of actions of the Virasoro symmetries on the tau function of the Principal Hierarchy leads to the loop equation of the generalized Frobenius manifold, which is shown to have a unique solution under the assumption of semisimplicity \cite{GFM2}. The solution of the loop equation yields a quasi-Miura transformation which transforms the Principal Hierarchy of the semisimple generalized Frobenius manifold to its topological deformation.
In \cite{GFM2,GFM1,ExtendedAL} the topological deformations of the Principal Hierarchies of two particular examples of generalized Frobenius manifolds are studied, it turns out that they contain the Volterra hierarchy, the $q$-deformed KdV hierarchy and the Ablowitz-Ladik hierarchy, which are discrete integrable hierarchies well known in the theory of soliton equation.

In the present paper, we study the relationship between the integrable hierarchies associated with two generalized Frobenius manifolds which are related by a Legendre-type transformation. Such a kind of transformations for Frobenius manifolds were originally introduced by Dubrovin in \cite{2dTFT} based on symmetries of the WDVV equations of associativity.  Each Legendre-type transformation is determined by an invertible flat vector field of the Frobenius manifold that is under consideration, and it transforms the given Frobenius manifold to another one.
As it is shown in \cite{YD}, these two Frobenius manifolds
share the same monodromy data, and their Principal Hierarchies
are related by a certain linear reciprocal transformation.
Moreover, if these two Frobenius manifolds are semisimple,
the topological deformations of their Principal Hierarchies are also related by the same linear reciprocal transformation, and share the same deformed tau structure. In \cite{Strachan-Stedman}, Strachan and Stedman generalized Dubrovin's Legendre-type transformations of Frobenius manifolds. Such a generalized Legendre-type transformation is determined by an invertible vector field, which satisfies a certain set of defining conditions and is called a Legendre field. In general, a Legendre field is
non-flat with respect to the flat metric of the Frobenius manifold under consideration, and the associated Legendre-type transformation transforms the Frobenius manifold to a generalized one with non-flat unity,
moreover, one can apply such a transformation to any generalized Frobenius manifold.

In order to establish a relationship between the integrable hierarchies associated with two generalized Frobenius manifold that are related by a generalized Legendre-type transformation (which we will simply call a Legendre-type transformation), we need to extend the Principal Hierarchies of the generalized Frobenius manifolds by some flows that
are constructed from the Legendre field, and we call these extended integrable hierarchies the Legendre-extended Principal Hierarchies of the generalized Frobenius manifolds. We show that these Legendre-extended Principal Hierarchies of the generalized Frobenius manifolds are related by a linear reciprocal transformation, and they also possess tau-covers and Virasoro symmetries. We also show that the corresponding linearization conditions of the actions of the Virasoro symmetries on the tau function of the Legendre-extended Principal Hierarchies lead to the same loop equations that are derived in \cite{GFM1}. In the case when the generalized Frobenius manifolds are semisimple, we show that the topological deformations of their Legendre-extended Principal Hierarchies are also related by the same linear reciprocal transformation.

The paper is organized as follows.
In Sect.\,2, we recall the notion and main properties of Legendre-type transformations introduced in \cite{Strachan-Stedman}.
In Sect.\,3, we construct the Legendre-extended Principal Hierarchy and its tau structures for a generalized Frobenius manifold,
and establish the relationship of the Legendre-extended Principal Hierarchies of two generalized Frobenius manifolds that are related by a Legendre-type transformation.
In Sect. 4, we study the Virasoro symmetries and the topological deformations of the Legendre-extended Principal Hierarchies,
and establish the relationships between these integrable hierarchies
under the semisimplicity condition.
In Sect.\,5 and Sect.\,6, we present two important examples,
one relates the KdV hierarchy with the $q$-deformed KdV hierarchy,
while the other one relates the Toda hierarchy with the Ablowitz-Ladik hierarchy.
In Sect.\,7, we give some concluding remarks.

\section{The generalized Legendre transformations}
\subsection{Legendre fields}

Let $M$ be an $n$-dimensional smooth manifold endowed with a generalized Frobenius manifold (GFM) structure $(\eta, c, e)$, where $\eta$ is a flat metric on $M$, $c\colon TM\times TM\to TM$ is a tensor field of $(1,2)$-type, $e$ is a vector field on $M$, and the triple $(\eta, c, e)$ is required to satisfy the following conditions:
\begin{enumerate}
 \item For each $p\in M$, $(\eta, c, e)$ yields a Frobenius algebra structure on $T_pM$ with unity $e(p)$.
  \item Denote by $\nabla$ the Levi-Civita connection of $\eta$, and by $\tilde{c}$ the $(0,3)$-tensor defined by $\tilde{c}(X,Y,Z)=\langle X\cdot Y,Z\rangle$, then $\nabla \tilde{c}$ is a symmetric 4-tensor.
Here we use the short notations
  \begin{equation*}
    X\cdot Y:=c(X,Y),\quad \langle X, Y\rangle:=\eta(X,Y),\quad
    X, Y\in\Vect(M).
  \end{equation*}

\end{enumerate}
Note that the flatness condition $\nabla e = 0$ imposed on the unit vector field $e$ and the existence of an Euler vector field in Dubrovin's definition of a Frobenius manifold \cite{2dTFT} are not assumed here, and this notion of generalized Frobenius manifold
coincides with the one given in \cite{GFM1} except for the existence of an Euler vector field $E$.

In a system of flat coordinates $\{v^\afa\}_{\afa=1}^n$
of the metric $\eta$ in a neighborhood of a point $p\in M$
we have
\begin{equation*}
  \eta = \eta_{\afa\beta}\td v^\afa\otimes\td v^\beta,\qquad
  c= c^\gamma_{\afa\beta}(v)\,\p_\gamma\otimes\td v^\afa\otimes\td v^\beta,
\end{equation*}
here and in what follows summation over repeated upper and lower Greek indices with range from $1$ to $n$ is assumed,
and $\p_\gamma=\frac{\p}{\p v^\gamma}$.
Then the symmetry property of the tensor field $\nabla \tilde{c}$ implies that there exists locally a function $F(v)$ such that
\begin{equation}\label{240516-1208}
 \eta_{\al\xi} c^\xi_{\beta\gamma}(v)=\frac{\p^3 F(v)}{\p v^\afa\p v^\beta\p v^\gamma},\quad \al,\beta,\gamma=1,\dots,n,
\end{equation}
and the associativity of the Frobenius algebra is equivalent to the fact that $F(v)$ satisfies
the WDVV equations
\begin{equation}\label{WDVV}
\frac{\p^3 F(v)}{\p v^\afa \p v^\beta\p v^\lmd}
\eta^{\lmd\mu}\frac{\p^3 F(v)}{\p v^\mu \p v^\gamma\p v^\delta}
=
\frac{\p^3 F(v)}{\p v^\delta \p v^\beta\p v^\lmd}
\eta^{\lmd\mu}\frac{\p^3 F(v)}{\p v^\mu \p v^\gamma\p v^\afa}
,
\end{equation}
for $\al,\beta,\gamma, \delta=1,\dots,n$.
As it is shown in \cite{2dTFT}, the WDVV equations possess two types of symmetries. The first type of symmetries is given by \textit{Legendre-type transformations}
(see also in \cite{BihamCoh2, normal-form, An-Bn-V}),
which is generalized in \cite{Strachan-Stedman}.
Another type of symmetries consists of \textit{inversion symmetries}
(see also in \cite{Inversion-sym}),
which will not be considered in the present paper.

\begin{defn}[\cite{Strachan-Stedman}]
  Let $\gfm$ be a generalized Frobenius manifold.
  A solution $B\in\Vect(M)$ to the equation
  \begin{equation}\label{Legendre sym}
    X\cdot\nabla_YB = Y\cdot\nabla_XB,\quad \forall\,X,Y\in\Vect(M)
  \end{equation}
is called a Legendre field.
\end{defn}
Taking $Y=e$ in \eqref{Legendre sym}, we arrive at
\begin{equation}\label{basic prop Y=e}
  \nabla_XB=X\cdot\nabla_eB,\quad \forall X\in\Vect(M)
\end{equation}
for any Legendre field $B$.
In the flat coordinates $v^1,\dots, v^n$, the equation \eqref{Legendre sym} can be rewritten as
\begin{equation}\label{LegendreSym1}
  c_{\afa\gamma}^\delta\p_\beta B^\gamma
 =c_{\beta\gamma}^\delta\p_\afa B^\gamma,\quad \forall \al,\beta,\delta=1,\dots,n
\end{equation}
for a vector field $B=B^\gamma\p_\gamma$. These equations are also equivalent to
\begin{equation}\label{LegendreSym2}
  \p_\afa(B^\gamma c_{\beta\gamma}^\delta) = \p_\beta(B^\gamma c_{\afa\gamma}^\delta).
\end{equation}
From the above definition and the equations \eqref{LegendreSym2}
it follows that we have the following two examples of Legendre fields:
\begin{enumerate}
  \item If $B\in\Vect(M)$ satisfies the condition $\nabla B=0$, then $B$ is a Legendre field. We call such a Legendre field a flat Legendre field.
  \item The unit vector field $e=e^\afa\p_\afa$ is a Legendre field.
\end{enumerate}

\begin{rmk}\label{rmk:grad field}
Multiplying $e_\delta:=\eta_{\afa\delta}e^\afa$ on both sides of \eqref{LegendreSym1}, we obtain
\[
  \eta_{\afa\gamma}\p_\beta B^\gamma = \eta_{\beta\gamma}\p_\afa B^\beta,
\]
which implies that the Legendre field $B$ is locally a gradient field,
i.e., there locally exists a function $\fai$ on $M$ such that
\begin{equation}
  B=\grad_\eta\fai := \pfrac\fai{v^\afa}\eta^{\afa\beta}\pp{v^\beta}.
\end{equation}
In particular, the unity $e$ is locally a gradient field (see also in \cite{GFM1}).
\end{rmk}

For any vector field $B$ on $\gfm$, introduce a new $(0,2)$-tensor $\hat\eta=\langle\ ,\,\rangle_B$ by
\begin{equation} \label{pair-B}
  \langle X,Y\rangle_B:=\langle B\cdot X, B\cdot Y\rangle,\quad X,Y\in\Vect(M).
\end{equation}
We say $B$ is \textit{invertible}, if there exists a vector field $B^{-1}$ such that $B\cdot B^{-1}=e$.
Note that if $B$ is invertible, then $\langle\ ,\,\rangle_B$ is non-degenerate.

\begin{prop}[Generalized Legendre transformation \cite{Strachan-Stedman}]\label{zh-1}
Suppose $B=B^\afa\p_\afa$ be an invertible Legendre field on a generalized Frobenius manifold $\gfm$,  then the following statements hold true:
\begin{enumerate}
  \item $\gfmh$ is also a generalized Frobenius manifold,
  and the Levi-Civita connection $\hat\nabla$ with respect to the new metric $\hat\eta=\langle\ ,\,\rangle_B$ satisfies the relation
  \begin{equation} \label{new conenction hat nabla}
    \hat\nabla_XY=B^{-1}\cdot\nabla_X(B\cdot Y),\quad \forall X,Y\in\Vect(M).
  \end{equation}
    \item A system of flat coordinates $\{\hat v^\afa\}_{\afa=1}^n$ with respect to the new metric $\hat\eta$
  can be chosen as a solution to the equations
  \begin{equation}\label{hat v-afa1}
    \pfrac{\hat v^\afa}{v^\beta} = B^\gamma(v) c^\afa_{\beta\gamma}(v),\quad \al,\beta=1,\dots,n.
  \end{equation}
  Moreover, we have
  \begin{equation}\label{hat v-afa2}
    \hat\eta = \eta_{\afa\beta}\td \hat v^\afa\otimes \td \hat v^\beta,
    \quad e=B^\afa(v)\pp{\hat v^\afa},
    \quad
    \pp{v^\afa} = B(v)\cdot\pp{\hat v^\afa}.
  \end{equation}
  \item There locally exists a function $\hat F=\hat F(\hat v)$ on $M$ such that
  \[
    \frac{\p^3\hat F}{\p\hat v^\afa\p\hat v^\beta\p\hat v^\gamma}
 =
    \left\langle
      \pp{\hat v^\afa}\cdot\pp{\hat v^\beta} ,
      \pp{\hat v^\gamma}
    \right\rangle_B,\quad \al,\beta,\gamma=1,\dots,n,
  \]
  and such $\hat F(\hat v)$ can be chosen as a solution to the equation
  \begin{equation}
    \frac{\p^2\hat F(\hat v)}{\p\hat v^\afa\p\hat v^\beta}
   =\frac{\p^2 F(v)}{\p v^\afa\p v^\beta},\quad \al,\beta=1,\dots,n,
  \end{equation}
  where $F(v)$ is defined as in \eqref{240516-1208}.
  \item $B^{-1}$ is a Legendre field on $\gfm$,
  which transforms $\hat\eta$ back to $\eta$.
\end{enumerate}
\end{prop}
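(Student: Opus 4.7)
My plan is to build the new flat coordinate system first, and then read off all four statements of the proposition from that construction. I would begin by showing that the PDE system \eqref{hat v-afa1} is compatible: its integrability condition $\partial_\delta(B^\gamma c^\alpha_{\beta\gamma}) = \partial_\beta(B^\gamma c^\alpha_{\delta\gamma})$ is precisely the component form \eqref{LegendreSym2} of the Legendre-field equation. The Jacobian $\partial\hat v^\alpha/\partial v^\beta = B^\gamma c^\alpha_{\beta\gamma}$ is the matrix of multiplication by $B$ in the $v$-basis, hence invertible by assumption, so $\{\hat v^\alpha\}$ gives valid local coordinates and inverting this Jacobian yields $\partial/\partial v^\alpha = B\cdot \partial/\partial \hat v^\alpha$. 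Substituting into the definition of $\hat\eta$ then gives $\hat\eta(\partial/\partial \hat v^\alpha, \partial/\partial \hat v^\beta) = \langle B\cdot\partial/\partial\hat v^\alpha, B\cdot\partial/\partial\hat v^\beta\rangle = \eta_{\alpha\beta}$, so $\hat\eta$ has constant matrix in $\hat v$-coordinates and is therefore flat. A standard contravariant transformation of the components of $e$, together with $c^\alpha_{\beta\gamma}e^\beta = \delta^\alpha_\gamma$, yields $e = B^\alpha\partial/\partial\hat v^\alpha$. This establishes (2).

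Next I would verify (1). The Frobenius compatibility of $(\hat\eta, c, e)$ on each tangent space is immediate from the old one and the associativity of $\cdot$:
\begin{equation*}
\hat\eta(X\cdot Y, Z) = \langle B\cdot X\cdot Y, B\cdot Z\rangle = \langle B\cdot X, Y\cdot B\cdot Z\rangle = \hat\eta(X, Y\cdot Z),
\end{equation*}
and $e$ is still the unit since $\cdot$ is unchanged. The connection formula \eqref{new conenction hat nabla} is tested on the basis $\partial/\partial\hat v^\alpha$: the left side vanishes because the $\hat v^\alpha$ are flat for $\hat\eta$, while the right side equals $B^{-1}\cdot\nabla_{\partial/\partial\hat v^\alpha}(\partial/\partial v^\beta) = 0$ because $v^\beta$ are flat for $\nabla$; the general case follows by $\mathbb R$-linearity and the Leibniz rule for both connections. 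The remaining GFM axiom, symmetry of $\hat\nabla\tilde c_{\hat\eta}$, will follow as soon as (3) is proved.

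For (3), I would set $G_{\alpha\beta}(v) := \partial^2 F/\partial v^\alpha\partial v^\beta$ and view it as a function of $\hat v$ through $v = v(\hat v)$. Its derivative
\begin{equation*}
\frac{\partial G_{\alpha\beta}}{\partial\hat v^\gamma}
= \frac{\partial v^\mu}{\partial\hat v^\gamma}\,\tilde c_{\alpha\beta\mu}
= \langle B^{-1}\cdot\partial_\gamma\cdot\partial_\alpha, \partial_\beta\rangle
\end{equation*}
is totally symmetric in $\alpha,\beta,\gamma$ by associativity and the Frobenius identity $\langle X\cdot Y, Z\rangle = \langle X, Y\cdot Z\rangle$. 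Two successive applications of the Poincaré lemma (first integrating the closed 1-form $G_{\alpha\beta}\,\td\hat v^\alpha$ for each fixed $\beta$ to produce $H_\beta$ with $\partial H_\beta/\partial\hat v^\alpha = G_{\alpha\beta}$, then integrating $H_\beta\,\td\hat v^\beta$ using symmetry $G_{\alpha\beta}=G_{\beta\alpha}$) produce a local $\hat F(\hat v)$ with $\partial^2\hat F/\partial\hat v^\alpha\partial\hat v^\beta = G_{\alpha\beta}$. A short computation using $\partial/\partial v^\alpha = B\cdot\partial/\partial\hat v^\alpha$ then shows $\partial^3\hat F/\partial\hat v^\alpha\partial\hat v^\beta\partial\hat v^\gamma = \langle B^{-1}\cdot\partial_\alpha\cdot\partial_\beta,\partial_\gamma\rangle = \hat\eta(\partial/\partial\hat v^\alpha\cdot\partial/\partial\hat v^\beta,\partial/\partial\hat v^\gamma)$, so $\hat F$ is a WDVV potential for $\hat\eta$, which retroactively gives the symmetry of $\hat\nabla\tilde c_{\hat\eta}$ and completes (1).

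Finally, for (4) I would apply the connection formula \eqref{new conenction hat nabla} to $B^{-1}$ directly:
\begin{equation*}
\hat\nabla_Y B^{-1} = B^{-1}\cdot\nabla_Y(B\cdot B^{-1}) = B^{-1}\cdot\nabla_Y e,
\end{equation*}
so that the Legendre condition $X\cdot\hat\nabla_Y B^{-1} = Y\cdot\hat\nabla_X B^{-1}$ reduces to $X\cdot B^{-1}\cdot\nabla_Y e = Y\cdot B^{-1}\cdot\nabla_X e$, which holds because $e$ itself is a Legendre field on $\gfm$. Hence $B^{-1}$ is a Legendre field on $\gfmh$, and its associated generalized Legendre transformation produces the metric $\langle B\cdot B^{-1}\cdot X, B\cdot B^{-1}\cdot Y\rangle = \langle X, Y\rangle = \eta(X,Y)$, recovering the original GFM. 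The main obstacle I anticipate is the careful bookkeeping of how multiplication by $B$ and $B^{-1}$ intertwines with $\nabla$ in the Leibniz expansions of (2.9); once the flat coordinate chart $\hat v^\alpha$ is in hand, however, every remaining verification reduces to elementary manipulations using only associativity, commutativity, and the Frobenius pairing identity.
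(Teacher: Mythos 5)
The paper does not prove this proposition at all: it is imported verbatim from the cited reference of Strachan and Stedman, and the text moves on immediately after the statement. So there is no in-paper proof to compare against; what you have written is a self-contained derivation, and I checked it step by step: it is correct. Your ordering (build the chart $\hat v^\afa$ first via the integrability of \eqref{hat v-afa1}, which is exactly \eqref{LegendreSym2}, then read off $\pp{v^\afa}=B\cdot\pp{\hat v^\afa}$ and the constancy of $\hat\eta_{\afa\beta}$) is the efficient route, and the remaining verifications — invariance of $\hat\eta$ from associativity and the Frobenius pairing, the connection identity checked on the frame $\pp{\hat v^\afa}$ where both sides vanish and both sides are connections in $Y$, the total symmetry of $\p G_{\afa\beta}/\p\hat v^\gamma=\langle B^{-1}\cdot\p_\gamma\cdot\p_\afa,\p_\beta\rangle$ feeding two applications of the Poincar\'e lemma, and the reduction of the Legendre condition for $B^{-1}$ to that for $e$ via $\hat\nabla_Y B^{-1}=B^{-1}\cdot\nabla_Y e$ — are all sound. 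Two small remarks. First, in part (4) you correctly prove that $B^{-1}$ is a Legendre field on $\gfmh$ (with respect to $\hat\nabla$); the statement as printed says ``on $\gfm$,'' which is evidently a typo in the paper, since the subsequent text treats $\hat B=B^{-1}$ as a Legendre field of $\hat M$. Second, when you invoke the Leibniz rule to extend the connection identity from the frame to general $Y$, it is worth saying explicitly that $D_XY:=B^{-1}\cdot\nabla_X(B\cdot Y)$ satisfies $D_X(fY)=(Xf)Y+fD_XY$, which is the one-line computation that makes ``agreement on a frame implies agreement everywhere'' legitimate; you gesture at this but it is the only place a reader might want the display.
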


Note that the difference between $\gfm$ and $\gfmh$
lies only in their metrics $\eta$ and $\hat\eta$, and
they share the same Frobenius multiplication $c$ and the unit vector field $e$.

\subsection{Quasi-homogeneity}

We say that the generalized Frobenius manifold $\gfm$ is \textit{quasi-homogeneous},
if there exists a vector field $E=E^\afa\p_\afa$ and a constant $d\in\bbC$, called the \textit{Euler vector field} and the \textit{charge} of $M$ respectively, such that $\nabla\nabla E=0$, and
The Lie derivatives of $c$ and $\eta$ along $E$ satisfy the relations
  \begin{equation}\label{quasi-homog of c and eta}
    \mcalL_E c=c, \quad
    \mcalL_E\eta = (2-d)\eta.
  \end{equation}
In other words, $\gfme$ is a generalized Frobenius manifold of charge $d$ in the sense of \cite{GFM1}.
We assume in what follows that the Euler vector field $E$ is diagonalizable, so that we can choose a system of flat coordinates $\{v^\afa\}$ of $\eta$ such that
  \begin{equation}\label{DDE=0}
    E = \sum_{\afa=1}^{n}
      \left[
        \left(
          \frac{2-d}{2}-\mu_\afa
        \right)v^\afa+r^\afa
      \right]\pp{v^\afa},
  \end{equation}
where $\mu:=\diag(\mu_1,...,\mu_n)$ is a part of the monodromy data of $M$ at $z=0$ \cite{2dTFT, Painleve, normal-form, GFM1}, and
the constants $r^\afa\ne 0$ only if $\mu_\afa+\frac d2=1$.

\begin{defn}[\cite{BihamCoh2}]
  Suppose $\gfme$ is a generalized Frobenius manifold of charge $d$, a Legendre field $B\in\Vect(M)$ is called quasi-homogeneous if
  \begin{equation}\label{quasi-homog b}
    [E,B]=\left(\mu_B-\frac{2-d}{2}\right)B
  \end{equation}
  for a certain constant $\mu_B\in\bbC$.
\end{defn}

We note that a flat Legendre field $B=\pp{v^\afa}$ is quasi-homogeneous,
since
\[\Bigl[E,\pp{v^\afa}\Bigr]=\left(\mu_\afa-\frac{2-d}{2}\right)\pp{v^\afa},\]
and the constant $\mu_B$ coincides with $\mu_\afa$.
The unit vector field $e$ is also quasi-homogeneous,
because of the identity $[E,e]=-e$ (see \cite{GFM1}),
and $\mu_e=-\frac d2$ which
is also denoted by $\mu_0$ in \cite{GFM1}.

\begin{lem}
  Suppose an invertible vector field $B$ on $\gfme$ satisfies the relation \eqref{quasi-homog b},
  then its inverse $\hat B := B^{-1}$ satisfies
  \begin{equation}\label{b^-1 quasi-homog}
    \left[E,\hat B\right] =
    -\left(\mu_B+\frac{2+d}{2}\right)\hat B.
  \end{equation}
\end{lem}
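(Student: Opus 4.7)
The plan is to differentiate the defining identity $B\cdot\hat B = e$ along $E$ using the Leibniz rule for the Lie derivative, and then invert. Concretely, for any two vector fields $X,Y$ one has
\begin{equation*}
  [E,\,X\cdot Y] = (\mcalL_E c)(X,Y) + [E,X]\cdot Y + X\cdot [E,Y],
\end{equation*}
so the quasi-homogeneity condition $\mcalL_E c = c$ of \eqref{quasi-homog of c and eta} gives
\begin{equation*}
  [E,\,X\cdot Y] = X\cdot Y + [E,X]\cdot Y + X\cdot[E,Y].
\end{equation*}

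Next I would apply this identity with $X=B$ and $Y=\hat B$, using $B\cdot\hat B = e$ on the left-hand side and the known commutator $[E,e]=-e$ (recalled right before the lemma) together with the hypothesis $[E,B]=\bigl(\mu_B - \tfrac{2-d}{2}\bigr)B$. The resulting identity becomes
\begin{equation*}
  -e = e + \Bigl(\mu_B - \tfrac{2-d}{2}\Bigr)\,B\cdot\hat B + B\cdot[E,\hat B],
\end{equation*}
and, since $B\cdot\hat B=e$, it reduces to
\begin{equation*}
  B\cdot[E,\hat B] = -\Bigl(\mu_B + \tfrac{2+d}{2}\Bigr)\,e.
\end{equation*}

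Finally I would multiply both sides by $\hat B$ in the Frobenius algebra. Using associativity together with $\hat B\cdot B = e$ and the fact that $e$ acts as the unit, the left-hand side collapses to $[E,\hat B]$ and the right-hand side to $-\bigl(\mu_B+\tfrac{2+d}{2}\bigr)\hat B$, which is precisely \eqref{b^-1 quasi-homog}.

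The proof is essentially a two-line computation once the Leibniz formula for $\mcalL_E(X\cdot Y)$ is in hand; there is no real obstacle. The only thing to be a bit careful about is the arithmetic combining the $+1$ coming from $\mcalL_E c = c$, the $-1$ coming from $[E,e]=-e$, and the shift $-\tfrac{2-d}{2}$ from the hypothesis, which must assemble to the claimed shift $+\tfrac{2+d}{2}$ in the final formula.
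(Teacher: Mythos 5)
Your proposal is correct and follows exactly the paper's argument: apply $\mcalL_E$ to $B\cdot\hat B=e$, expand via the Leibniz rule using $\mcalL_Ec=c$ and $[E,e]=-e$, and then multiply by $\hat B$ (a step the paper leaves implicit in ``from which it follows''). The arithmetic $2-\tfrac{2-d}{2}=\tfrac{2+d}{2}$ checks out.
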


\begin{proof}
  Using the identity $[E,e]=-e$ and \eqref{quasi-homog of c and eta}, we have
  \begin{align*}
    -e &=\, [E,e] = \mcalL_E\left(B\cdot \hat B\right) \\
    &=\,
      (\mcalL_EB)\cdot \hat B + (\mcalL_E c)(B, \hat B)
     +B\cdot(\mcalL_E \hat B) \\
    &=\,
      \left(\mu_B-\frac{2-d}{2}\right)e+e+B\cdot[E,\hat B],
  \end{align*}
from which it follows \eqref{b^-1 quasi-homog}. The lemma is proved.
\end{proof}

\begin{prop}\label{prop:linearity of new E}
  Suppose $B$ is an invertible quasi-homogeneous Legendre field on the generalized Frobenius manifold $\gfme$ of charge $d$,
then $\gfmhe$ is a generalized Frobenius manifold of charge
  \begin{equation} \label{hat d}
    \hat d = -2\mu_B.
  \end{equation}
\end{prop}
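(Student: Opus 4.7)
The plan is to verify, with the same Euler vector field $E$, the two remaining defining conditions of quasi-homogeneity for $\gfmhe$: namely
\[
  \mcalL_E\hat\eta=(2-\hat d)\hat\eta,\qquad \hat\nabla\hat\nabla E=0.
\]
The condition $\mcalL_E c=c$ is inherited automatically, since by Proposition~\ref{zh-1} the multiplication $c$ and the unity $e$ are unchanged by the Legendre-type transformation; only the metric is replaced by $\hat\eta$.

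The technical heart of both checks is the Leibniz identity
\[
  \mcalL_E(B\cdot X)=(\mcalL_E c)(B,X)+(\mcalL_EB)\cdot X+B\cdot\mcalL_E X=\Bigl(\mu_B+\tfrac{d}{2}\Bigr)B\cdot X+B\cdot[E,X],
\]
which uses only $\mcalL_E c=c$ and $[E,B]=(\mu_B-\tfrac{2-d}{2})B$. Substituting this into
\[
  (\mcalL_E\hat\eta)(X,Y)=\mcalL_E\bigl(\eta(B\cdot X,B\cdot Y)\bigr)-\hat\eta([E,X],Y)-\hat\eta(X,[E,Y])
\]
and using $\mcalL_E\eta=(2-d)\eta$, the cross terms $\eta(B\cdot[E,X],B\cdot Y)=\hat\eta([E,X],Y)$ cancel against the last two summands, leaving
\[
  (\mcalL_E\hat\eta)(X,Y)=\bigl[(2-d)+2(\mu_B+\tfrac{d}{2})\bigr]\hat\eta(X,Y)=(2+2\mu_B)\hat\eta(X,Y),
\]
which already identifies $\hat d=-2\mu_B$.

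For the second condition, I would work in the flat coordinates $\{\hat v^\afa\}$ of $\hat\eta$, in which the Christoffel symbols of $\hat\nabla$ vanish and $\hat\nabla\hat\nabla E=0$ is equivalent to $E$ being affine in $\hat v$. Using $\p_{v^\afa}=B\cdot\p_{\hat v^\afa}$ from \eqref{hat v-afa2}, applying the identity above with $X=\p_{\hat v^\afa}$, and comparing with the known relation $[E,\p_{v^\afa}]=-(\tfrac{2-d}{2}-\mu_\afa)\p_{v^\afa}$, one obtains
\[
  B\cdot[E,\p_{\hat v^\afa}]=(\mu_\afa-1-\mu_B)\p_{v^\afa}=(\mu_\afa-1-\mu_B)\,B\cdot\p_{\hat v^\afa}.
\]
Since multiplication by $B$ is invertible, this gives $[E,\p_{\hat v^\afa}]=(\mu_\afa-1-\mu_B)\p_{\hat v^\afa}$, so $\p_{\hat v^\afa}E^\beta_{\hat v}=(1+\mu_B-\mu_\afa)\delta^\beta_\afa$ and hence $E$ is affine in $\hat v$; reading off the coefficients confirms $\hat\mu_\afa=\mu_\afa$ and is consistent with $\hat d=-2\mu_B$.

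The main obstacle is the careful Lie-derivative bookkeeping for the multiplicative tensor $c$, ensuring that the contributions from $\mcalL_E c$, $\mcalL_E B$, and the commutators $[E,\cdot\,]$ combine in the correct order. Once the identity for $\mcalL_E(B\cdot X)$ is in hand, both the rescaling of $\hat\eta$ and the eigenvector decomposition of $\p_{\hat v^\afa}$ under $\ad_E$ follow by pure linear manipulation.
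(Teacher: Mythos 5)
Your proof is correct and follows essentially the same route as the paper: the decisive step in both is to show that $\pp{\hat v^\afa}$ is an eigenvector of $\ad_E$ with eigenvalue $\mu_\afa-\mu_B-1$, whence $E$ is affine in the $\hat v$-coordinates with $\hat\mu_\afa=\mu_\afa$ and $\frac{2-\hat d}{2}=1+\mu_B$. The one place you genuinely diverge is the rescaling identity $\mcalL_E\hat\eta=(2+2\mu_B)\hat\eta$: the paper simply cites Lemma 2.7 of Strachan--Stedman for this, whereas you derive it directly from the Leibniz expansion of $\mcalL_E(B\cdot X)=\left(\mu_B+\tfrac d2\right)B\cdot X+B\cdot[E,X]$, which makes the argument self-contained. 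A further cosmetic difference: the paper expands $\mcalL_E\bigl(B^{-1}\cdot\pp{v^\afa}\bigr)$ using the separately proved formula \eqref{b^-1 quasi-homog} for $[E,B^{-1}]$, while you expand $\mcalL_E\bigl(B\cdot\pp{\hat v^\afa}\bigr)$ and cancel the invertible multiplication by $B$, thereby avoiding that auxiliary lemma altogether. The only detail you gloss over is the treatment of the inhomogeneous constants $\hat r^\afa$ in the resulting affine expression for $E$ (removed by a coordinate translation whenever $\hat\mu_\afa+\tfrac{\hat d}{2}\neq 1$), but that is a standard normalization remark rather than a gap.
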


\begin{proof}
By using Lemma 2.7 of \cite{Strachan-Stedman} we have
  \[
    \mcalL_E\hat\eta = (2-\hat d)\hat\eta.
  \]
  Therefore, we only need to verify that
  $E$ has local form  \eqref{DDE=0} in a certain system of flat coordinates $\{\hat v^\afa\}$ of $\hat\eta$.
  Let $\{\hat v^\afa\}$ be the system of coordinate defined by \eqref{hat v-afa1}, then from \eqref{hat v-afa2}, \eqref{quasi-homog of c and eta} and \eqref{b^-1 quasi-homog} it follows that
  \begin{align*}
    &\,\left[E,\pp{\hat v^\afa}\right]
  =
    \mcalL_E\left(B^{-1}\cdot\pp{v^\afa}\right) \\
  =&\,
    \left[ E, B^{-1}\right]\cdot\pp{v^\afa}
   +(\mcalL_E c)\left(B^{-1},\pp{v^\afa}\right)
   +B^{-1}\cdot\left[E, \pp{v^\afa}\right] \\
  =&\,(\mu_\afa-\mu_B-1)\pp{\hat v^\afa}
    = \left(\mu_\afa-\frac{2-\hat d}{2}\right)\pp{\hat v^\afa}.
  \end{align*}
  On the other hand, in the coordinate $\hat v^1,\dots,\hat v^n$ the Euler vector field
  $E=\hat E^\afa\pp{\hat v^\afa}$ satisfies the relation
 \[\left[E,\pp{\hat v^\afa}\right]=-\pfrac{\hat E^\beta}{\hat v^\afa}\pp{\hat v^\beta},\]
  therefore we have
\[\pfrac{\hat E^\beta}{\hat v^\afa}
   =\left(\frac{2-\hat d}{2}-\mu_\afa\right)\delta^\beta_\afa,\]
   thus
  \[
    E=\sum_{\afa=1}^{n}
      \left[
        \left(\frac{2-\hat d}{2}-\hat \mu_\afa\right)\hat v^\afa + \hat r^\afa
      \right]\pp{\hat v^\afa}
  \]
  for certain constants $\hat r^\afa$, where $\hat\mu_\afa:=\mu_\afa$.
  If $\hat\mu_\afa+\frac{\hat d}{2}\neq 1$,
 we can kill $\hat r^\afa$ by a certain coordinate translation $\hat v^\afa\mapsto \hat v^\afa + \text{const}$. The proposition is proved.
\end{proof}

\begin{rmk}\label{rmk:hat d}
The relation \eqref{b^-1 quasi-homog} can be rewritten as
\begin{equation}
  [E, \hat B] = \left( \hat\mu_{\hat B}-\frac{2-\hat d}{2} \right)\hat B
\end{equation}
in terms of the new charge $\hat d$ given in \eqref{hat d}
and $\hat{\mu}_{\hat B}:= -\frac d2$.
\end{rmk}

\section{The Legendre-extended Principal Hierarchy}
Given a family of Legendre fields $\mcalB=\Bigset{B_j}{j\in \mcalJ}$, we are to show in this section that one can construct from it a hierarchy of pairwise commuting flows, and that the the Principal Hierarchy \cite{normal-form, GFM1} of a generalized Frobenius manifold can be extended by using such flows.

\subsection{The Legendre flows and the Principal Hierarchy}
\begin{defn}
 Suppose $B=B^\afa\p_\afa$ is a Legendre field
 (not necessarily invertible, quasi-homogeneous) of $M(\eta, c, e)$,
 then the Legendre flow $\pp{t^B}$ generated by $B$ is defined as
 \begin{equation}\label{Legendre flow}
   \pfrac{v^\afa}{t^B} = B^\gamma c^\afa_{\gamma\beta}v^\beta_x,\quad \al=1,\dots,n.
 \end{equation}
\end{defn}

We note that the relation \eqref{LegendreSym2} implies that there locally exists functions $\{v_B^\afa\}_{\afa=1}^n$ such that
\begin{equation}\label{v_B^afa}
B^\gamma c^\afa_{\gamma\beta}=\p_\beta v_B^\afa.
\end{equation}
If $B$ is invertible, then $v^\afa_B$ coincides with the coordinate
$\hat v^\afa$ that is defined in \eqref{hat v-afa1} up to the addition of a constant.
In terms of the notation of $v^\afa_B$, the Legendre flow \eqref{Legendre flow} can be rewritten as
\begin{equation}\label{Legendre flow2}
  \pfrac{v^\afa}{t^B} = (v^\afa_B)_x.
\end{equation}

\begin{lem}
  Let $B_1, B_2$ be any two given Legendre fields of $M(\eta, c,e)$,
  then the associated Legendre flows are commutative, i.e.,
  $\left[\pp{t^{B_1}}, \pp{t^{B_2}}\right] = 0$.
\end{lem}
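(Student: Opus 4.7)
The plan is a direct computation using the rewritten form \eqref{Legendre flow2}, together with the associativity and commutativity of the Frobenius product. The Legendre property of $B_1$ and $B_2$ enters only through the existence of the functions $v^\afa_{B_1}$, $v^\afa_{B_2}$ furnished by \eqref{v_B^afa}; once these are in hand, commutativity of the flows reduces to an algebraic identity in the Frobenius algebra at each point.

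First I would write $\pfrac{v^\afa}{t^{B_i}} = (v^\afa_{B_i})_x$ for $i=1,2$, and then compute
\begin{equation*}
  \pp{t^{B_1}}\pp{t^{B_2}}v^\afa
  = \p_x\!\left(\pp{t^{B_1}}v^\afa_{B_2}\right)
  = \p_x\!\left(\pfrac{v^\afa_{B_2}}{v^\beta}\,\pfrac{v^\beta}{t^{B_1}}\right)
  = \p_x\!\left(B_2^\gamma c^\afa_{\gamma\beta}\, B_1^\delta c^\beta_{\delta\sigma}\, v^\sigma_x\right).
\end{equation*}
Exchanging the roles of $B_1$ and $B_2$ gives the analogous expression with $B_1$ and $B_2$ swapped in the coefficient. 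Thus the commutator $\bigl[\pp{t^{B_1}},\pp{t^{B_2}}\bigr]v^\afa$ vanishes if and only if
\begin{equation*}
  B_2^\gamma B_1^\delta\, c^\afa_{\gamma\beta}c^\beta_{\delta\sigma}
  = B_1^\gamma B_2^\delta\, c^\afa_{\gamma\beta}c^\beta_{\delta\sigma}.
\end{equation*}

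The last step is to observe that both sides are two ways of writing the $\afa$-component of the triple product $B_1\cdot B_2\cdot\pp{v^\sigma}$: the left-hand side is $\bigl(B_2\cdot(B_1\cdot\pp{v^\sigma})\bigr)^\afa$ while the right-hand side is $\bigl(B_1\cdot(B_2\cdot\pp{v^\sigma})\bigr)^\afa$, and these agree by the commutativity and associativity of the Frobenius multiplication at every point of $M$. There is no real obstacle here; the content of the lemma is entirely algebraic and uses neither invertibility nor quasi-homogeneity of the Legendre fields, only the gradient identity \eqref{v_B^afa} that defines $v^\afa_{B_i}$ and the Frobenius algebra axioms.
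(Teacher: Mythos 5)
Your proposal is correct and follows essentially the same route as the paper: both compute $\pp{t^{B_1}}\pp{t^{B_2}}v^\afa = \p_x\bigl(B_2\cdot(B_1\cdot v_x)\bigr)^\afa$ using the potentials $v^\afa_{B_i}$ from \eqref{v_B^afa} and then invoke the commutativity and associativity of the Frobenius multiplication. Your remark that the Legendre property is used only to guarantee the existence of $v^\afa_{B_i}$ is accurate and matches the paper's reasoning.
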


\begin{proof}
  From \eqref{Legendre flow} and \eqref{Legendre flow2} it follows that
  \begin{align*}
    \pp{t^{B_1}}\left(\pp{t^{B_2}}v^\afa\right)
  &=\,
    \pp{t^{B_1}}\left(\p_x v^\afa_{B_2}\right)
  =
    \p_x\left(
      \pfrac{v_{B_2}^\afa}{t^{B_1}}
    \right)
  =
    \p_x\left(
      \pfrac{v_{B_2}^\afa}{v^\beta}\pfrac{v^\beta}{t^{B_1}}
    \right)\\
  &=\,
    \p_x\left(
      B_2^\gamma c^\afa_{\beta\gamma}B_1^\xi c^\beta_{\xi\sigma} v^\sigma_x
    \right)
  =
   \p_x\left(
     B_2\cdot (B_1\cdot v_x)
   \right)^\afa,
  \end{align*}
where $v_x:= v^\afa_x\pp{v^\afa}$.
Therefore the commutativity and associativity of the Frobenius multiplication $c$ imply that
$\left[\pp{t^{B_1}}, \pp{t^{B_2}}\right]v^\afa = 0$, the lemma is proved.
\end{proof}

As a consequence, a family of Legendre fields $\Bigset{B_j}{j\in\mcalJ}$
yields a hierarchy of mutually commutative flows $\Bigset{\pp{t^{B_j}}}{j\in\mcalJ}$.
An important example of such integrable hierarchies is the Principal Hierarchy (see details in \cite{GFM1})
\begin{equation}\label{PH}
  \pfrac{v^\afa}{t^{i,p}}
 =\eta^{\afa\beta}\p_x
  \left(
  \pfrac{\theta_{i,p+1}}{v^\beta}
  \right) ,\quad (i,p)\in\mcalI,
\end{equation}
of a generalized Frobenius manifold
$M(\eta, c,e,E)$,
where the index set
\begin{equation}\label{zh-5}
  \mcalI :=
    \Big(
      \{1,2,...,n\}\times\bbZ_{\geq 0}
    \Big)\cup
    \Big(
      \{0\}\times\bbZ
    \Big).
\end{equation}
The functions $\{\theta_{i,p}\}_{(i,p)\in\mcalI}$ on $M$ satisfy the relations
\begin{align}
  &\theta_{\afa,0}=v_\afa:=\eta_{\afa\beta}v^\beta,\quad
  \grad_\eta\theta_{0,0}=e,\quad \al=1,\dots,n,\\
  \label{def of theta afa p}
 & \theta_{\afa,p}=\sum_{k=0}^{p}(-1)^k
    \pair{\grad_\eta\theta_{0,k+1}}{\grad_\eta\theta_{\afa,p-k}},
    \quad \al=1,\dots,n,\,p\ge 0,\\
\label{d-recursion}
 &\p_\afa\p_\beta\theta_{i,p+1}=c^\gamma_{\afa\beta}\p_\gamma\theta_{i,p},\quad \al, \beta=1,\dots,n,\, (i,p)\in\mcalI.
\end{align}
These functions also satisfy the following quasi-homogeneous conditions
\begin{align}
  \mcalL_E\theta_{\afa,p}
&=\,
  \left(
    p+\mu_\afa + \frac{2-d}{2}
  \right)\theta_{\afa,p}
 +\sum_{s=1}^{p}\theta_{\veps,p-s}(R_s)^\veps_\afa
 +(-1)^pr^\veps_{p+1}\eta_{\veps\afa},  \label{qhomog for theta-1}\\
\mcalL_E\theta_{0,q} &=\,
  \left(q+1-d\right)\theta_{0,q}
 +\sum_{s=1}^{q}
    \theta_{\veps, q-s}r^\veps_s
 +a_{00;q+1}                      \label{qhomog for theta-2}
\end{align}
for $1\leq\afa\leq n$, $p\geq 0$ and $q\in\bbZ$,
where $d$ is the charge of the generalized Frobenius manifold, and
\begin{equation}\label{mu,R,r,c}
\Bigset{(R_s)^\afa_\beta,\, r_s^\afa,\, a_{00; q}}{1\leq\afa,\beta\leq n,\, s\geq 1,\, q\in\bbZ}
\end{equation}
is a certain family of constants which satisfy the relations
\begin{align}
&R_s^{\rmT}\eta = (-1)^{s+1}\eta R_s,\quad
    [\mu, R_s] = sR_s,\label{Rk-1}\\
&r_s^\afa \neq 0\quad\text{only if}\ \mu_\afa + \frac d2 = s, \label{Rk-3}\\
&a_{00; q}\neq 0 \quad\text{only if}\ q=d,\,\text{and}\, q\,\text{is an odd integer}, \label{Rk-4}
\end{align}
and $r_1^\afa$ coincide with the coefficients $r^\afa$ that appear in the expression \eqref{DDE=0} of the Euler vector field $E$.
We also note that the flow $\pp{t^{0,0}}$ defined in \eqref{PH}  is
given by the translation along the spatial variable $x$, i.e.,
\[
  \pfrac{v^\afa}{t^{0,0}} = v^\afa_x,
\]
so in what follows we will identify the time variable $t^{0,0}$ with the spatial variable $x$, i.e.,
\begin{equation}\label{t00=x}
  t^{0,0}=x.
\end{equation}

Introduce the gradient fields
\begin{equation} \label{xi=grad theta}
  \xi_{i,p}:=\grad_\eta\theta_{i,p}
  = \eta^{\afa\beta}\pfrac{\theta_{i,p}}{v^\beta}
    \pp{v^\afa}, \quad (i,p)\in\mcalI,
\end{equation}
then we have
\begin{equation}\label{xi-initial}
  \xi_{\afa,0}=\pp{v^\afa},\quad \xi_{0,0}=e.
\end{equation}
Note that the recursion relations \eqref{d-recursion} imply that
\begin{equation}\label{d-recur-2}
  \nabla_X\xi_{i,p}=X\cdot\xi_{i,p-1}
\end{equation}
for all $(i,p)\in\mcalI$ and $X\in\Vect(M)$,
and \eqref{qhomog for theta-1}, \eqref{qhomog for theta-2} imply that
\begin{align}
  [E,\xi_{\afa,p}] &=
    \left(
      p+\mu_\afa-\frac {2-d}2
    \right)\xi_{\afa,p}
 +\sum_{s=1}^{p}
    (R_s)_\afa^\veps\xi_{\veps, p-s},
\label{qhomog for xi-1}\\
  [E,\xi_{0,q}] &= (q-1)\xi_{0,q} + \sum_{s=1}^{q}r^\veps_s\xi_{\veps, q-s}
\label{qhomog for xi-2}
\end{align}
for all $1\leq\afa\leq n$, $p\geq 0$ and $q\in\bbZ$.
The functions $\theta_{\al,p}$ that are constructed in \cite{GFM1} also satisfy the following normalization conditions:
\begin{equation}\label{normalization}
  \left\langle
    \xi_\afa(-z), \xi_\beta(z)
  \right\rangle = \eta_{\afa\beta},\quad \al,\beta=1,\dots,n,
\end{equation}
where
\[
  \xi_\afa(z):=\sum_{p\geq 0}\xi_{\afa,p}z^p,\qquad \afa=1,2,...,n.
\]

\begin{prop}
The vector fields $\xi_{i,p}$ with $(i,p)\in\mcalI$ are Legendre fields of the generalized Frobenius manifold $\gfme$.
\end{prop}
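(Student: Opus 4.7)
The proposition asks us to verify the Legendre condition \eqref{Legendre sym} for each gradient field $\xi_{i,p}$ with $(i,p)\in\mcalI$. The natural tool is the recursion \eqref{d-recur-2}, which says $\nabla_X \xi_{i,p} = X\cdot \xi_{i,p-1}$ whenever the pair $(i,p)$ and $(i,p-1)$ both lie in $\mcalI$. The plan is to plug this recursion into the Legendre condition and collapse it using associativity and commutativity of the Frobenius product $c$.

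Concretely, I would argue as follows. Fix $(i,p)\in\mcalI$ with $p\geq 1$ in the first index range, or any $p\in\bbZ$ in the $i=0$ range, so that $(i,p-1)\in\mcalI$ as well. For arbitrary $X,Y\in\Vect(M)$, apply \eqref{d-recur-2} to compute
\[
  Y\cdot\nabla_X\xi_{i,p}
  = Y\cdot(X\cdot\xi_{i,p-1})
  = X\cdot(Y\cdot\xi_{i,p-1})
  = X\cdot\nabla_Y\xi_{i,p},
\]
where the middle equality uses the commutativity and associativity of $c$ on fibers of $TM$. This is exactly \eqref{Legendre sym}, so $\xi_{i,p}$ is a Legendre field for every such pair.

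It remains to handle the anchor cases $(i,p)=(\afa,0)$ with $1\leq\afa\leq n$, to which the recursion cannot be iterated downward. But by \eqref{xi-initial} we have $\xi_{\afa,0}=\pp{v^\afa}$, which is flat, so $\nabla\xi_{\afa,0}=0$ and the Legendre condition holds trivially; this is just the first example of a Legendre field listed after \eqref{LegendreSym2}. The remaining anchor $\xi_{0,0}=e$ is the unit vector field, which is the second example of a Legendre field. Combining the two cases exhausts $\mcalI$, proving the proposition.

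I expect no real obstacle: the content is essentially a one-line manipulation exploiting the recursion \eqref{d-recur-2} together with the Frobenius algebra axioms, plus the observation that the base-level $\xi_{\afa,0}$ and $\xi_{0,0}$ fall under the two elementary examples of Legendre fields already recorded in the paper. The only point to mention for cleanliness is that, because the $i=0$ family is indexed by all of $\bbZ$, the recursion is bi-infinite in that direction and no separate base case is needed for $i=0$.
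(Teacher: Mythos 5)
Your proposal is correct and follows essentially the same route as the paper: apply the recursion \eqref{d-recur-2} and use commutativity and associativity of the product $c$ to get $X\cdot\nabla_Y\xi_{i,p}=X\cdot Y\cdot\xi_{i,p-1}=Y\cdot\nabla_X\xi_{i,p}$. The only cosmetic difference is that the paper absorbs the base case by setting $\xi_{\afa,-1}:=0$ (which is consistent with the flatness of $\xi_{\afa,0}=\pp{v^\afa}$), whereas you treat it as a separate anchor case; the two are equivalent.
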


\begin{proof}
For each $X,Y\in\Vect(M)$, from \eqref{d-recur-2} it follows that
\[
  X\cdot\nabla_Y\xi_{i,p} = X\cdot Y\cdot \xi_{i,p-1}
 =Y\cdot\nabla_X\xi_{i,p},
\]
here $\xi_{\afa,-1}:=0$ for $1\leq \afa\leq n$.
Hence $B:=\xi_{i,p}$ satisfies \eqref{Legendre sym}, the proposition is proved.
\end{proof}

As a Legendre field, $\xi_{i,p}$ may not be quasi-homogeneous in general,
because the constants $(R_s)_\afa^\veps,\, r_s^\veps$ in
\eqref{qhomog for xi-1}, \eqref{qhomog for xi-2} may not vanish.

It is easy to see that the flow $\pp{t^{i,p}}$ in the Principal Hierarchy \eqref{PH}
coincides with the Legendre flow $\pp{t^{\xi_{i,p}}}$ that is defined in \eqref{Legendre flow} via the vector field $\xi_{i,p}$.
In other words, the Principal Hierarchy \eqref{PH}
is a hierarchy of Legendre flows generated by $\Bigset{\xi_{i,p}}{(i,p)\in\mcalI}$.

\begin{rmk}
 The $n\times n$ matrices
  \begin{equation}\label{eta mu Rs}
    \eta:=(\eta_{\afa\beta}),\quad
    \mu:=\diag(\mu_1,\mu_2,...,\mu_n),\quad
    R_s:= \left((R_s)_\afa^\beta\right),\quad s\geq 1
  \end{equation}
form the monodromy data \cite{2dTFT, Painleve, normal-form, GFM1}
  of the generalized Frobenius manifold $\gfme$ at $z=0$.
Let us introduce the $n\times 1$ column matrices
  \begin{equation}
    \bfr_s :=
      \begin{cases}
    (r_s^1, r_s^2,...,r_s^n)^\rmT, & s>0, \\
    \boldsymbol 0, & s\leq 0,
    \end{cases}
  \end{equation}
  and the $(n+2)\times(n+2)$ matrices
  \begin{equation}\label{n+2 eta mu R}
  \widetilde{\eta}:=
    \begin{pmatrix}
        &   & 1 \\
        & \eta &  \\
      1 &   &
    \end{pmatrix},\quad
    \widetilde\mu:=
    \begin{pmatrix}
      \mu_0 &  &  \\
       & \mu &  \\
       &  & -\mu_0
    \end{pmatrix},\quad
  \widetilde{R}_s:=
    \begin{pmatrix}
      0 & \boldsymbol 0 & 0 \\
      \bfr_s & R_s & \boldsymbol 0 \\
      a_{00;s} & \bfr_s^\dag & 0
    \end{pmatrix},
  \end{equation}
here we denote $R_s=0$ when $s\le 0$ and
\[
  \mu_0:=-\frac d2,\quad
  \bfr_s^\dag:=(-1)^{s+1}\bfr^\rmT_s\eta.
\]
Then the relations \eqref{Rk-1}--\eqref{Rk-4} can be rewritten as
\[
    \widetilde R_s^{\rmT}\widetilde\eta = (-1)^{s+1}\widetilde\eta \widetilde R_s,\quad
    [\widetilde\mu, \widetilde R_s] = s\widetilde R_s,\quad s\in\mathbb{Z},
\]
and the matrices $\widetilde\eta,\widetilde\mu, \widetilde R_s\,(s\ge 1)$ form the monodromy data of
an $(n+2)$-dimensional Frobenius manifold $\widetilde{M}\cong\bbC\times M\times\bbC$ that is associated with $M$ at $z=0$.
The explicit construction of the Frobenius manifold structure on $\widetilde{M}$ can be found in \cite{GFM1},
see also in \cite{n+2 dim}.
\end{rmk}

\subsection{The Legendre-extended Principal Hierarchy}
Given a Legendre field of a generalized Frobenius manifold, we are to  construct in this subsection a family of Legendre fields and extend the Principal Hierarchy by adding to it the associated Legendre flows.
To this end, we first present
the following lemma which is a generalized version of Proposition 2.6 given in \cite{Strachan-Stedman}.

\begin{lem}
  Suppose $B$ is a Legendre field on a generalized Frobenius manifold $\gfm$, then we have
  \begin{enumerate}
    \item $\nabla_eB$ is also a Legendre field.
    \item There locally exists a Legendre field $B_+$ satisfying the relation
    \begin{equation}\label{def B+}
      \nabla_XB_+=X\cdot B,\quad \forall X\in\Vect(M).
    \end{equation}
     \end{enumerate}
\end{lem}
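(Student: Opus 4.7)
Proof plan. Throughout I work in flat coordinates $\{v^\afa\}$ of $\eta$, where $c^\afa_{\beta\gamma}=\eta^{\afa\delta}\partial_\delta\partial_\beta\partial_\gamma F$ and $\nabla$ is represented by ordinary partial differentiation. The central observation is that the third‐order structure constants satisfy $\partial_\sigma c^\afa_{\beta\gamma}=\partial_\beta c^\afa_{\sigma\gamma}$ (fully symmetric in the lower indices after raising), which is the coordinate form of the symmetry of $\nabla\tilde c$.

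For part (1), set $C:=\nabla_eB$. The identity \eqref{basic prop Y=e} applied with $X=\partial_\beta$ gives $\partial_\beta B^\afa = c^\afa_{\beta\gamma}C^\gamma$. Differentiating in $v^\sigma$ and equating $\partial_\sigma\partial_\beta B^\afa=\partial_\beta\partial_\sigma B^\afa$ yields
\[
(\partial_\sigma c^\afa_{\beta\gamma})C^\gamma+c^\afa_{\beta\gamma}\partial_\sigma C^\gamma=(\partial_\beta c^\afa_{\sigma\gamma})C^\gamma+c^\afa_{\sigma\gamma}\partial_\beta C^\gamma.
\]
The WDVV symmetry $\partial_\sigma c^\afa_{\beta\gamma}=\partial_\beta c^\afa_{\sigma\gamma}$ cancels the first terms on each side, leaving exactly the Legendre condition \eqref{LegendreSym1} for $C$.

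For part (2), I would write the desired relation $\nabla_XB_+=X\cdot B$ in flat coordinates as the linear PDE system $\partial_\afa B_+^\beta=B^\gamma c^\beta_{\afa\gamma}$. The Poincar\'e compatibility condition $\partial_\sigma\partial_\afa B_+^\beta=\partial_\afa\partial_\sigma B_+^\beta$ becomes
\[
\partial_\sigma\bigl(B^\gamma c^\beta_{\afa\gamma}\bigr)=\partial_\afa\bigl(B^\gamma c^\beta_{\sigma\gamma}\bigr),
\]
which is precisely the equivalent form \eqref{LegendreSym2} of the Legendre condition on $B$. Hence the system is locally integrable and a solution $B_+$ exists on any simply connected chart.

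It remains to show that $B_+$ so constructed is itself a Legendre field. Substituting $\partial_\sigma B_+^\gamma=B^\mu c^\gamma_{\sigma\mu}$ into \eqref{LegendreSym1} for $B_+$ reduces the required identity to $c^\afa_{\beta\gamma}c^\gamma_{\sigma\mu}B^\mu=c^\afa_{\sigma\gamma}c^\gamma_{\beta\mu}B^\mu$, which follows from the associativity of the Frobenius multiplication (i.e.\ $c^\afa_{\beta\gamma}c^\gamma_{\sigma\mu}=c^\afa_{\sigma\gamma}c^\gamma_{\beta\mu}$). No step looks like a real obstacle; the only thing to keep track of is that both the WDVV symmetry of the $c$‑tensor and associativity are used at different points (the former to obtain integrability in (1), the latter to close the Legendre property in (2)), which is a nice mirror of the two defining axioms of the generalized Frobenius manifold.
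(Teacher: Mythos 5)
Your proof is correct. Part (2) is essentially the paper's argument: local existence of $B_+$ from the exactness condition \eqref{LegendreSym2} via the Poincar\'e lemma, followed by the observation that $\nabla_XB_+=X\cdot B$ forces the Legendre property through associativity (the paper phrases this invariantly as $X\cdot\nabla_YB_+=X\cdot Y\cdot B=Y\cdot\nabla_XB_+$, which is the same cancellation you perform in components). For part (1) the ingredients are also the same — flatness of $\nabla$ and the total symmetry of $\nabla\tilde c$ — but the presentations differ: you transcribe everything into flat coordinates, where flatness becomes commutativity of partials and the symmetry of $\nabla\tilde c$ becomes the total symmetry of $\p_\sigma\p_\delta\p_\beta\p_\gamma F$, and you verify \eqref{LegendreSym1} for $C=\nabla_eB$ directly by cancelling the $(\p c)\,C$ terms. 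The paper instead runs an invariant computation using $(\nabla_ec)(X,Y)=-(\nabla_Xe)\cdot Y$ and the curvature-freeness of $\nabla$ to establish the stronger intermediate identity $\nabla_Y(\nabla_eB)=Y\cdot\nabla_e\nabla_eB$, which not only gives the Legendre property of $\nabla_eB$ but also identifies $\nabla_e\nabla_eB$ as playing for $\nabla_eB$ the role that $\nabla_eB$ plays for $B$ in \eqref{basic prop Y=e}. Your version is shorter and more elementary; the paper's yields slightly more structural information. Both are complete.
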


\begin{proof}
We note that the $(1,3)$-tensor $\nabla c$ has the symmetry property
\[
  (\nabla c)(X_1,X_2,X_3)
  = (\nabla c)(X_{\sigma (1)},X_{\sigma (2)},X_{\sigma (3)}),\quad
  \forall X_i\in\Vect(M),\, \forall \sigma\in S_3,
\]
where $ (\nabla c)(X_1,X_2,X_3)=\nabla_{X_1}c(X_2,X_3)$. In particular,
\begin{align}
   & (\nabla_e c)(X,Y)=\,
      (\nabla_X c)(e,Y)
    =\nabla_X(e\cdot Y) - (\nabla_Xe)\cdot Y - e\cdot\nabla_XY\notag\\
  =&
    \nabla_XY-(\nabla_X e)\cdot Y - \nabla_XY
  =
    -(\nabla_Xe)\cdot Y \label{nabla e bullet}
  \end{align}
holds true for all $X,Y\in\Vect(M)$.
Thus, for a given Legendre field $B$, by using the flatness of the Levi-Civita connection
$\nabla$ of $\eta$
and by using \eqref{basic prop Y=e}, \eqref{nabla e bullet}, we obtain
\begin{align*}
&\nabla_Y\nabla_eB
=
    \nabla_e\nabla_YB + \nabla_{[Y,e]}B \\
=&
    \nabla_e(Y\cdot\nabla_eB) + [Y,e]\cdot\nabla_eB
\\=&
  \nabla_eY\cdot\nabla_eB
 +(\nabla_e c)(Y, \nabla_eB)
 +Y\cdot\nabla_e\nabla_eB
 +[Y,e]\cdot\nabla_eB  \\
=&
  \left(
    \nabla_eY-\nabla_Ye+[Y,e]
  \right)\cdot\nabla_eB
 +Y\cdot\nabla_e\nabla_eB \\
=&
  Y\cdot \nabla_e\nabla_eB,\quad \forall Y\in\Vect(M).
\end{align*}
Therefore we have
\[
  X\cdot\nabla_Y(\nabla_eB) = X\cdot Y\cdot \nabla_e\nabla_eB
 = Y\cdot\nabla_X(\nabla_eB),
\]
which implies that $\nabla_eB$ is a Legendre field.

Suppose $B_+=v_B^\al\p_\al$ is a solution to the system of equations \eqref{v_B^afa},
then we have
\[X\cdot\nabla_e B_+=X\cdot(e\cdot B)=X\cdot B=\nabla_X B_+,\]
so $B_+$ is also a Legendre field.
The lemma is proved.
\end{proof}

Taking $X=e$ in \eqref{def B+}, we obtain
\[
  \nabla_eB_+=B
\]
for the solution $B_+$ to \eqref{def B+}.
We also note that such a $B_+$ is not unique, since for an arbitrary flat vector field $X$ (i.e. $\nabla X=0$,
or locally $X=X^\afa\p_\afa$ for constants $X^\afa$), $B_++X$ is also a solution to \eqref{def B+}.

Let $B$ be a Legendre field of $\gfm$,
then from the above lemma we know
the existence of a family of Legendre fields $\{\xi_{B,q}\}_{q\in\bbZ}$ such that
\begin{align}
 & \xi_{B,0} =\, B, \label{xiB0=B}\\
 & \nabla_X\xi_{B,q+1} =\, X\cdot \xi_{B,q}  \label{B+ recursion}
\end{align}
for all $q\in\bbZ$ and $X\in\Vect(M)$. From Remark \ref{rmk:grad field}
we also know that
there exists a family of functions $\{\theta_{B,q}\}_{q\in\bbZ}$ such that
\begin{equation}\label{xiBq-grad}
  \xi_{B,q} = \grad_\eta\theta_{B,q},\quad q\in\bbZ.
\end{equation}
Then in the flat coordinates $v^1,\dots, v^n$ the relation \eqref{B+ recursion} can be rewritten in the form
\begin{equation}
  \p_\afa\p_\beta\theta_{B,q+1} = c^\gamma_{\afa\beta}\p_\gamma\theta_{B,q},\quad \al, \beta=1,\dots,n,\, q\in\mathbb{Z}.
\end{equation}

Now let us consider the quasi-homogeneous case. Suppose $B$ be a quasi-homogeneous Legendre field of the generalized Frobenius manifold $\gfme$ of charge $d$,
and $\mu_B$ be the parameter that is given by \eqref{quasi-homog b},
then the above-mentioned vector fields $\{\xi_{B,q}\}_{q\in\bbZ}$ can be specified by the following proposition.
\begin{prop}
Let $B$ be a quasi-homogeneous Legendre field of  the generalized Frobenius manifold $\gfme$, then
there exists a family of Legendre fields $\{\xi_{B,q}\}_{q\in\bbZ}$ satisfying the relations \eqref{xiB0=B}, \eqref{B+ recursion},
and the following additional equations:
\begin{equation}\label{qhomog-xiBq}
  [E,\xi_{B,q}] =
  \left(
    q+\mu_B-\frac {2-d}2
  \right)\xi_{B,q}
 +\sum_{s=1}^{q}
   \xi_{\afa, q-s}r_{B;s}^\afa,
\end{equation}
where $\Bigset{r_{B;s}^\afa}{s\geq 1,\, 1\leq \afa\leq n}$ is a set of constants satisfying the condition
\begin{equation}\label{kill rBs}
  r_{B;s}^\afa\neq 0 \quad\textrm{only if}\quad \mu_\afa-\mu_B=s.
\end{equation}
\end{prop}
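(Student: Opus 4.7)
The plan is to build the family $\{\xi_{B,q}\}_{q\in\bbZ}$ as Legendre fields satisfying \eqref{xiB0=B} and \eqref{B+ recursion}, and then to exploit the flat-field ambiguity at each step so that \eqref{qhomog-xiBq} and the resonance constraint \eqref{kill rBs} hold. For $q\ge 0$, I would iterate part (2) of the preceding lemma starting from $\xi_{B,0}=B$, choosing at each step any local solution $\xi_{B,q+1}$ of $\nabla_X\xi_{B,q+1}=X\cdot\xi_{B,q}$; the lemma guarantees it is a Legendre field. For $q\le 0$, set $\xi_{B,q-1}:=\nabla_e\xi_{B,q}$ inductively; part (1) of the lemma makes it a Legendre field, and \eqref{basic prop Y=e} applied to the Legendre field $\xi_{B,q}$ yields $\nabla_X\xi_{B,q}=X\cdot\nabla_e\xi_{B,q}=X\cdot\xi_{B,q-1}$, so \eqref{B+ recursion} extends to all $q\in\bbZ$.

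Next I would prove \eqref{qhomog-xiBq} by induction from the base case $q=0$, which is exactly the hypothesis \eqref{quasi-homog b}. The essential tool is that $\nabla\nabla E=0$ makes $E$ an affine vector field on the flat manifold $(M,\nabla)$, producing the commutator identity
\begin{equation*}
  [E,\nabla_X Y] = \nabla_{[E,X]}Y + \nabla_X[E,Y],\qquad \forall\,X,Y\in\Vect(M).
\end{equation*}
Applying this with $Y=\xi_{B,q+1}$ together with \eqref{quasi-homog of c and eta} yields, after a short manipulation,
\begin{equation*}
  \nabla_X[E,\xi_{B,q+1}] = X\cdot\xi_{B,q} + X\cdot[E,\xi_{B,q}],
\end{equation*}
and the inductive hypothesis recasts the right-hand side as $\nabla_X$ of $\left(q+1+\mu_B-\frac{2-d}{2}\right)\xi_{B,q+1}+\sum_{s=1}^{q}r_{B;s}^\afa\xi_{\afa,q+1-s}$. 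Hence the difference between $[E,\xi_{B,q+1}]$ and this expression has vanishing covariant derivative, so it is a flat vector field of the form $C^\afa\xi_{\afa,0}$. For the downward induction on negative $q$, the same identity with $X=e$ combined with $[E,e]=-e$ reduces the formula for $\xi_{B,q}$ to the analogous formula for $\xi_{B,q-1}=\nabla_e\xi_{B,q}$; the $s=q$ term drops automatically because $\xi_{\afa,-1}=\nabla_e\p_\afa=0$, and no new constants appear.

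The main obstacle, and the content of the constraint \eqref{kill rBs}, is to convert the residual flat field $C^\afa\xi_{\afa,0}$ into the desired resonant form. Here I would exploit the nonuniqueness of $\xi_{B,q+1}$: replacing $\xi_{B,q+1}$ by $\xi_{B,q+1}+W^\afa\p_\afa$ with constants $W^\afa$ preserves \eqref{B+ recursion}, while the identity $[E,\p_\afa]=\left(\mu_\afa-\frac{2-d}{2}\right)\p_\afa$ shifts $C^\afa$ by $-(q+1+\mu_B-\mu_\afa)W^\afa$. Consequently each component $C^\afa$ with $\mu_\afa-\mu_B\ne q+1$ can be killed by an appropriate choice of $W^\afa$, while in the resonant case $\mu_\afa-\mu_B=q+1$ the shift vanishes and one simply sets $r_{B;q+1}^\afa:=C^\afa$. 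This completes the inductive step and yields \eqref{qhomog-xiBq} together with \eqref{kill rBs}.
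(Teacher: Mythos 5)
Your proposal is correct, and it actually supplies the argument that the paper itself omits: the paper's proof merely checks the special cases $B=\pp{v^\afa}$ and $B=e$ and then refers the generic case to \cite{GFM1}. Your construction is the standard one and all the key steps check out. The upward recursion via part (2) of the lemma and the downward recursion $\xi_{B,q-1}:=\nabla_e\xi_{B,q}$ (using \eqref{basic prop Y=e} to recover \eqref{B+ recursion} for negative $q$) are both valid; the identity $[E,\nabla_XY]=\nabla_{[E,X]}Y+\nabla_X[E,Y]$ is exactly the statement $\mcalL_E\nabla=0$, which for a flat torsion-free connection is equivalent to the hypothesis $\nabla\nabla E=0$; combining it with $\mcalL_Ec=c$ and the recursion does give $\nabla_X[E,\xi_{B,q+1}]=X\cdot\xi_{B,q}+X\cdot[E,\xi_{B,q}]$, so the discrepancy is $\nabla$-flat and hence of the form $C^\afa\p_\afa=C^\afa\xi_{\afa,0}$. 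Finally, since $[E,\p_\afa]=(\mu_\afa-\tfrac{2-d}{2})\p_\afa$ in the coordinates \eqref{DDE=0}, the shift $\xi_{B,q+1}\mapsto\xi_{B,q+1}+W^\afa\p_\afa$ changes $C^\afa$ by $(\mu_\afa-\mu_B-q-1)W^\afa$, which kills every non-resonant component and forces \eqref{kill rBs} in the resonant ones; adding a flat field preserves both \eqref{B+ recursion} and the (linear) Legendre condition, so the adjustment is harmless. The only cosmetic remark is that the observation about the $s=q$ term dropping in the downward induction is not needed, since the sum in \eqref{qhomog-xiBq} is already empty for $q\le 0$ by the paper's convention.
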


\begin{proof}
  We first consider special cases.
  If $B=\pp{v^\afa}$, then $\xi_{B,q},\, r_{B;s}^\beta$
  coincide with $\xi_{\afa,q},\, (R_s)^\beta_\afa$ in \eqref{qhomog for theta-1}--\eqref{qhomog for xi-2},
  here $\xi_{\afa,q}:=0$ if $q<0$;
  in the cases when $B=e$, the corresponding $\xi_{B,q},\,r_{B;s}^\afa$
  coincide with $\xi_{0,q},\, r_s^\afa$.

  For a generic quasi-homogeneous Legendre field $B$,
  the proof of the existence of $\xi_{B,q},\, r_{B;s}^\afa$
  is similar to that of  the case $B=e$, the details of which can be found in \cite{GFM1} and we omit here. The proposition is proved.
\end{proof}

\begin{rmk}
The equation \eqref{qhomog-xiBq} implies that the functions $\theta_{B,q}$ that are defined by \eqref{xiBq-grad}
satisfy the equations
  \begin{equation}\label{qhomog-thetaBq}
    \mcalL_E\theta_{B,q} =
    \left(q+\mu_B+\frac{2-d}{2}\right)\theta_{B,q}
   +\sum_{s=1}^{q}
     \theta_{\afa,q-s}r^\afa_{B;s}
   +a_{10;q+1},
  \end{equation}
where $a_{10;q}$ are some constants.
We can adjust the functions $\theta_{B,q}$, if needed,
by adding to them certain constant terms so that the constants $a_{10;q}$ satisfy the condition
\begin{equation}\label{a01-kill}
  a_{10;q}\neq 0 \quad\text{only if}\quad \mu_B-\frac d2+q = 0.
\end{equation}
\end{rmk}

We introduce the $n\times 1$-matrices
\begin{equation}\label{bfrBs}
  \bfr_{B;s}:= (r_{B;s}^1, r_{B;s}^2,...,r_{B;s}^n)^\rmT,\quad s\ge 1.
\end{equation}

\begin{defn}
  Let $B$ be a quasi-homogeneous Legendre field of a generalized Frobenius manifold $\gfme$
  of charge $d$.
  \begin{enumerate}
  \item
  A family of vector fields $\Xi_B:=\Bigset{\xi_{\afa,p},\,\xi_{0,q},\,\xi_{B,q}}{1\leq\afa\leq n,\, p\geq 0,\, q\in\bbZ}$
  satisfying the relations \eqref{xi-initial}--\eqref{normalization} and \eqref{xiB0=B}--\eqref{qhomog-xiBq}
  is called a \emph{Legendre-extended calibration} of $(M, B)$.
\item The family of constants $\left\{\mu, \mu_0:=-\frac d2, \mu_B, R_s, \bfr_s, \bfr_{B;s}\mid s\ge 1\right\}$
that are given in \eqref{quasi-homog b},
  \eqref{qhomog for xi-1}--\eqref{qhomog for xi-2},
  \eqref{eta mu Rs},
  \eqref{qhomog-xiBq}, \eqref{bfrBs}
  satisfying the relations
  \eqref{Rk-1}--\eqref{Rk-3} and \eqref{kill rBs}
  is called the \emph{basic data} for the Legendre-extended calibration
  $\Xi_B$.
  \end{enumerate}
\end{defn}

\begin{defn}
  Let $B$ is a quasi-homogeneous Legendre field of a generalized Frobenius manifold $\gfme$
  of charge $d$,
  and $\Xi_{B}$ is a Legendre-extended calibration of $(M,B)$,
  then the \emph{Legendre-extended Principal Hierarchy} of $\gfme$
  with respect to $\Xi_B$ is given by the following family of evolutionary PDEs:
  \begin{align} \label{extended B-PH-1}
  \pfrac{v}{t^{\afa,p}} = \xi_{\afa,p}\cdot v_x,\quad
  \pfrac{v}{t^{0,q}} = \xi_{0,q}\cdot v_x,\quad
  \pfrac{v}{t^{B,q}} = \xi_{B,q}\cdot v_x
\end{align}
for $1\leq\afa\leq n$, $p\geq 0$ and $q\in\bbZ$, here we use the notations
\[v:=v^\afa\pp{v^\afa},\quad v_x:=v^\afa_x\pp{v^\afa}.\]
\end{defn}

The above hierarchy can be written in the form
  \begin{align}\label{extended B-PH-2}
    \pfrac{v^\afa}{t^{\beta,p}} =\eta^{\afa\gamma}\p_x\p_{\gamma}\theta_{\beta,p+1}, \quad
    \pfrac{v^\afa}{t^{0,q}} =\eta^{\afa\beta}\p_x\p_{\beta}\theta_{0,q+1},  \quad
    \pfrac{v^\afa}{t^{B,q}} =\eta^{\afa\beta}\p_x\p_{\beta}\theta_{B,q+1}
  \end{align}
for $1\leq\afa\leq n$, $p\geq 0$ and $q\in\bbZ$,
where the functions $\theta_{i,p}, \theta_{B,q}$ associated with $\Xi_B$
are defined as in \eqref{d-recursion}--\eqref{qhomog for theta-2}
and \eqref{xiBq-grad}--\eqref{qhomog-thetaBq}.

Let us introduce the index set $\mcalI_B$ as follows
\begin{equation}\label{index set IB}
  \mcalI_B = \Big(\{1,\dots,n\}\times\bbZ_{\geq 0}\Big)\cup\Big(\{0, B\}\times\bbZ\Big),
\end{equation}
then the Legendre-extended Principal Hierarchy \eqref{extended B-PH-2} can be rewritten as
\[
  \pfrac{v^\afa}{t^{i ,p}} =\eta^{\afa\gamma}\p_x\p_{\gamma}\theta_{i,p+1},\qquad
  1\leq\afa\leq n,\,
  (i,p)\in\mcalI_B.
\]

In the above convention, we have
\begin{equation}\label{Nabla e-xi}
  \nabla_e\xi_{i,p} = \xi_{i,p-1},\quad \text{for all}\, (i,p)\in\mcalI_B,
\end{equation}
here $\xi_{\afa,p}:=0$ if $\afa=1,2,\dots n$ and $p<0$.
Moreover, the formulae \eqref{qhomog for xi-1}--\eqref{qhomog for xi-2} and \eqref{qhomog-xiBq} can be written uniformly in the form
\begin{equation}\label{LE-xi}
  [E,\xi_{i,p}] =
  \left(
    p+\mu_i-\frac{2-d}{2}
  \right)\xi_{i,p}
 +\sum_{s=1}^{p}
  \left(\widetilde{R}_{B;s}\right)_i^\afa\xi_{\afa,p-s},
\end{equation}
where
\begin{equation}\label{240610-2125}
  \left(\widetilde{R}_{B;s}\right)_i^\afa :=
  \begin{cases}
    (R_s)_\beta^\afa & \text{if}\,\, i=\beta\in\{1,2,...,n\}, \\
    r_s^\afa & \text{if}\,\, i=0, \\
    r_{B;s}^\afa & \text{if}\,\, i=B.
  \end{cases}
\end{equation}
Here and in what follows, Greek letters $\afa,\beta,\gamma,\dots$ are always assumed to be in $\{1,2,\dots,n\}$,
and the summation $\sum_{s=1}^{p}$ vanishes when $p\leq 0$.
We remark that $\left(\widetilde{R}_{B;s}\right)_i^\afa$ are entries of a certain
matrix $\widetilde{R}_{B;s}$ of size $(n+4)\times(n+4)$ which will be defined later, see Remark \ref{rmk:240610-notation}.

\begin{rmk}
As for a Frobenius manifold $M$ with flat-unity \cite{2dTFT,normal-form},
there is a bihamiltonian structure on the jet space $J^\infty(M)$
given by the compatible Hamiltonian operators $\mathcal{P}_a=(\mathcal{P}_a^{\al\beta})$, $a=1,2$ with
\begin{equation}\label{Bihamiltonian structure}
\mathcal{P}_1^{\al\beta}=\eta^{\afa\beta}\p_x,\quad
  \mathcal{P}_2^{\al\beta} = g^{\afa\beta}\p_x+\Gamma^{\afa\beta}_\gamma v^\gamma_x,
\end{equation}
where the contravariant metric $g^{\afa\beta}=E^\veps c_\veps^{\afa\beta}$ is called the intersection form of $M$,
and $\Gamma^{\afa\beta}_\gamma=\left(\frac12-\mu_\beta\right)c^{\afa\beta}_\gamma$
are the contravariant coefficients of the Levi-Civita connection of $(g^{\afa\beta})$,
see details in \cite{2dTFT,normal-form} or \cite{GFM1}.
For an invertible quasi-homogeneous Legendre field $B$ on $M$,
the Legendre-extended Principal Hierarchy \eqref{extended B-PH-2}
is a bihamiltonian hierarchy with respect to this bihamiltonian structure,
i.e., its flows are Hamiltonian systems of the form
\[\frac{\p v^\al}{\p t^{j,q}}=\mathcal{P}_1^{\al\gamma}\frac{\delta H_{j,q}}{\delta v^\gamma},\quad \al=1,\dots,n,\, (j,q)\in\mcalI_B,\]
and they satisfy the bihamiltonian recursion relations
\begin{align*}
\mathcal{P}_2^{\al\gamma}\frac{\delta H_{i,q-1}}{\delta v^\gamma}&=
\left(q+\frac12+\mu_i\right)\mathcal{P}_1^{\al\gamma}\frac{\delta H_{i,q}}{\delta v^\gamma}
+\sum_{s=1}^q (\widetilde R_{B;s})^\veps_i \mathcal{P}_1^{\al\gamma}\frac{\delta H_{\veps,q-s}}{\delta v^\gamma}
\end{align*}
for $\al=1,\dots, n$ and $(i,q)\in\mcalI_B$,
where the Hamiltonians $H_{i,q}$ for all $(i,q)\in\mcalI_B$ are given by
\[H_{i,q}=\int \theta_{i,q+1}(v(x)) \td x.\]
\end{rmk}

\subsection{Linear reciprocal transformations}
Let $B$ be an invertible quasi-homogeneous Legendre field of a generalized Frobenius manifold
$\gfme$ of charge $d$, which will also be denoted by $M$ in what follows.
From Propositions \ref{zh-1}, \ref{prop:linearity of new E}
we know that the Legendre transformation that is induced by
$B$ transforms $M$ to a new generalized Frobenius manifold $\gfmhe$
of charge $\hat d = -2\mu_B$ which will also be denoted by $\hat M$,
and
\[\hat B:=B^{-1}\]
is an invertible quasi-homogeneous Legendre field of $\hat M$ that transforms
$\hat M$ back to $M$.
We emphasize that $M$ and $\hat M$ share the same multiplication $c$, unity $e$ and Euler field $E$.

Now we study the relationship between the
Legendre-extended calibrations of $(M, B)$ and $(\hat M, \hat B)$.
We first fix systems of flat coordinates $\{v^\afa\}$ and $\{\hat v^\afa\}$ of the metrics $\eta$ and $\hat\eta$ respectively,
which are related via \eqref{hat v-afa1}--\eqref{hat v-afa2}.

\begin{thm}
Let $\Xi_B=\Bigset{\xi_{i,p}}{(i,p)\in\mcalI_B}$
be a Legendre-extended calibration of $(M, B)$
with basic data
$\left\{\mu,\mu_0=-\frac d2, \mu_B, R_s,\bfr_s,\bfr_{B;s}\mid s\ge 1\right\}$, then
\[
  \hat\Xi_{\hat B}=\Bigset{\hat\xi_{i,p}}
  {(i,p)\in\mcalI_{\hat B}}
\]
forms a Legendre-extended calibration of $(\hat M, \hat B)$, where
\begin{equation} \label{rel1}
  \hat\xi_{\afa,p} = \hat B\cdot\xi_{\afa,p}, \quad
  \hat\xi_{0,q} = \hat B\cdot\xi_{B,q}, \quad
  \hat\xi_{\hat B,q} = \hat B\cdot\xi_{0,q}.
\end{equation}
Moreover, the basic data $\left\{\hat\mu, \hat\mu_0=-\frac{\hat d}{2}, \hat\mu_B,
\hat R_s,\hat \bfr_s, \hat\bfr_{\hat B;s}\mid s\ge 1\right\}$ for $\hat\Xi_{\hat B}$ satisfy
the relations
\begin{align}\label{rel4}
\hat\mu &= \mu,\quad \hat\mu_0 = \mu_B,\quad
  \hat\mu_{\hat B} = \mu_0,
  \\
\hat R_s&=R_s, \quad
\hat\bfr_s = \bfr_{B_s},\quad \hat\bfr_{\hat B;s}=\bfr_s.\label{rel5}
\end{align}
\end{thm}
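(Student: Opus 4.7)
The plan is to verify, in order, the four defining properties of a Legendre-extended calibration for the family $\hat\Xi_{\hat B}$ defined by \eqref{rel1}, and then read off the basic data from these verifications. The key identity that drives everything is the transformation rule for the connection, $\hat\nabla_X Y = \hat B\cdot \nabla_X(B\cdot Y)$, together with the identities $\pp{\hat v^\al}=\hat B\cdot\pp{v^\al}$, $\hat B\cdot B=e$, and $\hat B\cdot e=\hat B$, all of which are inherited from Proposition~\ref{zh-1}.

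First, I would verify the initial conditions: $\hat\xi_{\al,0}=\hat B\cdot\xi_{\al,0}=\hat B\cdot\pp{v^\al}=\pp{\hat v^\al}$, $\hat\xi_{0,0}=\hat B\cdot\xi_{B,0}=\hat B\cdot B=e$, and $\hat\xi_{\hat B,0}=\hat B\cdot\xi_{0,0}=\hat B\cdot e=\hat B$, all matching the required values for a calibration of $(\hat M,\hat B)$. Second, the recursion $\hat\nabla_X\hat\xi_{i,p+1}=X\cdot\hat\xi_{i,p}$ follows uniformly in all three sectors: for any vector field $\xi$ appearing in the definitions,
\[
\hat\nabla_X(\hat B\cdot\xi) = \hat B\cdot\nabla_X(B\cdot\hat B\cdot\xi) = \hat B\cdot\nabla_X\xi,
\]
which, combined with the recursions \eqref{d-recur-2} and \eqref{B+ recursion}, gives $\hat\nabla_X\hat\xi_{i,p+1}=\hat B\cdot X\cdot\xi_{i,p}=X\cdot\hat\xi_{i,p}$ by associativity of $c$. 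Third, the normalization \eqref{normalization} for $\hat\eta$ is immediate from $\hat\eta(\hat B\cdot\xi_{\al,p},\hat B\cdot\xi_{\beta,q})=\eta(B\cdot\hat B\cdot\xi_{\al,p},B\cdot\hat B\cdot\xi_{\beta,q})=\eta(\xi_{\al,p},\xi_{\beta,q})$.

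The main effort, and the step I expect to contain the only real bookkeeping, is the quasi-homogeneity check. The plan is to apply $\mcalL_E$ via the Leibniz-type rule implied by $\mcalL_E c=c$, namely
\[
[E,\hat B\cdot\xi] = \hat B\cdot\xi + [E,\hat B]\cdot\xi + \hat B\cdot[E,\xi],
\]
substitute \eqref{b^-1 quasi-homog} for $[E,\hat B]$, and substitute \eqref{qhomog for xi-1}--\eqref{qhomog for xi-2} and \eqref{qhomog-xiBq} for $[E,\xi]$ in each of the three sectors. A short arithmetic check, using $\hat d=-2\mu_B$ and $\hat\mu_0=\mu_B$, $\hat\mu_{\hat B}=-d/2=\mu_0$, shows that the scalar coefficient in front of $\hat\xi_{i,p}$ comes out to $p+\hat\mu_i-\tfrac{2-\hat d}{2}$ in each sector, while the tail of lower-order terms yields precisely the identifications $\hat R_s=R_s$, $\hat\bfr_s=\bfr_{B;s}$, and $\hat\bfr_{\hat B;s}=\bfr_s$ of \eqref{rel5}.

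Finally, I would check that these basic data satisfy the admissibility constraints \eqref{Rk-1}--\eqref{Rk-3} and \eqref{kill rBs} on $\hat M$. Since $\hat\mu=\mu$, $\hat R_s=R_s$, and in the new flat coordinates $\hat\eta_{\al\beta}=\eta_{\al\beta}$, the matrix relations in \eqref{Rk-1} carry over verbatim. For the vanishing conditions, the identities $\hat\mu_\al+\tfrac{\hat d}{2}=\mu_\al-\mu_B$ and $\hat\mu_\al-\hat\mu_{\hat B}=\mu_\al+\tfrac{d}{2}$ translate the constraints \eqref{Rk-3} and \eqref{kill rBs} for $\hat\bfr_s$ and $\hat\bfr_{\hat B;s}$ into the already-established constraints for $\bfr_{B;s}$ and $\bfr_s$, respectively. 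This completes the verification that $\hat\Xi_{\hat B}$ is a Legendre-extended calibration with the stated basic data.
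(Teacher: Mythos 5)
Your proposal is correct and follows essentially the same route as the paper's proof: verify the initial conditions, the recursion via $\hat\nabla_XY=\hat B\cdot\nabla_X(B\cdot Y)$, the normalization via the definition of $\hat\eta$, and the quasi-homogeneity via $\mcalL_Ec=c$ together with \eqref{b^-1 quasi-homog}, reading off the new basic data from the resulting coefficients. Your closing check that the transformed data satisfy the admissibility constraints \eqref{Rk-1}--\eqref{Rk-3} and \eqref{kill rBs} (using $\hat\mu_\afa+\tfrac{\hat d}{2}=\mu_\afa-\mu_B$ and $\hat\mu_\afa-\hat\mu_{\hat B}=\mu_\afa+\tfrac d2$) is a detail the paper leaves implicit, and is a worthwhile addition.
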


\begin{proof}
Firstly, the initial conditions \eqref{xi-initial}, \eqref{xiB0=B} for $\hat\Xi_{\hat B}$ hold true, because
\begin{align*}
  \hat\xi_{\afa,0} &=\, \hat B\cdot\xi_{\afa,0} = B^{-1}\cdot\pp{v^\afa} = \pp{\hat v^\afa}, \\
  \hat\xi_{0,0} &=\, \hat B\cdot\xi_{B,0} = B^{-1}\cdot B = e, \\
  \hat\xi_{\hat B,0} &=\, \hat B\cdot\xi_{0,0} = \hat B\cdot e = \hat B.
\end{align*}
Denote by $\hat\nabla$ the Levi-Civita connection for the new metric $\hat\eta$,
from \eqref{new conenction hat nabla} we obtain
\begin{align*}
  \hat\nabla_X\hat\xi_{\afa,p+1}
&=\,
  \hat B\cdot\nabla_X
  \left(
    B\cdot\hat B\cdot\xi_{\afa,p+1}
  \right)
=
  \hat B\cdot\nabla_X\xi_{\afa,p+1}
=
  \hat B\cdot (X\cdot\xi_{\afa,p}) \\
&=\,
  X\cdot\left(\hat B\cdot \xi_{\afa,p}\right)
=
  X\cdot\hat\xi_{\afa,p}
\end{align*}
for $X\in\Vect(M)$,
therefore the recursion relations \eqref{d-recur-2} for $\hat\xi_{\afa,p}$ hold true.
Similarly, we also have
\[
  \hat\nabla_X\hat\xi_{0,q+1} = X\cdot\hat\xi_{0,q},\quad
  \hat\nabla_X\hat\xi_{B,q+1} = X\cdot\hat\xi_{B,q}
\]
for $q\in\bbZ$ and $X\in\Vect(M)$.

The normalization condition \eqref{normalization} for $\hat\xi_{\al,p}$ follows from the Definition \ref{pair-B} of the metric $\hat\eta$ and the relation \eqref{hat v-afa2}. Indeed, we have
\begin{align*}
&\,
  \pair{\hat\xi_\afa(-z)}{\hat\xi_\beta(z)}_B
=
  \pair{B\cdot\hat\xi_{\afa}(-z)}{B\cdot\hat\xi_{\beta}(z)} \\
=&\,
   \pair{B\cdot\hat B\cdot\xi_\afa(-z)}
   {B\cdot\hat B\cdot\xi_\beta(z)}
=
  \pair{\xi_\afa(-z)}{\xi_\beta(z)} = \eta_{\afa\beta}=\hat\eta_{\afa\beta}.
\end{align*}

Now, from \eqref{quasi-homog of c and eta} and \eqref{b^-1 quasi-homog} we arrive at
\begin{align}
  [E,\hat\xi_{\afa,p}]
&=
  \mcalL_E\left(\hat B\cdot\xi_{\afa,p}\right) \notag\\
&=
  [E,\hat B]\cdot\xi_{\afa,p}
 +(\mcalL_E c)(\hat B, \xi_{\afa,p})
 +\hat B\cdot [E,\xi_{\afa,p}]\notag \\
&=
  -\left(
    \mu_B+\frac{2+d}{2}
  \right)\hat B\cdot \xi_{\afa,p}
 +\hat B\cdot \xi_{\afa,p} \notag\\
&\quad
  +\hat B\cdot
   \left(
     \left(
       p+\mu_\afa-\frac{2-d}{2}
     \right)\xi_{\afa,p}
    +\sum_{s=1}^{p}
       (R_s)_\afa^\gamma \xi_{\gamma,p-s}
   \right) \notag\\
&=
  \left(
    p+\mu_\afa-\frac{2-\hat d}{2}
  \right)\hat\xi_{\afa,p}
 +\sum_{s=1}^{p}
  (R_s)_\afa^\gamma\hat\xi_{\gamma,p-s}.\label{zh-2}
\end{align}
In a similar way, we can verify that
\begin{align}
  [E,\hat\xi_{0,q}] &=\, (q-1)\hat\xi_{0,q}
+
  \sum_{s=1}^{q}r_{B;s}^\afa\hat\xi_{\afa,q-s}, \label{zh-3}\\
  [E,\hat\xi_{B,q}]
&=
  \left(
    q+\hat\mu_{\hat B}-\frac{2-\hat d}{2}
  \right)\hat\xi_{B,q}
 +\sum_{s=1}^{q}r^\afa_s\hat\xi_{\afa,q-s},\label{zh-4}
\end{align}
therefore $\hat\Xi_{\hat B}$ is a Legendre-extended calibration of $(\hat M,\hat B)$.
The relations \eqref{rel4}--\eqref{rel5} for the basic data of $\hat\Xi_{\hat B}$ follow from
Proposition \ref{prop:linearity of new E}, Remark \ref{rmk:hat d} and
\eqref{zh-2}--\eqref{zh-4}.
The theorem is proved.
\end{proof}

\begin{rmk}
The relations \eqref{rel1} can be rewritten as
\begin{equation}\label{240701-1433}
  \pfrac{\hat\theta_{\afa,p}}{\hat v^\beta} = \pfrac{\theta_{\afa,p}}{v^\beta},\quad
  \pfrac{\hat\theta_{0,q}}{\hat v^\afa} = \pfrac{\theta_{B,q}}{v^\afa},\quad
  \pfrac{\hat\theta_{\hat B,q}}{\hat v^\afa} = \pfrac{\theta_{0,q}}{v^\afa}
\end{equation}
via the densities $\theta_{\afa,p}, \theta_{0,q}$ and $\theta_{B,q}$
(and their $\hat M$-analogues)
defined as in \eqref{d-recursion}--\eqref{qhomog for theta-2}.
\end{rmk}

\begin{thm}\label{thm:linear reciprocal}
Let $\Xi_B$ and $\hat\Xi_{\hat B}$ be the Legendre-extended calibrations of $(M, B)$ and $(\hat M, \hat B)$.
Then the Legendre-extended Principal Hierarchies \eqref{extended B-PH-1} of $\Xi_B$ and $\hat\Xi_{\hat B}$
are related by the linear reciprocal transformation
\begin{equation}\label{reciprocal}
  \hat t^{\afa,p} = t^{\afa,p},\qquad
  \hat t^{0,q} = t^{B,q},\qquad
  \hat t^{\hat B,q} = t^{0,q}
\end{equation}
for $1\leq\afa\leq n$, $p\geq 0$, and $q\in\bbZ$.
\end{thm}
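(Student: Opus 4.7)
The plan is to verify directly, flow by flow, that under the reciprocal transformation \eqref{reciprocal} every equation of the Legendre-extended Principal Hierarchy of $\hat M$ reduces to a consequence of that of $M$, once $\hat v$ is obtained from $v$ via the coordinate change \eqref{hat v-afa1}--\eqref{hat v-afa2}. By the symmetry $B\leftrightarrow\hat B$, $M\leftrightarrow\hat M$ built into Proposition \ref{zh-1} and the preceding theorem, the reverse implication follows from the same argument, so only one direction needs to be checked.

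The key observation is that $v$ and $\hat v$ label the same point of the underlying manifold, so for any two time variables matched by \eqref{reciprocal} the corresponding partial derivatives coincide as tangent vectors on $M$, merely expressed in two different coordinate bases (which are related by Frobenius multiplication through \eqref{hat v-afa2}). Applying this to the $\hat M$-spatial derivative and then to the $M$-flow along $t^{B,0}$ gives
\[
\hat v_{\hat x}=\frac{\p\hat v}{\p\hat t^{0,0}}=\frac{\p v}{\p t^{B,0}}=\xi_{B,0}\cdot v_x=B\cdot v_x,
\]
so the two notions of spatial derivative differ by Frobenius multiplication with $B$; this is the genuinely reciprocal content of \eqref{reciprocal} as opposed to a pure coordinate change. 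Once this is in place, each of the three families of $\hat M$-flows collapses in the same way: for the $(\afa,p)$-flow the left-hand side becomes $\p v/\p t^{\afa,p}=\xi_{\afa,p}\cdot v_x$ by the $M$-hierarchy, while the right-hand side, using $\hat\xi_{\afa,p}=\hat B\cdot\xi_{\afa,p}$ from the preceding theorem together with $B\cdot\hat B=e$, yields
\[
\hat\xi_{\afa,p}\cdot\hat v_{\hat x}=(\hat B\cdot\xi_{\afa,p})\cdot(B\cdot v_x)=(B\cdot\hat B)\cdot\xi_{\afa,p}\cdot v_x=\xi_{\afa,p}\cdot v_x.
\]
The $(0,q)$- and $(\hat B,q)$-flows are handled identically, invoking $\hat\xi_{0,q}=\hat B\cdot\xi_{B,q}$ and $\hat\xi_{\hat B,q}=\hat B\cdot\xi_{0,q}$ together with the $M$-flows along $t^{B,q}$ and $t^{0,q}$ respectively.

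No analytic input is required: the whole verification reduces to repeated use of the telescoping identity $(\hat B\cdot X)\cdot(B\cdot Y)=X\cdot Y$, which is immediate from commutativity and associativity of the Frobenius product together with $B\cdot\hat B=e$. The one point demanding care, and therefore the closest thing to an obstacle, is the clear separation between the two spatial variables $x=t^{0,0}$ and $\hat x=\hat t^{0,0}=t^{B,0}$, and between the corresponding tangent vectors $v_x$ and $\hat v_{\hat x}$: these are genuinely distinct and related by Frobenius multiplication by $B$, not identified. Once that distinction is made explicit, the statement reduces to algebra in the Frobenius structure.
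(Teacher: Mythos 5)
Your proposal is correct and follows essentially the same route as the paper's proof: both hinge on the identity $\hat v_{\hat x}=B\cdot v_x$ (which you obtain invariantly from the $t^{B,0}$-flow, and the paper obtains by the explicit chain-rule computation via \eqref{hat v-afa1}--\eqref{hat v-afa2}), followed by the telescoping $(\hat B\cdot\xi_{i,p})\cdot(B\cdot v_x)=\xi_{i,p}\cdot v_x$ using $B\cdot\hat B=e$ and the relations \eqref{rel1}. Your explicit emphasis on distinguishing $v_x$ from $\hat v_{\hat x}$ is exactly the point the paper's computation \eqref{240701-1622} is making.
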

\begin{proof}
Suppose $v=v^\afa\pp{v^\afa}$ satisfies the Legendre-extended Principal Hierarchy \eqref{extended B-PH-1} of $\Xi_B$,
we need to verify that $\hat v:=\hat v^\afa\pp{\hat v^\afa}$
satisfies the corresponding Legendre-extended Principal Hierarchy of $\hat\Xi_{\hat B}$
under the identification \eqref{reciprocal} of time variables.

Note that the new spatial variable $\hat x$ is identified with the new time variable $\hat t^{0,0}$, i.e.
\[
  \hat x = \hat t^{0,0},
\]
which is analogous to \eqref{t00=x}.
For a vector field $X\in\Vect(M)$, we denote $X^\afa:=\pair{\td v^\afa}{X}$. Then from \eqref{hat v-afa1} we obtain
\begin{align*}
  \pfrac{\hat v^\afa}{\hat x}
&:=
  \pfrac{\hat v^\afa}{\hat t^{0,0}}
=
  \pfrac{\hat v^\afa}{ t^{B,0}}
=
  \pfrac{\hat v^\afa}{v^\beta}\pfrac{v^\beta}{t^{B,0}}
=
  B^\veps c^\afa_{\veps\beta}
  B^\lmd c^\beta_{\lmd\gamma}v^\gamma_x
=
  (B^2\cdot v_x)^\afa.
\end{align*}
It follows from \eqref{hat v-afa2} that
\begin{align}
  \hat v_{\hat x}&:= \pfrac{\hat v^\afa}{\hat x}\pp{\hat v^\afa}
=
  (B^2\cdot v_x)^\afa
  \left(
    \hat B\cdot\pp{v^\afa}
  \right) \notag\\
&\phantom:=
  \hat B\cdot \left(
    (B^2\cdot v_x)^\afa\pp{v^\afa}
  \right)
=
  \hat B\cdot (B^2\cdot v_x) = B\cdot v_x.  \label{240701-1622}
\end{align}
Therefore, by \eqref{rel1} and the above equations we arrive at
\begin{align*}
\pfrac{\hat v}{\hat t^{\afa,p}}
&=
  \pfrac{\hat v^\gamma}{\hat t^{\afa,p}}\pp{\hat v^\gamma}
=
  \pfrac{\hat v^\gamma}{t^{\afa,p}}\pp{\hat v^\gamma}
=
  \pfrac{\hat v^\gamma}{v^\beta}
  \pfrac{v^\beta}{t^{\afa,p}}
  \pp{\hat v^\gamma} \\
&=
  B^\veps c^\gamma_{\veps\beta}
  (\xi_{\afa,p}\cdot v_x)^\beta
  \left(
    \hat B\cdot\pp{v^\gamma}
  \right)
=
  \hat B\cdot
  \left(
    B\cdot\xi_{\afa,p}\cdot v_x
  \right)^\gamma \pp{v^\gamma} \\
&=
  \xi_{\afa,p}\cdot v_x = \xi_{\afa,p}\cdot \hat B\cdot\hat v_{\hat x}
=
  \hat\xi_{\afa,p}\cdot\hat v_{\hat x}.
\end{align*}
In a similar way, we can verify that
\[
  \pfrac{\hat v}{\hat t^{0,q}}
=
  \hat\xi_{0,q}\cdot\hat v_{\hat x},
\quad
  \pfrac{\hat v}{\hat t^{\hat B,q}}
=
  \hat\xi_{\hat B,q}\cdot\hat v_{\hat x},\quad q\in\mathbb{Z}.
\]
The theorem is proved.
\end{proof}

\subsection{Tau structures of the Legendre-extended Principal Hierarchy}

In this section, we introduce the tau structure and tau-cover of the Legendre-extended Principal Hierarchy
\eqref{extended B-PH-1} of $(M, B)$.

Suppose $B_1, B_2$ be two arbitrary Legendre fields of $M$,
then by using \eqref{LegendreSym1} we have the identities
  \[
    \p^\delta\left(B_1^\afa c^\gamma_{\afa\beta} B_2^\beta\right)
  = \p^\gamma\left(B_1^\afa c^\delta_{\afa\beta} B_2^\beta\right),
  \]
which imply that $B_1\cdot B_2$ is a gradient field, i.e.
there locally exists a function $\Omg_{B_1;B_2}$ on $M$ such that
\[B_1\cdot B_2 = \grad_\eta\Omg_{B_1;B_2}.\]
In particular, for a Legendre-extended calibration
\[\Xi_B:=\Bigset{\xi_{i,p}}{(i,p)\in\mcalI_B}\]
of $(M,B)$, since all vector fields in $\Xi_B$ are Legendre,
there locally exists a family of 2-point functions
$\{\Omg_{i,p;j,q}\}_{(i,p), (j,q)\in\mcalI_B}$ on $M$ such that
\begin{equation}\label{dOmg-1}
  \xi_{i,p}\cdot\xi_{j,q} = \grad_\eta\Omg_{i,p;j,q}.
\end{equation}
From the definition of the Legendre-extended Principal Hierarchy we know that the above identities are equivalent to the following
\emph{tau-symmetry} condition \cite{normal-form}:
\begin{equation}\label{dOmg-2}
  \pfrac{\theta_{i,p}}{t^{j,q}} = \p_x\Omg_{i,p;j,q} = \pfrac{\theta_{j,q}}{t^{i,p}}.
\end{equation}

For any vector field $X\in\Vect(M)$, from \eqref{quasi-homog of c and eta}, \eqref{qhomog for xi-1}, \eqref{qhomog for xi-2}, \eqref{qhomog-xiBq} and \eqref{dOmg-1} we obtain
\begin{align*}
 &
   \left(\td\,(\mcalL_E\Omg_{i,p;j,q})\right)(X)
=
  \left(\mcalL_E(\td\Omg_{i,p;j,q})\right)(X)\\
&\quad=
  \mcalL_E ((\td\Omg_{i,p;j,q})X) - (\td\Omg_{i,p;j,q})[E,X]\\
&\quad=
  \mcalL_E\pair{\grad_\eta\Omg_{i,p;j,q}}{X}
 -\pair{\grad_\eta\Omg_{i,p;j,q}}{[E,X]}\\
&\quad=
  \pair{\mcalL_E(\xi_{i,p}\cdot\xi_{j,q})}{X}
 +(2-d)\pair{\xi_{i,p}\cdot\xi_{j,q}}{X}
\\&\quad=
  (p+q+\mu_i+\mu_j+1)
  \pair{\xi_{i,p}\cdot\xi_{j,q}}{X} \\
&\qquad
 +\sum_{s=1}^{p}
  \left(\widetilde{R}_{B;s}\right)_i^\afa
  \pair{\xi_{\afa,p-s}\cdot\xi_{j,q}}{X}
 +\sum_{s=1}^{q}
  \left(\widetilde{R}_{B;s}\right)_j^\afa
  \pair{\xi_{i,p}\cdot\xi_{\afa,q-s}}{X}
\\&\quad=
  \td\left(
    (p+q+\mu_i+\mu_j+1)\Omg_{i,p;j,q}
   +\sum_{s=1}^{p}
  \left(\widetilde{R}_{B;s}\right)_i^\afa\Omg_{\afa,p-s;j,q}
  \right.\\
  &\qquad\left.+\sum_{s=1}^{q}
  \left(\widetilde{R}_{B;s}\right)_j^\afa\Omg_{i,p;\afa,q-s}
  \right)(X),
\end{align*}
which imply that there exists a family of constants $\{C_{i,p;j,q}\}_{(i,p),(j,q)\in\mcalI_B}$ such that
\begin{equation}\label{LE Omg C}
\begin{split}
  &\mcalL_E\Omg_{i,p;j,q}
=
  (p+q+\mu_i+\mu_j+1)
  \Omg_{i,p;j,q} \\
&\quad
 +\sum_{s=1}^{p}
  \left(\widetilde{R}_{B;s}\right)_i^\afa\Omg_{\afa,p-s;j,q}
  +\sum_{s=1}^{q}
  \left(\widetilde{R}_{B;s}\right)_j^\afa\Omg_{i,p;\afa,q-s}
 +C_{i,p;j,q}.
\end{split}
\end{equation}
Note that the 2-point functions $\Omg_{i,p;j,q}$ satisfying \eqref{dOmg-1} are not unique,
they have the freedom of adding constants.
Now we specify the 2-point functions $\Omg_{i,p;j,q}$
to regularize the constants $C_{i,p;j,q}$ in the right hand side of \eqref{LE Omg C} by the following steps:

\begin{description}
  \item [\textbf{Step1.}] For $q\geq 0$ and $i\in\{0,B; 1,2,...,n\}$, define
    \begin{align}\label{240609-2127-1}
      \Omg_{i,p;\beta,q}&:=
        \sum_{k=0}^{q} (-1)^k\pair{\xi_{i,p+k+1}}{\xi_{\beta,q-k}},\\
     \Omg_{\beta,q;0,p}&:=\Omg_{0,p;\beta,q},\quad
      \Omg_{\beta,q;B,p}:=\Omg_{B,p;\beta,q}.
    \end{align}
  \item [\textbf{Step2.}] Fix a family of functions
  $\{\theta_{B,p}\}_{p\in\bbZ}$ satisfying \eqref{qhomog-thetaBq} and \eqref{a01-kill}, then define
    \begin{align}
      \Omg_{0,p;B,q}&:=
        \begin{cases}
          \sum\limits_{k=0}^{p-1}
            (-1)^k
            \pair{\xi_{0,p-k}}{\xi_{B,q+1+k}}
         +(-1)^p\theta_{B,p+q}, &  p\geq 0,
        \\
          \sum\limits_{k=0}^{-p-1}
            (-1)^k
            \pair{\xi_{0,p+1+k}}{\xi_{B,q-k}}
           +(-1)^p\theta_{B,p+q},  & p<0,
        \end{cases} \\
      \Omg_{B,q;0,p}&:=\Omg_{0,p;B,q}.
    \end{align}
Moreover, let $a_{10;p}$ be the constants that are given by \eqref{qhomog-thetaBq} and \eqref{a01-kill}, then we denote
    \begin{equation}\label{a01}
      a_{01;p}:=(-1)^{p+1}a_{10;p},\quad p\in\bbZ.
    \end{equation}

  \item [\textbf{Step3.}] Fix a family of functions $\{\theta_{0,p}\}_{p\in\bbZ}$
  satisfying \eqref{qhomog for theta-2} and \eqref{Rk-4}.
  Note that when $p$ is an odd integer, such $\theta_{0,p}$ is chosen as
  \[
    \theta_{0,p}=
    \begin{cases}
      \frac12\sum\limits_{k=0}^{p-1}(-1)^k
        \pair{\xi_{0,p-k}}{\xi_{0,k+1}}, &\text{if $p>0$ is odd}, \\
      \frac12\sum\limits_{k=0}^{-p-1}(-1)^k
        \pair{\xi_{0,p+1+k}}{\xi_{0,-k}}, &\text{if $p<0$ is odd},
    \end{cases}
  \]
  (see details in \cite{GFM1}).
  Then define
  \begin{equation}
    \Omg_{0,p;0,q}:=
      \begin{cases}
        \sum\limits_{k=0}^{p-1}
          (-1)^k\pair{\xi_{0,p-k}}{\xi_{0,q+1+k}}+(-1)^p\theta_{0,p+q}, &\text{if}\,\, p\geq 0, \\
        \sum\limits_{k=0}^{-p-1}
          (-1)^k
          \pair{\xi_{0,p+1+k}}{\xi_{0,q-k}} + (-1)^p\theta_{0,p+q}, &\text{if}\,\, p<0.
      \end{cases}
  \end{equation}
  \item [\textbf{Step4.}] In the case of $i=j=B$, \eqref{LE Omg C} imply that
  \begin{equation}\label{a11}
    \mcalL_E\Omg_{B,0;B,p} = (p+2\mu_B+1)\Omg_{B,p;B,0}
    +\sum_{s=1}^{p}
      r_{B;s}^\afa\Omg_{\afa,p-s;B,0} +a_{11;p+1}
  \end{equation}
  for some constants $\{a_{11;p}\}_{p\in\bbZ}$.
  We can appropriately choose $\{\Omg_{B,0;B,p}\}_{p\in\bbZ}$ such that
  \begin{equation}\label{240609-2028}
    a_{11;p}\neq 0 \quad\text{only if}\quad p+2\mu_B=0 \,\,\text{and $p$ is an odd integer}.
  \end{equation}
We do this in a way that is similar to the construction of $\{\theta_{0,p}\}_{p\in\bbZ}$ given in Step 3, i.e., we define
  \begin{equation*}
    \Omg_{B,0;B,p}:=
      \begin{cases}
        \frac12\sum\limits_{k=0}^{p-1}
          (-1)^p
          \pair{\xi_{B,p-k}}{\xi_{B,k+1}}, & \text{if $p>0$ is odd},\\
        \frac12\sum\limits_{k=0}^{-p-1}
          (-1)^p
          \pair{\xi_{B,p+1+k}}{\xi_{B,-k}}, & \text{if $p<0$ is odd},
      \end{cases}
  \end{equation*}
  and when $p$ is even,
  we adjust $\Omg_{B,0;B,p}$ by adding a constant to ensure that \eqref{240609-2028} holds true.
  Fix such a family of $\{\Omg_{B,0;B,p}\}_{p\in\bbZ}$, we define
  \begin{equation}\label{240609-2127-2}
    \Omg_{B,p;B,q}:=
    \begin{cases}
      \sum\limits_{k=0}^{p-1}(-1)^k
        \pair{\xi_{B,p-k}}{\xi_{B,q+1+k}}
       +(-1)^p\Omg_{B,0;B,p+q}, & \text{if $p\geq 0$},
    \\
        \sum\limits_{k=0}^{-p-1}(-1)^k
        \pair{\xi_{B,p+1+k}}{\xi_{B,q-k}}
       +(-1)^p\Omg_{B,0;B,p+q}, & \text{if $p < 0$},
    \end{cases}
  \end{equation}
  for all $p,q\in\bbZ$.
\end{description}

We can verify via a straightforward calculation
that the functions $\Omg_{i,p;j,q}$ defined in the above-mentioned four steps
satisfy the relations \eqref{dOmg-1}--\eqref{dOmg-2}, and they also have the following properties
for all $(i,p), (j,q)\in\mcalI_B$:
\begin{equation}\label{240701-1058}
  \Omg_{i,p;j,q} = \Omg_{j,q;i,p}, \qquad
  \Omg_{0,0;i,p} = \theta_{i,p}, \qquad
  v^\afa = \eta^{\afa\beta}\Omg_{0,0;\beta,0}.
\end{equation}
We call this family of functions $\{\Omg_{i,p;j,q}\}_{(i,p),(j,q)\in\mcalI_B}$
a \textit{tau structure} of the Legendre-extend Principal Hierarchy \eqref{extended B-PH-2}.
For each $s\in\bbZ$, we introduce the $2\times 2$ matrix
\begin{equation}\label{def of A_s}
  A_s:=\begin{pmatrix}
         a_{00;s} & a_{01;s} \\
         a_{10;s} & a_{11;s}
       \end{pmatrix},
\end{equation}
where the entries of the above matrices are defined as in
\eqref{qhomog for theta-2}, \eqref{Rk-4}, \eqref{qhomog-thetaBq}, \eqref{a01-kill},
\eqref{a01}, \eqref{a11} and \eqref{240609-2028}, and we call the matrix
\begin{equation}
  A:=\sum_{s\in\bbZ}A_s
\end{equation}
the \textit{extra data} of the tau structure $\{\Omg_{i,p;j,q}\}$. From the definition of the constants $a_{ij;s}$ we know that
the above sum have only finite number of non-zero terms.

\begin{defn}\label{def:complete data}
  Let $\{\Omg_{i,p;j,q}\}$ be a tau structure for the hierarchy \eqref{extended B-PH-2},
  and $\{\mu, \mu_0:=-\frac d2, \mu_B, R, \bfr, \bfr_B\}$ be the basic data.
We introduce the following blocked $(n+4)\times(n+4)$ matrices for $s\in\bbZ$:
  \begin{align}\label{complete data-1}
    &\widetilde{\mu}_B
  :=
    \begin{pmatrix}
      \nu &   &   \\
        & \mu &   \\
        &   & -\nu
    \end{pmatrix},\quad
  \widetilde{R}_{B;s}:=
    \begin{pmatrix}
      0 & 0 & 0 \\
      T_s & R_s & 0 \\
      A_s & T_s^\dag & 0
    \end{pmatrix},
\quad    \widetilde{R}_B:=\sum_{s\in\bbZ}\widetilde{R}_{B;s},
  \end{align}
where the $2\times 2$ blocks $\{A_s\}_{s\in\bbZ}$ are defined in \eqref{def of A_s},
\begin{align}
 & 
  \nu:=\begin{pmatrix}
       \mu_0 &  \\
         & \mu_B
     \end{pmatrix},\quad
  T_s:=\begin{cases}
         (\bfr_s, \bfr_{B;s})_{n\times 2} & \text{if $s>0$}, \\
         \boldsymbol 0 _{n\times 2} & \text{if $s\leq 0$},
       \end{cases}\\
&
  T_s^\dag := (-1)^{s+1}\left(T_s\right)^{\rmT}\eta,
\end{align}
and the $n\times n$ blocks $R_s:=0$ when $s\leq 0$.
We call $\{\widetilde{\mu}_B, \widetilde{R}_B\}$ the complete data
of the tau structure $\{\Omg_{i,p;j,q}\}$.
\end{defn}

Note that all the entries of complete data $\{\widetilde{\mu}_B, \widetilde{R}_B\}$
are determined by the basis data  $\{\mu, \mu_0:=-\frac d2, \mu_B, R, \bfr, \bfr_B\}$ and the extra data $A$,
and the identities \eqref{Rk-1}--\eqref{Rk-4}, \eqref{kill rBs}, \eqref{a01-kill}
and \eqref{240609-2028} can be uniformly written as
\begin{align}
 &\widetilde{\eta}_B\widetilde{\mu}_B
 +\widetilde{\mu}_B\widetilde{\eta}_B=0,
\\&
  (\widetilde{R}_{B;s})^\rmT\widetilde{\eta}_B
 =(-1)^{s+1}\widetilde{\eta}_B\widetilde{R}_{B;s},
\\&
  [\widetilde{\mu}_B, \widetilde{R}_{B;s}] = s\widetilde{R}_{B;s}
\end{align}
for all $s\in\bbZ$, where the matrix $\tilde{\eta}_B$ is defined by
\[\widetilde{\eta}_B=
    \begin{pmatrix}
        &   & I_2 \\
        & \eta &  \\
      I_2 &  &
    \end{pmatrix},\qquad 
I_2=\begin{pmatrix}
         1 &   \\
           & 1
       \end{pmatrix}. 
\]

\begin{rmk}\label{rmk:240610-notation}
Let us index the rows and columns of the above $(n+4)\times(n+4)$ matrices by
  \[
    (0,B; 1,2,...,n ; 0', B').
  \]
Then the left hand side of \eqref{240610-2125} is indeed the $(\afa,i)$-entry of $\widetilde{R}_{B;s}$
  for $1\leq\afa\leq n$ and $i\in\{0,B; 1,2,...,n\}$.
    We also note that the matrices
    $\widetilde{\eta}, \widetilde{\mu}, \widetilde{R}_s$ defined in \eqref{n+2 eta mu R}
    are submatrices of $\widetilde{\eta}_B, \widetilde{\mu}_B$, $\widetilde{R}_{B;s}$
    obtained by deleting the $B$-th and $B'$-th rows and columns.
\end{rmk}

\begin{prop}The tau structure $\{\Omg_{i,p;j,q}\}$ defined in
\eqref{240609-2127-1}--\eqref{240609-2127-2} satisfy the relations
\begin{align}
  \mcalL_e\Omg_{i,p;j,q} =&\,
  \Omg_{i,p-1;j,q} + \Omg_{i,p;j,q-1} + \delta_{p0}\delta_{q0}(\widetilde{\eta}_B)_{ij}, \label{p0Omg}\\
 \mcalL_E\Omg_{i,p;j,q} =&\,
    \left(
      p+q+\mu_i+\mu_j+1
    \right)\Omg_{i,p;j,q}
    + \sum_{s=1}^{p}
       \left(\widetilde{R}_{B;s}\right)_i^\afa\Omg_{\afa,p-s;j,q}\notag \\
  &\,  + \sum_{s=1}^{q}
       \left(\widetilde{R}_{B;s}\right)_j^\afa\Omg_{i,p;\afa,q-s}
    + (-1)^p \left(\widetilde{\eta}_B\widetilde{R}_{B;p+q+1}\right)_{ij}  \label{p1Omg}
\end{align}
for all $(i,p), (j,q)\in\mcalI_B$.
\end{prop}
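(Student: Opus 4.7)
The plan is to verify the two identities \eqref{p0Omg} and \eqref{p1Omg} by reducing each to a case analysis following the four-step construction of $\Omg_{i,p;j,q}$ given in Steps 1--4 above. In every step, $\Omg_{i,p;j,q}$ is written as an alternating sum of pairings $\pair{\xi_{i,r}}{\xi_{j,s}}$ plus a single boundary term (one of $\theta_{0,p+q}$, $\theta_{B,p+q}$, or $\Omg_{B,0;B,p+q}$), so each desired identity is obtained termwise from the recursion \eqref{Nabla e-xi}, the quasi-homogeneity laws \eqref{LE-xi}, and the normalization \eqref{normalization}.

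For \eqref{p0Omg}, first I would establish the auxiliary pairing identity
\begin{equation*}
  \mcalL_e \pair{\xi_{i,p}}{\xi_{j,q}} = \pair{\xi_{i,p-1}}{\xi_{j,q}} + \pair{\xi_{i,p}}{\xi_{j,q-1}},\qquad (i,p),(j,q)\in\mcalI_B.
\end{equation*}
This follows from $e=\xi_{0,0}$ and the rewriting $\pair{\xi_{i,p}}{\xi_{j,q}}=\pair{e\cdot\xi_{i,p}}{\xi_{j,q}}$, together with \eqref{Nabla e-xi}; alternatively, differentiate $\Omg_{i,p;j,q}$ along $e$ using $\grad_\eta\Omg_{i,p;j,q}=\xi_{i,p}\cdot\xi_{j,q}$. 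Substituting this into the explicit sums from Steps 1--4 the alternating expressions telescope to $\Omg_{i,p-1;j,q}+\Omg_{i,p;j,q-1}$. The only surviving boundary contribution comes from the $k=0$ term at $p=q=0$, which recovers $(\widetilde{\eta}_B)_{ij}$ via \eqref{normalization} for the $\Omg_{\afa,0;\beta,0}$ block and via the $I_2$-blocks of $\widetilde{\eta}_B$ for the $\Omg_{0,0;B,0}$ and $\Omg_{B,0;0,0}$ entries, while $\Omg_{B,0;B,0}$ contributes no extra term.

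For \eqref{p1Omg}, equation \eqref{LE Omg C} already fixes the form of $\mcalL_E\Omg_{i,p;j,q}$ up to an unknown constant family $C_{i,p;j,q}$; the task reduces to showing $C_{i,p;j,q}=(-1)^p\bigl(\widetilde{\eta}_B\widetilde{R}_{B;p+q+1}\bigr)_{ij}$. I would compute $C_{i,p;j,q}$ case by case. For $\Omg_{\afa,p;\beta,q}$ (Step 1), the constant vanishes, matching the $(n\times n)$-block $\eta R_{p+q+1}$ already absorbed into \eqref{LE Omg C}. For $\Omg_{0,p;B,q}$ and $\Omg_{B,p;0,q}$ (Step 2), applying $\mcalL_E$ to the boundary term $(-1)^p\theta_{B,p+q}$ via \eqref{qhomog-thetaBq} and the regularization \eqref{a01-kill}, \eqref{a01} produces $(-1)^p a_{10;p+q+1}$ and $(-1)^p a_{01;p+q+1}$, which are precisely the entries of $\widetilde{\eta}_B\widetilde{R}_{B;p+q+1}$ in the $(0,B)$ and $(B,0)$ positions. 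For $\Omg_{0,p;0,q}$ (Step 3) the constant $(-1)^p a_{00;p+q+1}$ is the one established in \cite{GFM1}. For $\Omg_{B,p;B,q}$ (Step 4), \eqref{a11} together with \eqref{240609-2028} yields $(-1)^p a_{11;p+q+1}$. Assembling these against the block form of $\widetilde{\eta}_B\widetilde{R}_{B;p+q+1}$ read off from Definition \ref{def:complete data} completes the identification.

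The main technical obstacle will be sign and index bookkeeping across the two regimes $p\geq 0$ and $p<0$ in Steps 2--4, where the defining formulas for $\Omg_{i,p;j,q}$ split by sign. The auxiliary pairing identity behaves uniformly in both regimes, but the boundary terms $\theta_{B,p+q}$ and $\Omg_{B,0;B,p+q}$ must be checked separately for each sign of $p$, and one must invoke the symmetry $\Omg_{i,p;j,q}=\Omg_{j,q;i,p}$ from \eqref{240701-1058} together with the relations \eqref{a01} and the vanishing constraints \eqref{Rk-4}, \eqref{a01-kill}, \eqref{240609-2028} to confirm that the constants derived for $(i,p;j,q)$ and $(j,q;i,p)$ are consistent with the matrix identities $\widetilde{R}_{B;s}^{\rmT}\widetilde{\eta}_B=(-1)^{s+1}\widetilde{\eta}_B\widetilde{R}_{B;s}$.
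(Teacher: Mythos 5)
Your overall strategy coincides with the paper's: the paper disposes of this proposition as a ``straightforward calculation'' from the definitions \eqref{240609-2127-1}--\eqref{240609-2127-2} together with \eqref{Nabla e-xi}--\eqref{LE-xi}, and your plan --- the auxiliary identity $\mcalL_e\pair{\xi_{i,p}}{\xi_{j,q}}=\pair{\xi_{i,p-1}}{\xi_{j,q}}+\pair{\xi_{i,p}}{\xi_{j,q-1}}$, telescoping through the alternating sums, and reducing \eqref{p1Omg} to identifying the constants $C_{i,p;j,q}$ of \eqref{LE Omg C} --- is exactly that calculation, sensibly organized.

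However, two of your bookkeeping claims are wrong as stated and would produce contradictions if executed literally. First, the $I_2$ blocks of $\widetilde{\eta}_B$ never enter \eqref{p0Omg}: the indices $i,j$ there range only over $\{0,B;1,\dots,n\}$, and on that subset the $\{0,B\}\times\{0,B\}$ corner of $\widetilde{\eta}_B$ is the zero block (the $I_2$'s occupy the positions pairing $0,B$ with the primed indices $0',B'$, which do not occur in the proposition). Hence $(\widetilde{\eta}_B)_{00}=(\widetilde{\eta}_B)_{0B}=(\widetilde{\eta}_B)_{BB}=0$, and the correct check at $p=q=0$ for these entries is that $\mcalL_e\Omg_{0,0;B,0}=\Omg_{0,-1;B,0}+\Omg_{0,0;B,-1}$ with \emph{no} extra term (both sides equal $\pair{\xi_{0,0}}{\xi_{B,0}}$); only the Greek--Greek entries receive the $\eta_{\afa\beta}$ contribution. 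Second, for the Greek--Greek block of \eqref{p1Omg} the constant does \emph{not} vanish: $C_{\afa,p;\beta,q}=(-1)^p(\eta R_{p+q+1})_{\afa\beta}$, which is nonzero whenever $R_{p+q+1}\neq 0$ (e.g.\ $p=q=0$ for the manifold of Section 6, where $R_1\neq 0$); nor is this term ``already absorbed into'' \eqref{LE Omg C}, whose sums stop at $s=p$ and $s=q$ and therefore never reach $s=p+q+1$ --- it arises from the boundary terms $s=p+k+1$ of the quasi-homogeneity relation \eqref{LE-xi} applied inside the pairing sum, combined with \eqref{normalization} and $R_s^{\rmT}\eta=(-1)^{s+1}\eta R_s$. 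Finally, in Step 2 note that $(\widetilde{\eta}_B\widetilde{R}_{B;s})_{0B}=a_{01;s}$ while \eqref{qhomog-thetaBq} applied to the boundary term $(-1)^p\theta_{B,p+q}$ supplies $a_{10;p+q+1}$, so the identification is not immediate from the boundary term alone; it requires tracking the leftover contributions of the pairing sums together with the sign relation \eqref{a01}. None of this changes the method, but the verification will not close until these matrix entries are read off correctly.
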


\begin{proof}
The proposition can be proved via a straightforward calculation
by using the definitions \eqref{240609-2127-1}--\eqref{240609-2127-2},
and the relations \eqref{quasi-homog of c and eta}, \eqref{def of theta afa p},
\eqref{Nabla e-xi}--\eqref{LE-xi}. We omit the details here.
\end{proof}

 We can generalize the concept of \textit{tau-cover} of the Principal Hierarchy \eqref{PH}
(see details in \cite{GFM1}, or \cite{normal-form})  to the Legendre-extended Principal Hierarchy.

\begin{defn}
Let $\{\Omg_{i,p;j,q}\}$ be a tau structure for the Legendre-extend Principal Hierarchy \eqref{extended B-PH-2},
then the following system of PDEs
\begin{equation}\label{tau cover}
  \pfrac{f}{t^{i,p}}=f_{i,p},\quad
  \pfrac{f_{i,p}}{t^{j,q}}=\Omg_{i,p;j,q},\quad
  \pfrac{v^\afa}{t^{i,p}} = \eta^{\afa\beta}\p_x\Omg_{\beta,0;i,p}
\end{equation}
for unknown functions $\{f, f_{i,p}, v^\afa\}$ is called the Legendre-extended tau-cover
of the Legendre-extend Principal Hierarchy.
Here the double-indexes $(i,p), (j,q)\in\mcalI_B$.
\end{defn}

We note that \textit{tau-cover} of the Principal Hierarchy \eqref{PH}
are obtained from \eqref{tau cover} by
restricting the indexes $i,j$ to the subset $\{0;1,2...,n\}$. We also note that from \eqref{240701-1058} it follows that
\begin{equation}\label{zh-09}
v^\al=\eta^{\al\beta}\frac{\p^2 f}{\p x\p t^{\beta,0}},\quad \al=1,\dots,n.
\end{equation}

\begin{thm}\label{thm:tau correspendence}
Let $B$ be an invertible quasi-homogeneous Legendre field on $M$,
$\Xi_B$ and $\hat\Xi_{\hat B}$ be families of Legendre-extended calibrations of $(M,B)$ and
$(\hat M, \hat B)$ related by \eqref{rel1}, and
$\{\Omg_{i,p;j,q}\}$ be a tau structure of the Legendre-extended Principal Hierarchy with respect to $\Xi_B$
with complete data $\{\widetilde{\mu}_B, \widetilde{R}_B\}$, then the following statements hold true:
\begin{enumerate}
  \item The functions
  \begin{equation} \label{240701-1059}
    \hat\Omg_{i,p;j,q} := \Omg_{\sgm(i),p; \sgm(j),q}
  \end{equation}
  form a tau structure of the Legendre-extended Principal Hierarchy with respect to $\hat\Xi_{\hat B}$,
  where the bijection
  \[\sgm\colon\{0,\hat B;1,2,...,n\}\to\{0, B;1,2,...,n\}\] is defined by
  \begin{equation}\label{involution sgm}
    \sgm(\afa) = \afa,\qquad \sgm(0)= B,\qquad \sgm(\hat B)=0
  \end{equation}
  for $\afa\in\{1,2,...,n\}$.
  \item The complete data $\{\hat{\widetilde{\mu}}_{\hat B}, \hat{\widetilde{R}}_{\hat B}\}$, $\{\widetilde{\mu}_B, \widetilde{R}_B\}$
  of the tau structures $\{\hat\Omg_{i,p;j,q}\}$ and $\{\Omg_{i,p;j,q}\}$ respectively
  are related by the formulae \eqref{rel4}--\eqref{rel5} and
  \begin{equation}
    \begin{pmatrix}
      \hat a_{00;s} & \hat a_{01;s} \\
      \hat a_{10;s} & \hat a_{11;s}
    \end{pmatrix}
  =
    \begin{pmatrix}
       a_{11;s} &  a_{10;s} \\
       a_{01;s} &  a_{00;s}
    \end{pmatrix}.
  \end{equation}

\item Suppose $f$, $f_{i,p}$ and $v^\afa$
satisfy the Legendre-extended tau-cover \eqref{tau cover},
then under the time variable identification \eqref{reciprocal}, the functions
\begin{equation}\label{zh-08}
  \hat f := f,\quad \hat f_{i,p}:=f_{\sgm(i),p},\quad
  \hat v^\afa :=
  \hat{\eta}^{\afa\beta}
  \frac{\p^2\hat f}{\p\hat x\p\hat t^{\beta,0}}
\end{equation}
satisfy the Legendre-extended tau-cover of the Legendre-extended Principal Hierarchy with respect to $\hat\Xi_{\hat B}$
and the tau structure $\{\hat\Omg_{i,p;j,q}\}$. We note that $\hat\eta^{\al\beta}=\eta^{\al\beta}$ due to \eqref{hat v-afa2}.
\end{enumerate}
\end{thm}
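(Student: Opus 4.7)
The plan is to verify the three claims in turn, with the linchpin being the identity
\[\grad_{\hat\eta}f=\hat B^2\cdot\grad_\eta f\]
for any smooth $f$, which follows immediately from $\hat\eta(X,Y)=\eta(B\cdot X,B\cdot Y)$ together with commutativity and associativity of the Frobenius multiplication. For part (1), combining this identity with \eqref{rel1} gives
\[\hat\xi_{i,p}\cdot\hat\xi_{j,q}=\hat B^2\cdot\xi_{\sigma(i),p}\cdot\xi_{\sigma(j),q}=\hat B^2\cdot\grad_\eta\Omg_{\sigma(i),p;\sigma(j),q}=\grad_{\hat\eta}\hat\Omg_{i,p;j,q},\]
so the defining tau-structure identity \eqref{dOmg-1} holds on $\hat M$. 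The tau-symmetry $\hat\Omg_{i,p;j,q}=\hat\Omg_{j,q;i,p}$ is immediate from that of $\Omg$ since $\sigma$ is an involution. For the coordinate normalization in \eqref{240701-1058}, use $\hat\eta^{\alpha\beta}=\eta^{\alpha\beta}$ from \eqref{hat v-afa2} and verify $\Omg_{\beta,0;B,0}=\eta_{\alpha\beta}\hat v^\alpha$ modulo constants by computing $\p_\delta\Omg_{\beta,0;B,0}$ using \eqref{hat v-afa1} and the total symmetry of $\eta_{\alpha\delta}c^\delta_{\beta\gamma}$.

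For part (2), the basic-data correspondences \eqref{rel4}--\eqref{rel5} are already in hand. To identify the extra data $\hat A_s$, I compute $\mcalL_E\hat\Omg_{i,p;j,q}$ in two ways for each $(i,j)\in\{0,\hat B\}\times\{0,\hat B\}$: first from \eqref{p1Omg} applied on $\hat M$, whose constant term is $(-1)^p(\hat{\widetilde\eta}_{\hat B}\hat{\widetilde R}_{\hat B;p+q+1})_{ij}$ and thus involves $\hat A_{p+q+1}$; second using $\hat\Omg_{i,p;j,q}=\Omg_{\sigma(i),p;\sigma(j),q}$ and \eqref{p1Omg} on $M$, whose constant involves $A_{p+q+1}$. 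The coefficient-of-$\Omg$ terms match automatically by the basic-data relations, so matching constants in the four cases yields $\hat a_{00;s}=a_{11;s}$ (from $\hat\Omg_{0,0;0,p}=\Omg_{B,0;B,p}$), $\hat a_{11;s}=a_{00;s}$ (from $\hat\Omg_{\hat B,0;\hat B,p}=\Omg_{0,0;0,p}$), $\hat a_{01;s}=a_{10;s}$ (from $\hat\Omg_{0,0;\hat B,p}=\Omg_{B,0;0,p}$), and $\hat a_{10;s}=a_{01;s}$ (from $\hat\Omg_{\hat B,0;0,p}=\Omg_{0,0;B,p}$).

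Part (3) is then a direct substitution check. Since \eqref{reciprocal} identifies $\p/\p\hat t^{i,p}$ with $\p/\p t^{\sigma(i),p}$, the first two families of tau-cover equations for $\hat M$ follow immediately from those for $M$:
\[\pfrac{\hat f}{\hat t^{i,p}}=\pfrac{f}{t^{\sigma(i),p}}=f_{\sigma(i),p}=\hat f_{i,p},\qquad\pfrac{\hat f_{i,p}}{\hat t^{j,q}}=\Omg_{\sigma(i),p;\sigma(j),q}=\hat\Omg_{i,p;j,q}.\]
The third family then follows by differentiating the definition $\hat v^\alpha=\hat\eta^{\alpha\beta}\p_{\hat x}\hat f_{\beta,0}$ with respect to $\hat t^{i,p}$, commuting derivatives, and applying the second family: $\p\hat v^\alpha/\p\hat t^{i,p}=\hat\eta^{\alpha\beta}\p_{\hat x}\hat\Omg_{\beta,0;i,p}$. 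The main obstacle I anticipate is the bookkeeping in part (2): one must carefully manage the $(-1)^p$ sign in the constant term of \eqref{p1Omg}, the block indexing of $\widetilde R_{B;s}$ with rows and columns ordered $(0,B;1,\dots,n;0',B')$, and the antisymmetry $a_{01;s}=(-1)^{s+1}a_{10;s}$ of $A_s$, and verify that all these conspire to produce exactly the transposed matrix of extra data rather than a spuriously signed version.
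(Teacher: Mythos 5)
Your proposal is correct and takes essentially the same approach as the paper: the paper's written proof consists precisely of the identity $\grad_{\hat\eta}\fai=\hat B^2\cdot\grad_\eta\fai$ and the resulting computation $\grad_{\hat\eta}\hat\Omg_{i,p;j,q}=\hat\xi_{i,p}\cdot\hat\xi_{j,q}$, exactly as in your part (1), with parts (2) and (3) dismissed as ``simple and straightforward calculations.'' Your fleshed-out verifications — the extra-data correspondence obtained by evaluating $\mcalL_E\hat\Omg_{i,p;j,q}$ both via \eqref{p1Omg} on $\hat M$ and via the identification with $\Omg_{\sgm(i),p;\sgm(j),q}$ on $M$, and the tau-cover check by direct substitution under \eqref{reciprocal} — correctly fill in what the paper omits.
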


\begin{proof}
  By equation \eqref{pair-B} for the new metric $\hat\eta$, it is clear that
  \[
    \grad_{\hat\eta}\fai = \hat B^2\cdot\grad_\eta \fai
  \]
  for $\fai\in C^\infty(M)$,
  therefore from \eqref{rel1} and \eqref{dOmg-1} it follows that
  \begin{align*}
    \grad_{\hat\eta}\hat\Omg_{i,p;j,q}
  &=\,
    \hat B^2\cdot\grad_{\eta}\Omg_{\sgm(i),p;\sgm(j),q}
  =
    (\hat B\cdot\xi_{\sgm(i),p})\cdot (\hat B\cdot\xi_{\sgm(j),q})
  =
    \hat\xi_{i,p}\cdot\hat\xi_{j,q},
  \end{align*}
i.e., the functions $\hat\Omg_{i,p;j,q}$ indeed satisfy the $\hat M$-analogue of \eqref{dOmg-1}.
The remainder of this theorem can be verified through simple and straightforward calculations, so we omit the details here. The theorem is proved.
\end{proof}

\section{Topological deformations}

\subsection{Virasoro operators}

In this subsection we are to construct a collection of Virasoro operators
in terms of the complete data \eqref{complete data-1}
for the tau structure $\{\Omg_{i,p;j,q}\}$ of $(M,B)$
constructed by \eqref{240609-2127-1}--\eqref{240609-2127-2}.

Given an $N$-dimensional diagonal matrix $\mu=\diag(\mu_1,...,\mu_N)$,
we introduce the following subspaces of $\bbC^{N\times N}$ for each $s\in\bbZ$:
\begin{align}
  V_{\mu;s} &:=\, \Bigset{R\in\bbC^{N\times N}}{[\mu,R]=sR}, \\
  V_{\mu} &:=\,
    \bigoplus_{s\in\bbZ}V_{\mu;s}.   \label{Vmu-oplus}
\end{align}
Then we have the canonical projections
\begin{equation}
     \pi_s\colon V_{\mu} \to\, V_{\mu;s},\quad
    R \mapsto\, R_{[s]},\quad s\in\bbZ
\end{equation}
with respect to the decomposition \eqref{Vmu-oplus}.

For each $R\in V_\mu$ and integer $m\geq -1$, we denote
\begin{equation} \label{Pm-mu-R}
  \rmP_m(\mu, R):=
    \begin{cases}
      \rme^{R\p_x}
        \prod\limits_{j=0}^{m}
          \left(
            x+\mu+j-\frac12
          \right)
      \Big|_{x=0},& \text{if $m\geq 0$,}
    \\
      1,  & \text{if $m=-1$}.
    \end{cases}
\end{equation}
It is clear that $\rmP_m(\mu, R)$ also belongs to $V_\mu$.

\begin{ex}Let $M$ be an $n$-dimensional generalized Frobenius manifold.
\begin{enumerate}
  \item We take $N=n$, and $\mu$, $R$ to be the monodromy data of $M$ given  in \eqref{eta mu Rs}. Then
  we have $R\in V_\mu$, and $R_{[s]}=R_s$ for $s\in\bbZ$.
  We also note that $R_{[s]}=0$ for $s\leq 0$.

  \item We take $N=n+2$, and $\mu$, $R$ to be the monodromy data $\widetilde{\mu}, \widetilde{R}$ of $\widetilde{M}$ given  in
  \eqref{n+2 eta mu R}. Then we have $\widetilde{R}\in V_{\widetilde{\mu}}$,
  and $\widetilde{R}_{[s]}=\widetilde{R}_{s}$ for $s\in\bbZ$.
  Note that $\widetilde{R}_{s\leq 0}$ does not necessarily vanish,
  it can possess a non-zero entry $a_{00;s}$.

  \item Let $N=n+4$, and $B$ be an invertible quasi-homogeneous Legendre field on $M$. We take $\mu$, $R$ to be the matrices
 $\widetilde{\mu}_B, \widetilde{R}_B$ given in Definition \ref{def:complete data}. Then $\widetilde{R}_B\in V_{\widetilde{\mu}_B}$, and
  $(\widetilde{R}_B)_{[s]}=\widetilde{R}_{B;s}$ for $s\in\bbZ$.
  Note that $\widetilde{R}_{B;s\leq 0}$ does not necessarily vanish,
  it can possess a non-zero block $A_{s}$.
\end{enumerate}
\end{ex}

Let us introduce the $(n+4)$-dimensional operator-valued column vectors
\[\widetilde{\bfa}_p^{(B)}=\left(\widetilde{\bfa}_p^{(B),i}\right)_{i\in\{0,B;1,...,n;0',B'\}}\]
for $p\in\bbZ$ such that
\begin{equation} \label{Heisenberg operator}
  \widetilde{\bfa}_p^{(B),i}
:=
  \begin{cases}
    (-1)^{p+1}t^{0,-p-1}, & i=0,\,p\in\bbZ,\\
    (-1)^{p+1}t^{B,-p-1}, & i=B,\,p\in\bbZ,\\
    \eta^{i\beta}\pp{t^{\beta,p}}, & 1\leq i\leq n,\, p\geq 0,\\
    (-1)^{p+1}t^{i,-p-1}, & 1\leq i\leq n,\, p < 0,\\
    \pp{t^{0,p}}, & i=0',\, p\in\bbZ, \\
    \pp{t^{B,p}}, & i=B',\, p\in\bbZ.
  \end{cases}
\end{equation}
Then the operators $\left\{\widetilde{\bfa}^{(B),i}_p\mid p\in\bbZ,\, i=0,B;1,...,n;0',B'\right\}$
together with the identity $\boldsymbol 1$ form
a realization of the Heisenberg algebra $\widetilde{\mcalH}^{(B)}$
which satisfy the commutation relations
\[
  \left[
    \boldsymbol 1,
    \widetilde{\bfa}^{(B),i}_p
  \right] = 0,
\quad
  \left[
    \widetilde{\bfa}^{(B),i}_p,
    \widetilde{\bfa}^{(B),j}_q
  \right]
  =
  (-1)^p(\widetilde{\eta}_B)^{ij}\delta_{p+q+1,0}\cdot\boldsymbol 1.
\]

A family of $(n+2)$-dimensional operator-valued column vectors $\widetilde{\bfa}_p$ for $p\in\bbZ$
were constructed in \cite{GFM1} by deleting the $B$-th and $B'$-th entries of the above-mentioned $\widetilde{\bfa}_p^{(B)}$,
and a family of linear differential operators $\{L_m\}_{m\geq -1}$ were introduced as
\begin{align}
  L_m(\bft,\pp{\bft})
:=&\,
  \frac12\sum_{p,q\in\bbZ}
    (-1)^{p+1}
    :
        \widetilde{\bfa}_q^\rmT
      \widetilde{\eta}
      \left(
        \rmP_m(\widetilde{\mu}-p, \widetilde{R})
      \right)_{[m-1-q-p]}
      \widetilde{\bfa}_p
    :
\notag\\
&
  +
    \frac14\delta_{m,0}
      \mathrm{tr}\left(\frac14-\mu^2\right)\cdot\boldsymbol 1. \label{Vir-1}
\end{align}
They are called the Virasoro operators of the generalized Frobenius manifold $M$.
In the same way, we define the Legendre-extended Virasoro operators
$\{L^{(B)}_m\}_{m\geq -1}$ of $(M,B)$ by
\begin{align}
  L_m^{(B)}(\bft,\pp{\bft})
:=&\,
  \frac12\sum_{p,q\in\bbZ}
    (-1)^{p+1}
    :
      \left(
        \widetilde{\bfa}_q^{(B)}
      \right)^\rmT
      \widetilde{\eta}_B
      \left(
        \rmP_m(\widetilde{\mu}_B-p, \widetilde{R}_B)
      \right)_{[m-1-q-p]}
      \widetilde{\bfa}_p^{(B)}
    :
\notag\\
&
  +
    \frac14\delta_{m,0}
      \mathrm{tr}\left(\frac14-\mu^2\right)\cdot\boldsymbol 1  \label{Vir-B1}
\end{align}
in terms of the complete data $\widetilde{\mu}_B$, $\widetilde{R}_B$.
Here the normal ordering $:\ :$ in \eqref{Vir-1}--\eqref{Vir-B1}
  means to put the differential operator terms $\pp{t^{i,p}}$ on the right.
By using the method of \cite{FM-Virasoro}, we can verify that these operators also satisfy the commutation relations
\begin{equation}
  \left[
    L_m^{(B)}, L_k^{(B)}
  \right]
=
  (m-k)L_{m+k}^{(B)},\quad m,k\geq -1.
\end{equation}

From the above definition it follows that the Legendre-extended Virasoro operators $L^{(B)}_m$
can be obtained by adding certain terms involving $t^{B,p}$ or $\pp{t^{B,p}}$ to the operators $L_m$,
and the above operators can be represented in the form
\begin{align}
  L_m &=
    a_{m}^{IJ}\frac{\p^2}{\p t^I\p t^J}
   +b_{m;I}^J t^I\pp{t^J}
   +c_{m;IJ} t^It^J + \frac14\delta_{m,0}
      \mathrm{tr}\left(\frac14-\mu^2\right)\cdot\boldsymbol 1,  \label{Lm-VirCoef}
\\
  L_m^{(B)} &=
    a_{m}^{(B);IJ}\frac{\p^2}{\p t^I\p t^J}
   +b_{m;I}^{(B);J} t^I\pp{t^J}
   +c_{m;IJ}^{(B)} t^It^J + \frac14\delta_{m,0}
      \mathrm{tr}\left(\frac14-\mu^2\right)\cdot\boldsymbol 1,  \label{L(B)m-VirCoef}
\end{align}
here the Virasoro coefficients $a_{m}^{IJ}, a_{m}^{(B);IJ}$,
$b_{m;I}^J, b_{m;I}^{(B);J}$ and $c_{m;IJ}, c_{m;IJ}^{(B)}$ satisfy the relations
\[
  a_{m}^{IJ}=a_{m}^{JI},\quad
  a_{m}^{(B);IJ} = a_{m}^{(B);JI}, \quad
  c_{m;IJ} = c_{m;JI}, \quad
  c_{m;IJ}^{(B)} = c_{m;JI}^{(B)},
\]
the uppercase Latin indices $I,J,\dots$ are assumed to range over the index set
$\mcalI_B$, see \eqref{index set IB},
and we assume Einstein summation convention w.r.t. these indices.
We remark that the Virasoro coefficients also possess the properties
\begin{equation}\label{Vir coef rel 0}
  a_m^{B,p;J}
 =b_{m;B,p}^J
 =b_{m;J}^{B,p}
 =c_{m;B,p;J} =0,
\end{equation}
since the operators $L_m$ do not contain terms involving $t^{B,p}$ or $\pp{t^{B,p}}$.
It is clear that
\begin{equation} \label{Vir coef rel 1}
  a_m^{i,p;j,q} = a_m^{(B);i,p;j,q},\quad
  b_{m;i,p}^{j,q} = b_{m;i,p}^{(B);j,q},\quad
  c_{m;i,p;j,q} = c^{(B)}_{m;i,p;j,q}
\quad
\text{if}\ i,j\neq B.
\end{equation}
Moreover, a more involved observation from the definitions
\eqref{Pm-mu-R}--\eqref{Vir-B1} shows that
\begin{equation}\label{Vir coef rel 2}
  a^{(B);IJ}_m = a^{IJ}_m,
\end{equation}
for all $I,J$ (even if $I=(B,p)$ or $J=(B,p)$), and
\begin{align}
  b^{(B);B,p}_{m;i,q} \neq 0, &\quad \text{only if \,$i=B$},  \label{Vir coef rel 3}\\
  b^{(B);i,q}_{m;B,p} \neq 0, &\quad \text{only if \,$i\in\{1,2,\dots,n\}$}  \label{Vir coef rel 4}
\end{align}
for $m\geq -1$, and $p,q\in\bbZ$.

\begin{rmk}
  Let $(\hat M,\hat B)$ be the generalized Legendre transformation of $(M,B)$
  with respect to the invertible quasi-homogeneous Legendre field $B$.
  From the relationships between the complete data given in Theorem \ref{thm:tau correspendence},
  it is clear that the Legendre-extended Virasoro operators of
  $(M,B)$ and $(\hat M,\hat B)$ are related by
  \begin{equation}\label{Viraosro relationship}
     L^{(B)}_m =
  \hat L^{(\hat B)}_m\Big|_{\hat t^{i,p}\mapsto t^{\sgm(i),p},\, \pp{\hat t^{i,p}}\mapsto\pp{t^{\sgm(i),p}}},
  \end{equation}
  here the bijection $\sgm$ is defined in \eqref{involution sgm}.
  Moreover, the above formulae are equivalent to the relationships
  \begin{align}
    \hat a^{(\hat B);i,p;j,q}_m
  =
     a^{(B);\sgm(i),p;\sgm(j),q}_m ,
  \quad
    \hat b^{(\hat B);j,q}_{i,p}
  =
    b^{(B);\sgm(j),q}_{\sgm(i),p},
  \quad
    \hat c^{(\hat B)}_{m;i,p;j,q}
  =
   c^{(B)}_{m;\sgm(i),p;\sgm(j),q}  \label{relation of Vira coef}
  \end{align}
  between the Virasoro coefficients.
\end{rmk}

At the end of this subsection,
we provide the explicit formulae of $L^{(B)}_m$ for $m=-1$, $0,1,2$,
which have the same expressions as the common case given in \cite{FM-Virasoro}
(see also in \cite{GFM1}):
\begin{align}
  L^{(B)}_{-1}=&\,
  \sum_{p\in\bbZ}
      t^{i,p}\pp{t^{i,p-1}}
  +\frac12\eta_{\afa\beta}t^{\afa,0}t^{\beta,0}.
\label{explicit L-1}
\\
  L^{(B)}_0=&\,
    \sum_{p\in\bbZ}
      \left(
        p+\mu_i+\frac12
      \right)
      t^{i,p}\pp{t^{i,p}}
    +\frac12 \sum_{p,q\in\bbZ}
        (-1)^p
        \left(
          \widetilde{\eta}_B\widetilde{R}_{B;p+q+1}
        \right)_{ij} t^{i,p}t^{j,q} \notag \\
&
  +\frac14\mathrm{tr}
     \left(
       \frac14-\mu^2
     \right)\cdot\boldsymbol 1.
\label{explicit L-0}
\\
  L^{(B)}_1=&\,
    \frac12\left(\frac14-\mu^2\right)_\veps^\afa
    \eta^{\veps\beta}
    \frac{\p^2}{\p t^{\afa,0}\p t^{\beta,0}}\notag\\
  &
    +\sum_{p\in\bbZ}
       \left(
         p+\mu_i+\frac12
       \right)
       \left(
         p+\mu_i+\frac32
       \right)
       t^{i,p}\pp{t^{i,p+1}}
\notag\\
  &+2\sum_{p\in\bbZ,\, s\ge 1}
        (p+\mu_i+1)
        \left(\widetilde R_{B;s}\right)_i^\afa
        t^{i,p}\pp{t^{\afa,p+1-s}}
\notag\\
  &
    +\sum_{p\in\bbZ,\, s\geq 2}
      \left(
        (\widetilde{R}^2_B)_{[s]}
      \right)_i^\afa
      t^{i,p}\pp{t^{\afa,p+1-s}}
\notag\\
  &
    +\sum_{p,q\in\bbZ}
       (-1)^{p+1}
       (p+\mu_i+1)
       \left(
         \widetilde{\eta}_B\widetilde{R}_{B;p+q+2}
       \right)_{ij}
      t^{i,p}t^{j,q}
\notag\\
  &
    +\frac12\sum_{p,q\in\bbZ}
       (-1)^p
       \left(
         \widetilde{\eta}_B
        (
          \widetilde{R}_B^2
        )_{[p+q+2]}
       \right)_{ij}
    t^{i,p}t^{j,q}.
\\
  L^{(B)}_2=&\,
    \frac12
    \left(-3\mu_\afa^2+3\mu_\afa+\frac14\right)
    (R_1)^\afa_\gamma\eta^{\gamma\beta}
    \frac{\p^2}{\p t^{\afa,0}\p t^{\beta,0}}
\notag\\
  &
    +
      \left(
        \frac12-\mu_\afa
      \right)
      \left(
        \frac32-\mu_\afa
      \right)
      \eta^{\afa\beta}
      \left(
        \frac12-\mu_\beta
      \right)
      \frac{\p^2}{\p t^{\afa,0}\p t^{\beta,1}}
\notag\\
  &
    +\sum_{p\in\bbZ}
       \left(
         p+\mu_i+\frac12
       \right)
       \left(
         p+\mu_i+\frac32
       \right)
       \left(
         p+\mu_i+\frac52
       \right)
       t^{i,p}\pp{t^{i,p+2}}
\notag\\
  &
    +\sum_{p\in\bbZ,\,s\ge 1}
         \left[
          3\left(
            p+\mu_i+\frac32
          \right)^2 - 1
        \right]
        (\widetilde R_{B;s})^\afa_i
        t^{i,p}\pp{t^{\afa,p+2-s}}
\notag\\
  &
   +\sum_{p\in\bbZ,\, s\geq 2}
     3\left(p+\mu_i+\frac 32\right)
      \left(
        (\widetilde{R}^2_B)_{[s]}
      \right)_i^\afa
      t^{i,p}\pp{t^{\afa,p+2-s}}
\notag\\
  &
   +\sum_{p\in\bbZ,\, s\geq 3}
      \left(
        (\widetilde{R}_B^3)_{[s]}
      \right)_i^\afa
      t^{i,p}\pp{t^{\afa,p+2-s}}
\notag\\
  &
    +\frac12
     \sum_{p,q\in\bbZ}(-1)^p
       \Bigg[
         \widetilde{\eta}_B(\widetilde R^3_B)_{[p+q+3]}
        -3\left(p+\mu_i+\frac32\right)
          \widetilde{\eta}_B(\widetilde R^2_B)_{[p+q+3]}
\notag\\
  &\quad
    +\left(
       3\left(p+\mu_i+\frac 32\right)^2-1
     \right)\widetilde{\eta}_B(\widetilde R_B)_{[p+q+3]}
    \Bigg]_{ij}
  t^{i,p}t^{j,q}.
       \label{explicit L2}
\end{align}

\subsection{Virasoro symmetries}

For a generalized Frobenius manifold $\gfme$,
a family of Virasoro symmetries $\{\frac{\p}{\p s_m}\}_{m\geq -1}$
of the tau-cover of the Principal Hierarchy \eqref{PH} is given by
\begin{align}
  \pfrac{f}{s_m}&=
    a_m^{IJ}f_If_J+b_{m;J}^It^Jf_I+c_{m;IJ}t^It^J, \label{vir sym 1}\\
  \pfrac{f_I}{s_m}&=
    \pp{t^I}\pfrac{f}{s_m}=(2a_m^{KL}\Omg_{IK}+b^L_{m;I})f_L
    +(b^K_{m;L}\Omg_{IK}+2 c_{m;IL})t^L, \label{vir sym 2}\\
  \pfrac{v^\afa}{s_m}
&=
  \eta^{\afa\beta}\p_x\pp{t^{\beta,0}}\pfrac{f}{s_m}, \label{vir sym 3}
\end{align}
details of which can be found in \cite{GFM2,GFM1} (see also in \cite{normal-form}),
here the tau structure $\{\Omg_{IJ}\}$ is constructed in
\eqref{240609-2127-1}--\eqref{240609-2127-2},
and the Virasoro coefficients $a_m^{IJ}, b_{m;J}^I$ and $c_{m;IJ}$
are given as in \eqref{Lm-VirCoef}.
We emphasize that although the tau-cover of the Principal Hierarchy of $M$
does not involve the Legendre-extended flows $\pp{t^{B,p}}$,
the uppercase Latin indices $I,J,K,L$ in \eqref{vir sym 1}--\eqref{vir sym 3}
are still assumed to range over the larger index set $\mcalI_B$ \eqref{index set IB}
for a fixed invertible quasi-homogeneous Legendre field $B$.

From \eqref{vir sym 1}--\eqref{vir sym 3} it follows that
\begin{align*}
  \pfrac{v^\afa}{s_m}
=&\,
  \eta^{\afa\beta}
  \Big(
    2a_m^{KL}(\p_x\Omg_{\beta,0;K})f_L
   +(2a_m^{KL}\Omg_{\beta,0;K}+b_{m;\beta,0}^L)\theta_L
\\
&\,
   +b_{m;L}^K(\p_x\Omg_{\beta,0;K})t^L
   +(b_{m;0,0}^K\Omg_{\beta,0;K}
      +2c_{m;\beta,0;0,0}
     )
   \Big),
\end{align*}
where we recall that $\theta_L=\Omg_{0,0;L}$ and $x=t^{0,0}$.
On the other hand, it was proved in \cite{GFM1} that \eqref{vir sym 3} can be alternatively represented in the form
\[
  \pfrac{v^\afa}{s_m} = (E^{m+1})^\afa
+
  \text{terms involving time variables $t^I$ or 1-point functions $f_I$},
\]
where $E$ is the Euler vector field of $M$. Therefore we obtain
\begin{align}
  (E^{m+1})^\afa &=
\eta^{\afa\beta}
  \Big(
    (2a_m^{KL}\Omg_{\beta,0;K}+b_{m;\beta,0}^L)\theta_L
   + (b_{m;0,0}^K\Omg_{\beta,0;K}
      +2c_{m;\beta,0;0,0}
     )
  \Big), \label{E^m+1}
\\
  \pfrac{v^\afa}{s_m} &=\, (E^{m+1})^\afa
+
  \eta^{\afa\beta}
  (\p_x\Omg_{\beta,0;K})
  \left(
    2a_m^{KL}f_L
   +b^K_{m;L}t^L
  \right)
\notag \\
&=
  (E^{m+1})^\afa
+
  \pfrac{v^\afa}{t^K}
  \left(
    2a_m^{KL}f_L
   +b^K_{m;L}t^L
  \right).  \label{Vir-sym of v^afa}
\end{align}
It was also proved in \cite{GFM1} that the relations
\[
  \left[
    \pp{s_m},\pp{t^I}
  \right]f
 =\left[
    \pp{s_m},\pp{t^I}
  \right]f_J
 =\left[
    \pp{s_m},\pp{t^I}
  \right]v^\afa
 =0
\]
hold true for 
$m\geq -1$, $1\leq\afa\leq n$.
However, $\left[
    \pp{s_m},\pp{t^{B,p}}
  \right]f_{J}$ and $\left[
    \pp{s_m},\pp{t^{I}}
  \right]f_{B,p}$ do not necessarily vanish.
Now we are to construct a family of so called \textit{Legendre-extended Virasoro symmetries}
$\left\{\pp{s^{(B)}_m}\right\}_{m\geq -1}$ of the Legendre-extended tau-cover \eqref{tau cover} which
commute with all the flows $\pp{t^I}$ of \eqref{extended B-PH-2}.
\begin{defn}
The family of Legendre-extended Virasoro symmetries $\left\{\pp{s^{(B)}_m}\right\}_{m\geq -1}$
of the Legendre-extended tau-cover \eqref{tau cover} are defined by
\begin{align}
  \pfrac{f}{s_m^{(B)}}&=
    a_m^{(B);IJ}f_If_J+b_{m;J}^{(B);I}t^Jf_I+c_{m;IJ}^{(B)}t^It^J, \label{B-vir sym 1}\\
  \pfrac{f_I}{s_m^{(B)}}&=
    \pp{t^I}\pfrac{f}{s_m^{(B)}}=(2a_m^{(B);KL}\Omg_{IK}+b^{(B);L}_{m;I})f_L
    +(b^{(B);K}_{m;L}\Omg_{IK}+2 c^{(B)}_{m;IL})t^L, \label{B-vir sym 2}\\
  \pfrac{v^\afa}{s_m^{(B)}}
&=
  \eta^{\afa\beta}\p_x\pp{t^{\beta,0}}\pfrac{f}{s_m},  \label{B-vir sym 3}
\end{align}
here the Legendre-extended Virasoro coefficients
$a_m^{(B);IJ}, b_{m;J}^{(B);I}$ and $c_{m;IJ}^{(B)}$ are given by \eqref{L(B)m-VirCoef}.
\end{defn}

From \eqref{E^m+1} and the relations \eqref{Vir coef rel 1}--\eqref{Vir coef rel 4}
between the two types of Virasoro coefficients, equation
\eqref{B-vir sym 3} can be reformulated as
\begin{equation}\label{v^afa sBm}
  \pfrac{v^\afa}{s_m^{(B)}} = (E^{m+1})^\afa
+
  \pfrac{v^\afa}{t^K}
  \left(
    2a_m^{(B);KL}f_L
   +b^{(B);K}_{m;L}t^L
  \right),
\end{equation}
which is analogous to \eqref{Vir-sym of v^afa}.

\begin{thm}
The above-mentioned flows $\left\{\pp{s^{(B)}_m}\right\}_{m\geq -1}$ satisfy
the commutation relations
\begin{equation}
\Big[\pp{s^{(B)}_m}, \pp{t^I}\Big]=0
\end{equation}
for $m\geq-1$ and $I\in\{0,B;1,2,...,n\}\times\bbZ$.
\end{thm}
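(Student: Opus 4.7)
The plan is to verify that the commutator $[\p/\p s^{(B)}_m,\p/\p t^I]$ annihilates each of the three dependent variables $f$, $f_J$, $v^\al$ of the Legendre-extended tau-cover \eqref{tau cover}, with $I$ ranging over the whole set $\mcalI_B$ (so in particular over the new Legendre indices $(B,p)$). Commutativity on $f$ is essentially tautological: applying $\p_{t^I}$ to the quadratic expression \eqref{B-vir sym 1}, with $\p_{t^I}f_K=\Omg_{IK}$ and $\p_{t^I}t^K=\delta^K_I$, reproduces $\p_{s^{(B)}_m}f_I$ as defined by \eqref{B-vir sym 2}, once the symmetries $a_m^{(B);IJ}=a_m^{(B);JI}$ and $c^{(B)}_{m;IJ}=c^{(B)}_{m;JI}$ are invoked. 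The case of $v^\al$ reduces to the case of $f_J$ via the tau-cover identity $v^\al=\eta^{\al\beta}\p_x\p_{t^{\beta,0}}f$, cf.\ \eqref{zh-09}: indeed,
\[
  \p_{s^{(B)}_m}\p_{t^I}v^\al=\eta^{\al\beta}\p_x\,\p_{s^{(B)}_m}\Omg_{\beta,0;I},\qquad
  \p_{t^I}\p_{s^{(B)}_m}v^\al=\eta^{\al\beta}\p_x\,\p_{t^{\beta,0}}\p_{s^{(B)}_m}f_I
\]
(the second using the commutativity on $f$ just established), so it suffices to establish $\p_{s^{(B)}_m}\Omg_{IJ}=\p_{t^I}\p_{s^{(B)}_m}f_J$ for all $I,J\in\mcalI_B$, which is precisely the commutator on $f_J$.

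The heart of the proof is therefore the identity $\p_{t^I}\p_{s^{(B)}_m}f_J=\p_{s^{(B)}_m}\Omg_{IJ}$. I would expand the left-hand side by applying $\p_{t^I}$ to \eqref{B-vir sym 2}, using the tau-symmetry $\p_{t^I}\Omg_{JK}=\p_{t^K}\Omg_{IJ}$; and expand the right-hand side as $\p_\gamma\Omg_{IJ}\cdot\p_{s^{(B)}_m}v^\gamma$ via \eqref{v^afa sBm}, recognising $(E^{m+1})^\gamma\p_\gamma\Omg_{IJ}=\mcalL_{E^{m+1}}\Omg_{IJ}$. The terms linear in $f_L$ and in $t^L$ match on the nose, and what remains is the purely algebraic identity
\[
  \mcalL_{E^{m+1}}\Omg_{IJ}
 =2a^{(B);KL}_m\Omg_{IK}\Omg_{JL}
  +b^{(B);L}_{m;J}\Omg_{IL}+b^{(B);K}_{m;I}\Omg_{JK}+2c^{(B)}_{m;IJ}.
\]
For $m=-1$ and $m=0$ this reduces, via the explicit formulae \eqref{explicit L-1}--\eqref{explicit L-0}, precisely to the tau-structure relations \eqref{p0Omg} and \eqref{p1Omg} already established. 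For general $m\ge1$ I would propagate by induction, using the Virasoro commutation relations $[L^{(B)}_m,L^{(B)}_{-1}]=(m+1)L^{(B)}_{m-1}$ together with the $m=-1,0$ cases; alternatively, one can repeat the operator-theoretic derivation of \cite{FM-Virasoro,GFM1} verbatim, replacing the un-extended monodromy data $(\widetilde\eta,\widetilde\mu,\widetilde R)$ throughout by the Legendre-extended data $(\widetilde\eta_B,\widetilde\mu_B,\widetilde R_B)$ and the Heisenberg realization \eqref{Heisenberg operator}.

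The main technical obstacle is the bookkeeping required to check that the ``new'' terms in $L^{(B)}_m-L_m$---those containing $t^{B,p}$ or $\p/\p t^{B,p}$, with coefficients built from the blocks $T_s$, $T_s^\dag$, $A_s$ of $\widetilde R_{B;s}$---exactly absorb the nonvanishing commutators $[\p_{s_m},\p_{t^{B,p}}]f_J$ and $[\p_{s_m},\p_{t^I}]f_{B,p}$ that were flagged after \eqref{Vir-sym of v^afa}. The algebraic inputs that make this absorption work are the intertwining relations $(\widetilde R_{B;s})^\rmT\widetilde\eta_B=(-1)^{s+1}\widetilde\eta_B\widetilde R_{B;s}$ and $[\widetilde\mu_B,\widetilde R_{B;s}]=s\widetilde R_{B;s}$ recorded in Definition \ref{def:complete data}, which are precisely what guarantees closure of the Virasoro algebra in the enlarged Heisenberg representation $\widetilde{\mcalH}^{(B)}$.
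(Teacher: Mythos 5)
Your overall strategy coincides with the paper's: reduce the commutativity on $f$ and on $v^\al$ to the single identity $\p_{t^I}\bigl(\p f_J/\p s^{(B)}_m\bigr)=\p_{s^{(B)}_m}\Omg_{IJ}$, expand both sides using the tau-symmetry and \eqref{v^afa sBm}, and arrive at the purely algebraic identity \eqref{pEOmg} for $\mcalL_{E^{m+1}}\Omg_{IJ}$ in terms of the Legendre-extended Virasoro coefficients; your identification of the $m=-1,0$ cases with the tau-structure relations \eqref{p0Omg}--\eqref{p1Omg} is exactly the paper's argument.

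The one genuine flaw is your induction for $m\ge 1$. The relation $[L^{(B)}_m,L^{(B)}_{-1}]=(m+1)L^{(B)}_{m-1}$ expresses $L^{(B)}_{m-1}$ in terms of $L^{(B)}_m$, so it propagates the identity \emph{downward} in $m$; starting only from the base cases $m=-1,0$ you cannot reach any $m\ge 1$ this way, since $L^{(B)}_{-1}$ and $L^{(B)}_{0}$ span a two-dimensional subalgebra and generate nothing beyond themselves. What is actually required --- and what the paper does --- is to verify \eqref{pEOmg} directly for $m=1$ and $m=2$ as well (by adapting the computations of $\mcalL_{E^2}\Omg_{0,p;0,q}$ and $\mcalL_{E^3}\Omg_{0,p;0,q}$ in Appendix B of \cite{GFM1} to general $I,J\in\mcalI_B$), after which all $m\ge 3$ follow from relations such as $[L^{(B)}_1,L^{(B)}_m]=(1-m)L^{(B)}_{m+1}$. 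Your fallback of repeating the operator-theoretic derivation of \cite{FM-Virasoro,GFM1} with the extended data $(\widetilde{\eta}_B,\widetilde{\mu}_B,\widetilde{R}_B)$ would supply precisely these missing base cases, so the defect lies in the stated induction scheme rather than in the method; but as written the induction does not close.
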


\begin{proof}
The proof is similar to the one for the Virasoro symmetries of the tau-cover of the usual Principal Hierarchy of a generalized Frobenius manifold given in  \cite{GFM1}.
From definitions \eqref{B-vir sym 1}--\eqref{B-vir sym 3} we see that
$\left[\pp{s^{(B)}_m}, \pp{t^I}\right]f=0$ hold true trivially;
and if $\left[\pp{s^{(B)}_m}, \pp{t^I}\right]f_J=0$ hold true,
then it is easy to verify the validity of $\left[\pp{s^{(B)}_m}, \pp{t^I}\right]v^\afa=0$.
Therefore we only need to verify the identities
\begin{equation}\label{240616-1649}
  \Big[\pp{s^{(B)}_m}, \pp{t^I}\Big]f_J=0,\quad I,J\in\{0,B;1,2,...,n\}\times\bbZ.
\end{equation}
By using \eqref{tau cover},
\eqref{B-vir sym 1}--\eqref{B-vir sym 3} and \eqref{v^afa sBm}, we obtain
\begin{align*}
  &\left(
    \pp{s_m^{[B]}}\circ\pp{t^I}
  \right)f_J
=
  \pp{s_m^{(B)}}\Omg_{IJ}
=
  (\p_\afa\Omg_{IJ})\pfrac{v^\afa}{s_m^{(B)}} \\
&\qquad =
  (\p_\afa\Omg_{IJ})
  \left(
    (E^{m+1})^\afa
   +\pfrac{v^\afa}{t^K}
    (2a_m^{(B);KL}f_L+b_{m;L}^{(B);K}t^L)
  \right) \\
&\qquad =
  \mcalL_{E^{m+1}}\Omg_{IJ}
 +\pfrac{\Omg_{IJ}}{t^K}
   \left(2a_m^{(B);KL}f_L+b_{m;L}^{(B);K}t^L\right),
\\
 & \left(
    \pp{t^I}\circ\pp{s_m^{[B]}}
  \right)f_J
=
  \pp{t^I}
  \left(
    (2a_m^{(B);KL}\Omg_{JK}+b^{(B);L}_{m;J})f_L
    +(b^{(B);K}_{m;L}\Omg_{JK}+2 c^{(B)}_{m;JL})t^L
  \right) \\
&\qquad=
  2a_{m}^{(B);KL}
  \pfrac{\Omg_{JK}}{t^I}f_L
 +\left(2a_m^{(B);KL}\Omg_{JK}+b_{m;J}^{(B);L}\right)\Omg_{IL} \\
&\qquad\quad
  +b_{m;L}^{(B);K}
   \pfrac{\Omg_{JK}}{t^I}t^L
  +b_{m;I}^{(B);K}\Omg_{JK}
  +2c_{m;IJ}^{(B)}.
\end{align*}
We also note that $\pfrac{\Omg_{IJ}}{t^K}=\frac{\p^3f}{\p t^I\p t^J\p t^K}
=\pfrac{\Omg_{JK}}{t^I}$, therefore \eqref{240616-1649} is equivalent to
\begin{equation}\label{pEOmg}
  \mcalL_{E^{m+1}}\Omg_{IJ}
=
  2a_{m}^{(B);KL}\Omg_{IK}\Omg_{LJ}
 +\left(
   b^{(B);K}_{m;I}\Omg_{KJ}
  +b^{(B);K}_{m;J}\Omg_{IK}
  \right)
 +2c^{(B)}_{m;IJ}.
\end{equation}
It suffices to verify the validity of the above identities for $m=-1,0,1,2$.
The identities for $m=-1,0$ have already been proved in \eqref{p0Omg}--\eqref{p1Omg};
for $m=1,2$, we remark that the way to calculate $\mcalL_{E^2}\Omg_{0,p;0,q}$ and
$\mcalL_{E^3}\Omg_{0,p;0,q}$ in the Appendix B of \cite{GFM1}
can be applied to calculate $\mcalL_{E^2}\Omg_{IJ}, \mcalL_{E^3}\Omg_{IJ}$
for $I,J\in\mcalI_B$,
and we omit the details here.
The theorem is proved.
\end{proof}

\subsection{Linearization of the Legendre-extended Virasoro symmetries}

Let $\gfme$ be a generalized Frobenius manifold.
Consider a \textit{quasi-Miura transformation} \cite{normal-form,GFM2,GFM1}
for the Principal Hierarchy \eqref{PH} of the form
\begin{equation}\label{quasi-Miura transf}
  v^\afa\mapsto w^\afa=v^\afa+\veps^2\eta^{\afa\gamma}\p_x\p_{t^{\gamma,0}}\Delta\mcalF,
\end{equation}
where
\begin{equation}\label{zh-06}
\Delta\mcalF=\sum_{k\geq 1}\veps^{k-2}\mcalF^{[k]},\quad
\mcalF^{[k]}=\mcalF^{[k]}\bigl(v,v_x,v_{xx},...,v^{(N_k)}\bigr)
\end{equation}
and $N_k$ are certain non-negative integer numbers.
By representing the Principal Hierarchy in terms of the unknown functions $w^1,\dots,w^n$,
we get a deformation of this integrable hierarchy.

We say that the transformation \eqref{quasi-Miura transf}
linearizes the Virasoro symmetries \eqref{vir sym 1}--\eqref{vir sym 3}
if the actions of $\left\{\pp{s_m}\right\}$ on the \textit{tau function}
\begin{equation}\label{tau=exp F}
\tau= \exp(\mcalF)=\exp({\veps^{-2}f+\Delta\mcalF})
\end{equation}
of the deformed integrable hierarchy are given by
\begin{equation}\label{linearization condition}
  \frac{\p\tau}{\p s_m}=L_m\Bigl(\veps^{-1}\bft,\veps\pp{\bft}\Bigr)\tau,\quad m\geq -1,
\end{equation}
here the Virasoro operators $L_m$ are defined in \eqref{Vir-1}.
We can represent the right hand side of \eqref{linearization condition} in the form
\begin{align}
&L_m\Bigl(\veps^{-1}\bft,\veps\pp{\bft}\Bigr)\tau
=
  \left[
    \mcalD_m\Delta\mcalF+\veps^{-2}\pfrac{f}{s_m}
    +a_m^{IJ}\Omg_{IJ}\right. \notag\\
  &\qquad  +\veps^2 a_m^{IJ}
  \left.
      \left(
        \frac{\p^2\Delta\mcalF}{\p t^I\p t^J}
       +\pfrac{\Delta\mcalF}{t^I}\pfrac{\Delta\mcalF}{t^J}
      \right)
      +\frac{\delta_{m,0}}{4}\mathrm{tr}
          \left(\frac 14-\mu^2\right)
  \right]\tau,  \label{230831-1644}
\end{align}
where the operators $\mcalD_m$ for $m\geq -1$ are defined by
\begin{equation}\label{def of mcalD m}
  \mcalD_m= 2a_m^{IJ}f_I\pp{t^J}+b^J_{m;I}t^I\pp{t^J}.
\end{equation}
As it is shown in \cite{GFM1} (see also in \cite{normal-form}),
for any functions $\Delta\mcalF$ of the form \eqref{zh-06}
we have the following identities:
\begin{align}\label{240617-1905-1}
&\mcalD_m\Delta\mcalF = K_m \Delta\mcalF+\pfrac{\Delta\mcalF}{s_m},\\
&\sum_{m\geq -1}\frac{1}{\lmd^{m+2}}
\left[
a_m^{IJ}\Omg_{IJ} + \veps^2 a_m^{IJ}
\left(
        \frac{\p^2\Delta\mcalF}{\p t^I\p t^J}
       +\pfrac{\Delta\mcalF}{t^I}\pfrac{\Delta\mcalF}{t^J}
      \right)
\right]
\notag\\
&\qquad=
   \frac{\veps^2}{2}
 \sum_{k,\ell\geq 0}
 \left(
   \pfrac{\Delta\mcalF}{v^{\afa,k}}
   \pfrac{\Delta\mcalF}{v^{\beta,\ell}}
  +\frac{\p^2\Delta\mcalF}{\p v^{\afa,k}\p v^{\beta,\ell}}
 \right)
 (\p_x^{k+1}\p^\afa p_\sigma)
 G^{\sigma\rho}
 (\p_x^{\ell+1}\p^\beta p_\rho)
\notag\\
&\qquad\quad
 +\frac{\veps^2}{2}
  \sum_{k\geq 0}
  \pfrac{\Delta\mcalF}{v^{\afa,k}}
  \p_x^{k+1}
  \left(
    \grad_\eta\left(\pfrac{p_\sigma}{\lmd}\right)
    \cdot
    \grad_\eta\left(\pfrac{p_\rho}{\lmd}\right)
    \cdot v_x
  \right)^\afa
  G^{\sigma\rho}\notag\\
&\qquad\quad+
 \frac12 G^{\afa\beta}
 \pfrac{p_\afa}{\lmd}*\pfrac{p_\beta}{\lmd},  \label{240617-1905-2}
\end{align}
where the vector fields
$K_m=\sum\limits_{s\geq 0}K_m^{\gamma,s}\pp{v^{\gamma,s}}$ on the jet space $J^\infty(M)$
are defined by the following generating series:
\[
\sum_{m\geq -1}\frac{K_m^{\gamma,s}}{\lmd^{m+2}}
=
        (s+1)\p_x^s\left(\frac{1}{E-\lmd e}\right)^\gamma
      +
    \sum_{k=1}^{s}
      k{s+1\choose k+1}
        \left(
          \p_x^{s-k}\p^\gamma p_\afa
        \right)
        G^{\afa\beta}
        \left(
          \pp{\lmd}
          \p_x^k p_\beta
        \right),
\]
$p_1,\dots,p_n$ are any given basis of \textit{periods} of $M$,
$G=(G^{\afa\beta})$ is the associated Gram matrix,
and the \textit{star product} $*$ is defined in \cite{normal-form, GFM1}.
Hence the linearization condition \eqref{linearization condition}
is equivalent to the following \textit{loop equation} \cite{GFM2,GFM1}
\begin{align}
&
  \sum_{s\geq 0}
    \pfrac{\Delta\mcalF}{v^{\gamma,s}}
    (s+1)\p_x^s\left(\frac{1}{E-\lmd e}\right)^\gamma \notag \\
&\qquad+
      \sum_{s\geq 0}
    \pfrac{\Delta\mcalF}{v^{\gamma,s}}
    \sum_{k=1}^{s}
      k{s+1\choose k+1}
        \left(
          \p_x^{s-k}\p^\gamma p_\afa
        \right)
        G^{\afa\beta}
        \left(
          \pp{\lmd}
          \p_x^k p_\beta
        \right)
\notag\\
&\quad=
 \frac{\veps^2}{2}
 \sum_{k,\ell\geq 0}
 \left(
   \pfrac{\Delta\mcalF}{v^{\afa,k}}
   \pfrac{\Delta\mcalF}{v^{\beta,\ell}}
  +\frac{\p^2\Delta\mcalF}{\p v^{\afa,k}\p v^{\beta,\ell}}
 \right)
 (\p_x^{k+1}\p^\afa p_\sigma)
 G^{\sigma\rho}
 (\p_x^{\ell+1}\p^\beta p_\rho)
\notag\\
&\qquad
 +\frac{\veps^2}{2}
  \sum_{k\geq 0}
  \pfrac{\Delta\mcalF}{v^{\afa,k}}
  \p_x^{k+1}
  \left(
    \left(\grad_\eta\pfrac{p_\sigma}{\lmd}\right)
    \cdot
    \left(\grad_\eta\pfrac{p_\rho}{\lmd}\right)
    \cdot v_x
  \right)^\afa
  G^{\sigma\rho}
\notag
\\&\qquad
+
 \frac12 G^{\afa\beta}
 \pfrac{p_\afa}{\lmd}*\pfrac{p_\beta}{\lmd}-
 \frac{1}{4\lmd^2}
 \mathrm{tr}
 \left(
   \frac14-\mu^2
 \right)\label{loop equation-2308}
\end{align}
of the generalized Frobenius manifold $M$ with the unknown function $\Delta\mcalF$.
The above equation is required to hold true identically with respect to the formal parameter $\lmd$.
It was proved in \cite{GFM2} that if the generalized Frobenius manifold $M$ is semisimple,
then there exists a unique (up to adding constant terms) solution $\Delta\mcalF$ to the loop equation \eqref{loop equation-2308},
which is of the form
\begin{equation}\label{Delta F}
  \Delta\mcalF =
  \sum_{g\geq 1}
    \veps^{2g-2}
    \mcalF_g\bigl(v,v_x,...,v^{(3g-2)}\bigr).
\end{equation}
Here the functions $\mcalF_g$ are called the genus $g$ free energies of $M$.
The quasi-Miura transformation \eqref{quasi-Miura transf} induced by the solution
$\Delta\mcalF$ to \eqref{loop equation-2308} yields a deformation of the Principal Hierarchy \eqref{PH}
which is called the \textit{topological deformation} of the Principal Hierarchy of the generalized Frobenius manifold $M$.

Now let $B$ be an invertible quasi-homogeneous Legendre field of $M$,
we can also impose the linearization condition
\begin{equation}\label{B-linearization condition}
  \frac{\p\tau}{\p s^{(B)}_m}
  =L^{(B)}_m\Bigl(\veps^{-1}\bft,\veps\pp{\bft}\Bigr)\tau,\quad m\geq -1
\end{equation}
of the Legendre-extended Virasoro symmetries \eqref{B-vir sym 1}--\eqref{B-vir sym 3} on the tau function \eqref{tau=exp F},
which are analogous to \eqref{linearization condition}.
Here $L_m^{(B)}$ and $\pp{s_m^{(B)}}$ are introduced in \eqref{Vir-B1} and
\eqref{B-vir sym 1}--\eqref{B-vir sym 3} respectively.
\begin{thm}\label{Thm:B-loop eqn}
  The Legendre-extended analogue of the linearization condition \eqref{B-linearization condition}
is equivalent to the loop equation \eqref{loop equation-2308}.
\end{thm}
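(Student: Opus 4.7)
The plan is to adapt the derivation of the standard loop equation \eqref{loop equation-2308} from the linearization condition \eqref{linearization condition}, and to show that every additional $(B,p)$-indexed contribution appearing in \eqref{B-linearization condition} but absent in \eqref{linearization condition} cancels automatically between the two sides, so that the resulting generating-function identity in $\lambda$ is exactly \eqref{loop equation-2308}. The key structural inputs are \eqref{Vir coef rel 0}--\eqref{Vir coef rel 4}: the quadratic-in-derivative coefficients satisfy $a_m^{(B);IJ}=a_m^{IJ}$ and vanish whenever either index lies in $\{B\}\times\bbZ$, while the only genuine extra $b$-type contributions in $L_m^{(B)}$ are of the restricted form $t^{B,p}\pp{t^{\afa,q}}$ or $t^{B,p}\pp{t^{B,q}}$.

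First, I would expand $L_m^{(B)}\bigl(\veps^{-1}\bft,\veps\pp{\bft}\bigr)\tau$ using \eqref{L(B)m-VirCoef} along the same lines as the manipulation producing \eqref{230831-1644}, obtaining
\[
  L_m^{(B)}\tau = \left[\mcalD_m^{(B)}\Delta\mcalF + \veps^{-2}\pfrac{f}{s_m^{(B)}} + a_m^{(B);IJ}\Omg_{IJ} + \veps^2 a_m^{(B);IJ}\!\left(\frac{\p^2\Delta\mcalF}{\p t^I\p t^J} + \pfrac{\Delta\mcalF}{t^I}\pfrac{\Delta\mcalF}{t^J}\right) + \tfrac{\delta_{m,0}}{4}\mathrm{tr}\!\left(\tfrac14-\mu^2\right)\right]\!\tau,
\]
with $\mcalD_m^{(B)}$ the obvious Legendre-extended analog of \eqref{def of mcalD m} and all indices ranging over $\mcalI_B$. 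The $\veps^{-2}$ terms in \eqref{B-linearization condition} match immediately via the definition \eqref{B-vir sym 1} of $\pfrac{f}{s_m^{(B)}}$.

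Second, since $\Delta\mcalF$ depends only on the jet variables $v^{\afa,k}$ by \eqref{zh-06}, all of its time derivatives are computed by chain rule through the Legendre-extended Principal Hierarchy \eqref{extended B-PH-2}. Combining this with \eqref{B-vir sym 3}, the tau-structure identities \eqref{p0Omg}--\eqref{p1Omg}, and the fact that $\Rtil_B$ restricts to $\Rtil$ on the subblock indexed by $\{0;1,\dots,n\}$ (Remark \ref{rmk:240610-notation}), I would establish the Legendre-extended analogs of \eqref{240617-1905-1}--\eqref{240617-1905-2}. In particular,
\[
  \mcalD_m^{(B)}\Delta\mcalF = K_m^{(B)}\Delta\mcalF + \pfrac{\Delta\mcalF}{s_m^{(B)}},
\]
for a jet-space vector field $K_m^{(B)}$ whose generating series in $\lambda^{-m-2}$ is assembled from the periods $p_\afa$ of $M$ exactly as in the non-extended case.

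The crux of the argument is then to verify, using \eqref{Vir coef rel 0}--\eqref{Vir coef rel 4}, that the extra $(B,p)$-indexed pieces appearing in $K_m^{(B)}\Delta\mcalF$ and in the generating series of $a_m^{(B);IJ}\Omg_{IJ} + \veps^2 a_m^{(B);IJ}(\p^2_{IJ}\Delta\mcalF + \p_I\Delta\mcalF\,\p_J\Delta\mcalF)$ cancel precisely against the corresponding extras on the LHS coming from $\pfrac{\Delta\mcalF}{s_m^{(B)}}$, so that only the standard generating identity survives. After this cancellation, matching coefficients of $\lambda^{-m-2}$ yields exactly \eqref{loop equation-2308}. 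The main obstacle is the bookkeeping in this cancellation step: every term whose index contains $B$ must be paired with a matching partner on the other side, and the confirmation of the pairings relies on the tight vanishing patterns in \eqref{Vir coef rel 0}--\eqref{Vir coef rel 4}, the quasi-homogeneity identity \eqref{p1Omg}, and the structural identity $\widetilde{R}_{B;s}\big|_{\{0;1,\dots,n\}}=\widetilde{R}_s$ among the monodromy matrices.
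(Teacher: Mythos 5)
Your plan is correct in substance and rests on exactly the right structural facts ($a_m^{(B);IJ}=a_m^{IJ}$ together with the vanishing patterns \eqref{Vir coef rel 0}--\eqref{Vir coef rel 4} for the $b$-coefficients), but the paper organizes the argument more economically than a parallel re-derivation with term-by-term cancellation. Since the $a$-type and constant terms of $L_m^{(B)}$ and $L_m$ literally coincide by \eqref{Vir coef rel 2} and \eqref{Vir coef rel 0}, the paper simply subtracts the two linearization conditions and reduces the whole theorem to the single identity
\begin{equation*}
  \Bigl(\pp{s^{(B)}_m}-\pp{s_m}\Bigr)\Delta\mcalF
  =\bigl(\mcalD_m^{(B)}-\mcalD_m\bigr)\Delta\mcalF ,
\end{equation*}
i.e.\ \eqref{240617-2002}, invoking the already-established identities \eqref{240617-1905-1}--\eqref{240617-1905-2} as a black box rather than re-proving extended analogues of them; in your language, this is precisely the statement $K_m^{(B)}=K_m$, so there are no ``extra $(B,p)$-indexed pieces'' to chase through the generating series at all. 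The one concrete point your sketch leaves implicit, and which is the actual pivot of the verification, is this: both sides of \eqref{240617-2002} are $\sum_{s}\pfrac{\Delta\mcalF}{v^{\afa,s}}$ contracted with $\p_x^s\bigl(\Delta b^{I}_{m;J}t^{J}\pfrac{v^\afa}{t^{I}}\bigr)$ versus $\Delta b^{I}_{m;J}t^{J}\p_x^s\bigl(\pfrac{v^\afa}{t^{I}}\bigr)$, so their difference is produced entirely by $\p_x$ hitting the factor $t^{J}$; since $\p_x t^{J}=\delta_{J,(0,0)}$ and $\Delta b^{I}_{m;0,0}=0$ by \eqref{Vir coef rel 0}--\eqref{Vir coef rel 4}, the difference vanishes. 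Making that single observation explicit would close your argument without any of the heavier bookkeeping you anticipate.
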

\begin{proof}
By straightforward calculation, we obtain the following analogues
of \eqref{230831-1644}--\eqref{def of mcalD m}:
\begin{align}
&\mcalD^{(B)}_m:=2a_m^{(B);IJ}f_I\pp{t^J}+b^{(B);J}_{m;I}t^I\pp{t^J},
  \label{def of mcalD^B m}\\
&L_m^{(B)}\Bigl(\veps^{-1}\bft,\veps\pp{\bft}\Bigr)\tau
=
  \Bigl[
    \mcalD_m^{(B)}\Delta\mcalF+\veps^{-2}\pfrac{f}{s_m^{(B)}} +a_m^{(B);IJ}\Omg_{IJ} \notag \\
    &\quad
    +\veps^2 a_m^{(B);IJ}
      \left(
        \frac{\p^2\Delta\mcalF}{\p t^I\p t^J}
       +\pfrac{\Delta\mcalF}{t^I}\pfrac{\Delta\mcalF}{t^J}
      \right)
      +\frac{\delta_{m,0}}{4}\mathrm{tr}
          \left(\frac 14-\mu^2\right)
  \Bigr]\tau.
  \end{align}
To prove this theorem, it suffices to verify that the identities
\begin{equation}\label{240617-2002}
  \left(
    \pp{s^{(B)}_m} - \pp{s_m}
  \right)\Delta\mcalF
=
  \left(
    \mcalD_m^{(B)} - \mcalD_m
  \right)\Delta\mcalF,\quad m\ge -1
\end{equation}
hold true for all functions $\Delta\mcalF$ on the jet space $J^\infty(M)$,
because of the relations \eqref{Vir coef rel 2} and
\eqref{240617-1905-1}--\eqref{240617-1905-2}. Denote
\[
  \Delta b_{m;J}^I:= b_{m;J}^{(B);I} - b_{m;J}^I,
\]
then from \eqref{Vir coef rel 2}, \eqref{Vir-sym of v^afa} and \eqref{v^afa sBm} we arrive at
\begin{align*}
  \left(
    \pp{s^{(B)}_m} - \pp{s_m}
  \right)\Delta\mcalF
&=
  \sum_{s\geq 0}
    \pfrac{\Delta\mcalF}{v^{\afa,s}}
    \p_x^s
      \left(
        \pfrac{v^\afa}{s^{(B)}_m}
       -\pfrac{v^\afa}{s_m}
      \right)
=
  \sum_{s\geq 0}
    \pfrac{\Delta\mcalF}{v^{\afa,s}}
    \p_x^s
      \left(
        \Delta b_{m;J}^I t^J\pfrac{v^\afa}{t^I}
      \right),
\\
    \left(
    \mcalD_m^{(B)} - \mcalD_m
  \right)\Delta\mcalF
&=
  \Delta b_{m;J}^It^J\pfrac{\Delta\mcalF}{t^I}
=
  \Delta b_{m;J}^It^J
  \sum_{s\geq 0}
    \pfrac{\Delta\mcalF}{v^{\afa,s}}
    \p_x^s\left(\pfrac{v^\afa}{t^I}\right).
\end{align*}
Note that the spatial variable $x$ is always identified with the time variable $t^{0,0}$,
and the relations \eqref{Vir coef rel 0}--\eqref{Vir coef rel 4} imply
\[
  \Delta b_{m;0,0}^J=0,
\]
therefore we arrive at \eqref{240617-2002}. The theorem is proved.
\end{proof}
We remark that the loop equation \eqref{loop equation-2308} depends only on the generalized Frobenius manifold $M$,
and is independent of the choice of the Legendre field $B$.

\subsection{Relationship between the deformed integrable hierarchies}
\label{subsection:full genera}

Let $\gfme$ be a generalized Frobenius manifold,
and $B$ be an invertible quasi-homogeneous Legendre field on $M$.
Denote $(\hat M, \hat B)$ the generalized Legendre transformation of $(M, B)$,
where $\hat B:= B^{-1}$.
By using the relationship between the Legendre-extended tau-covers of
$(M, B)$ and $(\hat M,\hat B)$ that is given in Theorem \ref{thm:tau correspendence},
we apply the following identifications
\begin{equation}
  \hat f = f,\qquad
  \hat f_{i,p} = f_{\sgm(i),p},\quad
  \hat t^{i,p} = t^{\sgm(i),p},\quad
  \pp{\hat t^{i,p}} = \pp{t^{\sgm(i),p}}
\end{equation}
for their Legendre-extended tau-covers,
where $\sgm$ is defined in \eqref{involution sgm}.
Then the relationship \eqref{Viraosro relationship}
between the Legendre-extended Virasoro operators reads
\begin{equation}\label{zh-11}
  \hat L_m^{(\hat B)}
  \left(
    \veps^{-1}\hat\bft, \veps\pp{\hat\bft}
  \right)
=
   L_m^{(B)}
  \left(
    \veps^{-1}\bft, \veps\pp{\bft}
  \right),\quad m\ge -1,
\end{equation}
moreover, from \eqref{relation of Vira coef} and the definitions of the Legendre-extended Virasoro symmetries
it is clear that
\begin{equation}\label{zh-12}
 \pp{\hat s^{(\hat B)}_m} = \pp{s^{(B)}_m},\quad m\ge -1.
\end{equation}

The quasi-Miura transformation \eqref{quasi-Miura transf}, \eqref{zh-06}
for the Principal Hierarchy \eqref{PH} of $M$ yields a quasi-Miura transformation
\begin{equation}\label{zh-07}
  \hat{v}^\afa\mapsto \hat{w}^\afa=\hat{v}^\afa+\veps^2\hat{\eta}^{\afa\gamma}\p_{\hat{x}}\p_{\hat{t}^{\gamma,0}}\Delta\hat{\mcalF},
\end{equation}
for the Principal Hierarchy of $\hat M$,
where $\hat{v}^1,\dots, \hat{v}^n$ are related with $v^1,\dots,v^n$ via \eqref{zh-09}, \eqref{zh-08}, and
\begin{align}
&\Delta\hat\mcalF=\sum_{k\geq 1}\veps^{k-2}\hat\mcalF^{[k]}\bigl(\hat{v},\hat{v}_{\hat{x}},\hat{v}_{\hat{x}\hat{x}},...,\hat{v}^{(N_k)}\bigr),\\
&\hat\mcalF^{[k]}\bigl(\hat{v},\hat{v}_{\hat{x}},\hat{v}_{\hat{x}\hat{x}},...,\hat{v}^{(N_k)}\bigr)=\mcalF^{[k]}\bigl(v,v_x,v_{xx},...,v^{(N_k)}\bigr).\label{zh-10}
\end{align}
We note that the $x$-derivatives of $v^\al$ are related with the $\hat{x}$-derivatives of $\hat{v}^\al$ via $\frac{\p}{\p\hat{x}}=\frac{\p}{\p t^{B,0}}$. For example, from \eqref{240701-1622} we have
\[v_x=\hat{B}\cdot \hat{v}_{\hat x}.\]
From \eqref{zh-11}, \eqref{zh-12} it follows that the quasi-Miura transformation
\eqref{quasi-Miura transf}, \eqref{zh-06} linearizes the Legendre-extended Virasoro symmetries of $(M, B)$ if and only if the quasi-Miura transformation
\eqref{zh-07}--\eqref{zh-10}  linearizes the Legendre-extended Virasoro symmetries of $(\hat M, \hat{B})$, i.e.,
\begin{equation*}
  \pfrac{\tau}{s_m^{(B)}}=L_m^{(B)}
  \left(
    \veps^{-1}\bft, \veps\pp{\bft}
  \right)\tau
\quad
\text{if and only if}\quad
  \pfrac{\tau}{\hat s_m^{(\hat B)}}= \hat L_m^{(\hat B)}
  \left(
    \veps^{-1}\hat\bft, \veps\pp{\hat\bft}
  \right)\tau
\end{equation*}
for $m\geq -1$,
where the tau function is given by
\[\tau:=\exp(\veps^{-2}f+\Delta\mcalF)
=\exp(\veps^{-2}\hat f+\Delta\hat\mcalF).\]
From Theorem \ref{Thm:B-loop eqn} we know that
the above linearization conditions are equivalent to the loop equations
\eqref{loop equation-2308} of $M$ and $\hat M$ respectively,
therefore $\Delta\mcalF$ satisfies the loop equation of $M$
if and only if $\Delta\hat\mcalF$ satisfies the loop equation of $\hat M$.

If the generalized Frobenius manifold $M$ is semisimple,
then $\hat M$ is also semisimple because $M$ and
$\hat M$ share the same multiplication.
So in this case, from \cite{GFM2} we know that the loop equation \eqref{loop equation-2308} of $M$
and the one of $\hat M$ have unique solutions
\[ \Delta\mcalF =
  \sum_{g\geq 1}
    \veps^{2g-2}
    \mcalF_g\bigl(v; v_x,\dots,v^{(2g-3)}\bigr),
    \quad
    \Delta\hat\mcalF =
  \sum_{g\geq 1}
    \veps^{2g-2}
    \hat\mcalF_g\bigl(\hat{v}; \hat{v}_{\hat x},\dots,\hat{v}^{(2g-3)}\bigr)\]
up to the addition of constant terms, and we have the following theorem.

\begin{thm}\label{thm: sol to loop eqn}
Let $(\hat M,\hat B)$ be the generalized Legendre transformation of $(M, B)$,
and $M$ is semisimple, then the solutions $\Delta\mcalF$, $\Delta\hat\mcalF$ of their loop equations satisfy the relations
\begin{equation}
  \mcalF_g(v,v_x,...,v^{(3g-2)}) =
  \hat\mcalF_g(\hat v,\hat v_{\hat x},...,\hat v^{(3g-2)}) + \text{const.}
\end{equation}
for $g\geq 1$.  The topological deformations of the Legendre-extended Principal Hierarchies of $(M, B)$ and $(\hat M,\hat B)$ that are obtained by the quasi-Miura transformations \eqref{quasi-Miura transf} and \eqref{zh-07} share the same tau function, and they are related by the linear reciprocal transformation \eqref{reciprocal}.
\end{thm}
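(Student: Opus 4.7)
The plan is to reduce the theorem to the uniqueness statement from \cite{GFM2} by leveraging the equivalence between Legendre-extended linearization and the loop equation established in Theorem \ref{Thm:B-loop eqn}, together with the identifications between the tau-covers and the Virasoro data of $(M,B)$ and $(\hat M,\hat B)$ collected in Theorem \ref{thm:tau correspendence} and in \eqref{zh-11}--\eqref{zh-12}. Since $M$ and $\hat M$ share the same Frobenius multiplication $c$ and unity $e$, $\hat M$ is semisimple whenever $M$ is; therefore the loop equations of $M$ and of $\hat M$ each admit a unique solution of the form \eqref{Delta F} up to additive constants at each genus.

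Concretely, let $\Delta\mcalF=\sum_{g\ge 1}\veps^{2g-2}\mcalF_g$ be the unique solution of the loop equation of $M$. By Theorem \ref{Thm:B-loop eqn}, the associated tau function $\tau=\exp(\veps^{-2}f+\Delta\mcalF)$ satisfies the Legendre-extended linearization condition $\pfrac{\tau}{s_m^{(B)}}=L_m^{(B)}(\veps^{-1}\bft,\veps\pp{\bft})\tau$ for all $m\ge -1$. Under the identifications $\hat f=f$, $\hat f_{i,p}=f_{\sgm(i),p}$, $\hat t^{i,p}=t^{\sgm(i),p}$ of Theorem \ref{thm:tau correspendence} together with \eqref{zh-11}--\eqref{zh-12}, the same $\tau$ rewrites as $\exp(\veps^{-2}\hat f+\Delta\hat\mcalF)$ with $\Delta\hat\mcalF$ given by \eqref{zh-10}, and it automatically satisfies the analogous condition $\pfrac{\tau}{\hat s_m^{(\hat B)}}=\hat L_m^{(\hat B)}(\veps^{-1}\hat\bft,\veps\pp{\hat\bft})\tau$. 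Applying Theorem \ref{Thm:B-loop eqn} in the reverse direction to $\hat M$ shows that $\Delta\hat\mcalF$ solves the loop equation of $\hat M$, so by uniqueness it must equal the unique solution $\sum_{g\ge 1}\veps^{2g-2}\hat\mcalF_g$ modulo additive constants at each genus, which is precisely the claimed identity $\mcalF_g(v,v_x,\dots)=\hat\mcalF_g(\hat v,\hat v_{\hat x},\dots)+\mathrm{const}$. For the second assertion, the two quasi-Miura transformations \eqref{quasi-Miura transf} and \eqref{zh-07} produce the same tau function $\tau$ by construction, and since Theorem \ref{thm:linear reciprocal} already identifies the genus-zero Legendre-extended Principal Hierarchies via $\hat t^{i,p}=t^{\sgm(i),p}$, the deformed hierarchies obtained by applying the two quasi-Miura transformations are related by the same linear reciprocal transformation \eqref{reciprocal}.

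The main obstacle lies in justifying that the substitution \eqref{zh-10} genuinely produces a $\Delta\hat\mcalF$ of the required form \eqref{Delta F}, namely a formal series of differential polynomials in the jet variables $\hat v,\hat v_{\hat x},\dots,\hat v^{(3g-2)}$. This requires converting $x$-derivatives of $v$ into $\hat x$-derivatives of $\hat v$ using $\pp{\hat x}=\pp{t^{B,0}}$ and the multiplicative relation $v_x=\hat B\cdot \hat v_{\hat x}$ from \eqref{240701-1622}; the invertibility of $B$ then allows this conversion to be carried out inductively on the jet order, with coefficients polynomial in $\hat v$. Once this is established, the preceding uniqueness argument completes the proof without further calculation.
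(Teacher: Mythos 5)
Your proposal is correct and follows essentially the same route as the paper: the paper likewise combines the identifications of the tau-covers and Virasoro data from Theorem \ref{thm:tau correspendence} with \eqref{zh-11}--\eqref{zh-12}, invokes Theorem \ref{Thm:B-loop eqn} to translate the linearization conditions into the two loop equations, and concludes by the uniqueness of solutions for semisimple (generalized) Frobenius manifolds from \cite{GFM2}, with the change of jet variables handled exactly via $\p_{\hat x}=\p_{t^{B,0}}$ and $v_x=\hat B\cdot\hat v_{\hat x}$ as you describe. Your explicit attention to why \eqref{zh-10} yields a function of the $\hat v$-jets is a point the paper leaves implicit, but it does not change the argument.
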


\section{Example: the KdV and the $q$-deformed KdV hierarchies}
\subsection{Generalized Frobenius manifolds and their complete data}
Consider the well-known 1-dimensional Frobenius manifold $M$ with potential
\begin{equation}\label{KdV FM potential}
  F = \frac {v^3}6,
\end{equation}
here $v$ is the flat coordinate with respect to the flat metric $\eta = \td v\otimes\td v$.
The unit vector field $e$ and Euler field $E$ of this Frobenius manifold are
\begin{equation}
  e=\p_v,\quad E=v\p_v.
\end{equation}
Moreover, this Frobenius manifold has charge $d=0$, and monodromy data $\mu=R=0$.

Introduce an invertible quasi-homogeneous Legendre field
\begin{equation}
  B = \frac{1}{2v^{\frac12}}\p_v
\end{equation}
on $M$ with $\mu_B=-\frac12$.
The Legendre field $B$ transforms $M$ to a new generalized Frobenius manifold $\hat M$ with potential
\begin{equation}\label{v^4/12}
  \hat F = \frac{\hat v^4}{12}
\end{equation}
and new metric $\hat\eta = \td\hat v\otimes\td\hat v$, where the new coordinate $\hat v$ reads
\begin{equation} \label{KdV-Vol coordinates}
  \hat v = v^{\frac12}.
\end{equation}
Under the new coordinate $\hat v$, the unit vector field $e$ and Euler field $E$ have the form
\begin{equation}
  e = \frac{1}{2\hat v}\p_{\hat v}, \qquad
  E=\frac12\hat v\p_{\hat v},
\end{equation}
and the charge $\hat d$ of $\hat M$ is $\hat d = -2\mu_B=1$.
Moreover, the inversion $\hat B:=B^{-1}$ of $B$ is
\begin{equation}
  \hat B = \p_{\hat v},
\end{equation}
in particular, $\hat B$ is flat with respect to the new metric $\hat\eta$.

The Legendre-extended calibrations $\Xi_B$, $\hat\Xi_{\hat B}$
of $(M, B)$ and $(\hat M,\hat B)$ are given by
\begin{equation*}
\begin{array}{c|c}
\begin{split}
  \xi_{0,p}&=\,\xi_{1,p}=\frac{v^p}{p!}\p_v, \\
  \xi_{B,p}&=\, \frac{2^{p-1}}{(2p-1)!!}v^{p-\frac12}\p_v, \\
  \xi_{0,-q}&=\, 0, \\
  \xi_{B,-q}&=\, (-1)^q\frac{(2q-1)!!}{2^{q+1}}v^{-q-\frac12}\p_v,
\end{split}
&
\begin{split}
  \hat\xi_{\hat B,p}&=\,\hat\xi_{1,p} = \frac{\hat v^{2p}}{p!}\p_{\hat v}, \\
  \hat\xi_{0,p} &=\, \frac{2^{p-1}}{(2p-1)!!}\hat v^{2p-1}\p_{\hat v}, \\
  \hat\xi_{\hat B,-q} &=\, 0, \\
  \hat\xi_{0,-q} &=\, (-1)^q\frac{(2q-1)!!}{2^{q+1}}\hat v^{-2q-1}\p_{\hat v}
\end{split}
\end{array}
\end{equation*}
for $p\geq 0$ and $q>0$.
Note that the above formulae can be uniformly written as
\begin{align}
  \xi_{0,p}=\xi_{1,p}=\frac{v^p}{\Gamma(p+1)}\p_v,\qquad
  \xi_{B,p}=\frac12\frac{\Gamma(\frac12)}{\Gamma(p+\frac12)}v^{p-\frac12}\p_v, \\
  \hat\xi_{\hat B,p}
 =\hat\xi_{1,p} =  \frac{\hat v^{2p}}{\Gamma(p+1)}\p_{\hat v},\qquad
  \hat\xi_{0,p}
 =\frac12\frac{\Gamma(\frac12)}{\Gamma(p+\frac12)}\hat v^{2p-1}\p_{\hat v},
\end{align}
for $p\in\bbZ$, and these vector fields are related by \eqref{rel1}.
From \eqref{extended B-PH-1} we know that
the corresponding Legendre-extended Principal Hierarchies are given by
\begin{align}
  \pfrac{v}{t^{0,p}}&=\pfrac v{t^{1,p}}=\frac{v^p}{\Gamma(p+1)}v_x,\qquad
  \pfrac v{t^{B,p}}=\frac12\frac{\Gamma(\frac12)}{\Gamma(p+\frac12)}v^{p-\frac12}v_x, \label{KdV-BPH-1}\\
  \pfrac{\hat v}{\hat t^{\hat B,p}}
 &=\pfrac{\hat v}{\hat t^{1,p}} =  \frac{2\hat v^{2p+1}}{\Gamma(p+1)}\hat v_{\hat x},\qquad
  \pfrac{\hat v}{\hat t^{0,p}}
 =\frac{\Gamma(\frac12)}{\Gamma(p+\frac12)}\hat v^{2p}\hat v_{\hat x},  \label{KdV-BPH-2}
\end{align}
here $p\in\bbZ$.
Note that the flows $\left\{\pp{t^{1,p}}\right\}_{p\geq 0}$ form the well-known \textit{dispersionless KdV hierarchy},
which is extended by a family of additional flows $\left\{\pp{t^{B,p}}\right\}_{p\in\bbZ}$.
On the other hand, the Legendre-extended Principal Hierarchy \eqref{KdV-BPH-2}
coincides with the dispersionless limit of the \textit{extended $q$-deformed KdV hierarchy},
see details in \cite{GFM2}. From \eqref{KdV-Vol coordinates} and
the identifications $x=t^{0,0}$, $\hat x=\hat t^{0,0}$,
we obtain the relationship
\begin{equation}
  \hat v_{\hat x} = \frac{1}{4v}v_x
\end{equation}
between the coordinates $v_x$ and $\hat v_{\hat x}$ on the jet space $J^\infty(M)$,
and then it can be verified directly that the dispersionless extended KdV hierarchy \eqref{KdV-BPH-1}
and the dispersionless extended $q$-deformed KdV hierarchy \eqref{KdV-BPH-2}
are related by the linear reciprocal transformation
\begin{equation}\label{KdV reciproocal}
  \hat t^{1,p} = t^{1,p},\qquad \hat t^{0,q}=t^{B,q},\qquad
  \hat t^{\hat B,q} = t^{0,q}
\end{equation}
for $p\geq 0$ and $q\in\bbZ$, which confirms the assertion of Theorem \ref{thm:linear reciprocal}.

The tau structures $\{\Omg_{i,p;j,q}\}, \{\hat\Omg_{i,p;j,q}\}$
of the integrable hierarchies \eqref{KdV-BPH-1}--\eqref{KdV-BPH-2} are given by
\begin{align*}
  \Omg_{1,p;1,q} &=\,
  \Omg_{1,p;0,q} = \Omg_{0,p;1,q} = \Omg_{0,p;0,q}
=
  \begin{cases}
    \frac{1}{\Gamma(p+1)\Gamma(q+1)}\frac{v^{p+q+1}}{p+q+1}, & \text{if $p,q\geq 0$}, \\
    0, & \text{if $p<0$ or $q<0$},
  \end{cases} \\
\Omg_{B,p;0,q} &=\, \Omg_{B,p;1,q}
=
  \frac{\Gamma(\frac12)}{\Gamma(p+\frac12)\Gamma(q+1)}
  \frac{v^{p+q+\frac12}}{2p+2q+1}, \\
\Omg_{B,p;B,q} &=\,
  \begin{cases}
    \frac14 \frac{\Gamma^2(\frac12)}{\Gamma(p+\frac12)\Gamma(q+\frac12)}
    \frac{v^{p+q}}{p+q}, & \text{if $p+q\neq 0$}, \\
    \frac{(-1)^p}{4}\log v, & \text{if $p+q=0$},
  \end{cases}
\\
  \hat\Omg_{1,p;1,q} &=\,
  \hat\Omg_{1,p;\hat B,q} = \hat\Omg_{\hat B,p;1,q} = \hat\Omg_{\hat B,p;\hat B,q}
=
  \begin{cases}
    \frac{1}{\Gamma(p+1)\Gamma(q+1)}\frac{\hat v^{2p+2q+2}}{p+q+1}, & \text{if $p,q\geq 0$}, \\
    0, & \text{if $p<0$ or $q<0$},
  \end{cases} \\
\hat\Omg_{0,p;\hat B,q} &=\, \hat\Omg_{0,p;1,q}
=
  \frac{\Gamma(\frac12)}{\Gamma(p+\frac12)\Gamma(q+1)}
  \frac{\hat v^{2p+2q+1}}{2p+2q+1}, \\
\hat\Omg_{0,p;0,q} &=\,
  \begin{cases}
    \frac14 \frac{\Gamma^2(\frac12)}{\Gamma(p+\frac12)\Gamma(q+\frac12)}
    \frac{\hat v^{2p+2q}}{p+q}, & \text{if $p+q\neq 0$}, \\
    \frac{(-1)^p}{2}\log \hat v, & \text{if $p+q=0$},
  \end{cases}
\end{align*}
for $p,q\in\bbZ$, and the associated complete data $\widetilde{\mu}_B, \widetilde{R}_B$
and $\hat{\widetilde{\mu}}_{\hat B},
\hat{\widetilde{R}}_{\hat B}$ are given by
\begin{equation*}
\begin{split}
  \widetilde{\mu}_B
= \left(
  \begin{array}{cc|c|cc}
    0 &   &   &   &   \\
      & -\frac12 &   &   &   \\
    \hline
      &   & 0 &   &   \\
    \hline
      &   &   & 0 &   \\
      &   &   &   &  \frac12
  \end{array}
  \right),\quad
\widetilde{R}_B=\widetilde{R}_{B;1} =
\left(
  \begin{array}{cc|c|cc}
      &   &   & 0 & 0  \\
      &   &   & 0 & 0  \\
    \hline
      &   & 0 &   &   \\
    \hline
     0& 0 &   &   &   \\
     0& \frac14 &   &   &
  \end{array}
  \right),
\\
  \hat{\widetilde{\mu}}_{\hat B}
= \left(
  \begin{array}{cc|c|cc}
    -\frac12 &   &   &   &   \\
      & 0 &   &   &   \\
    \hline
      &   & 0 &   &   \\
    \hline
      &   &   & \frac12 &   \\
      &   &   &   &  0
  \end{array}
  \right),\quad
\hat{\widetilde{R}}_{\hat B}=\hat{\widetilde{R}}_{\hat B;1} =
\left(
  \begin{array}{cc|c|cc}
      &   &   & 0 & 0  \\
      &   &   & 0 & 0  \\
    \hline
      &   & 0 &   &   \\
    \hline
     \frac14& 0 &   &   &   \\
     0& 0 &   &   &
  \end{array}
  \right).
\end{split}
\end{equation*}
We also note that
\[
  \widetilde{\eta}_B = \hat{\widetilde{\eta}}_{\hat B}
=
  \left(
  \begin{array}{cc|c|cc}
      &   &   & 1 & 0  \\
      &   &   & 0 & 1  \\
    \hline
      &   & 1 &   &   \\
    \hline
     1& 0 &   &   &   \\
     0& 1 &   &   &
  \end{array}
  \right).
\]
The Legendre-extended Virasoro operators $\left\{L_m^{(B)}\right\}_{m\geq -1}$
of $(M, B)$ given by the above complete data and \eqref{Vir-B1} have the expressions
\begin{align}
  L^{(B)}_{-1}
=&\,
  \sum_{p\in\bbZ}\left(t^{B,p+1}\pp{t^{B,p}}+t^{0,p+1}\pp{t^{0,p}}\right)
 +\sum_{p\geq 0}t^{1,p+1}\pp{t^{1,p}}
 +\frac12 t^{1,0}t^{1,0}, \\
L_0^{(B)}
=&\,
  \sum_{p\in\bbZ}pt^{B,p}\pp{t^{B,p}}
  +\sum_{p\in\bbZ}
  \left(
    p+\frac12
  \right)t^{0,p}\pp{t^{0,p}}
  +\sum_{p\geq 0}
  \left(
    p+\frac12
  \right)t^{1,p}\pp{t^{1,p}}
\notag\\
&
+\frac18\sum_{p\in\bbZ}
(-1)^p t^{B,p}t^{B,-p}+\frac{1}{16},\\
L_m^{(B)}
=&\,
\frac12\sum_{p=0}^{m-1}
\frac{(2p+1)!!(2m-2p-1)!!}{2^{m+1}}
\frac{\p^2}{\p t^{1,p}\p t^{1,m-1-p}}\notag\\
&
+
\sum_{p\geq 1}
\frac{(p+m)!}{(p-1)!}
\left(
  t^{B,p}\pp{t^{B,p+m}}
  +(-1)^{m+1}
  t^{B,-p-m}
  \pp{t^{B,-p}}
\right)\notag\\
&
+
  \sum_{p\in\bbZ}
    \frac{\Gamma(p+m+\frac 32)}{\Gamma(p+\frac12)} t^{0,p}\pp{t^{0,p+m}}
+\sum_{p\geq 0}
  \frac{\Gamma(p+m+\frac 32)}{\Gamma(p+\frac12)}
  t^{1,p}
  \pp{t^{1,p+m}}
\notag\\
&
+
  \frac14\sum_{p\geq 1}
  (-1)^{p+m}
  \frac{(p+m)!}{(p-1)!}
  \left(
      \sum_{k=p}^{p+m}
      \frac{1}{k}
  \right)
  t^{B,p}t^{B,-m-p}\notag\\
&
+
  \frac18\sum_{p=0}^{m}
  (-1)^m
  p!(m-p)!
  t^{B,-p}t^{B,p-m},\quad m\geq 1,
\end{align}
which are the Virasoro operators of the (dispersionless) extended KdV hierarchy \eqref{KdV-BPH-1}.
On the other hand,
the Legendre-extended Virasoro operators
$\bigl\{\hat L_m^{(\hat B)}\bigr\}_{m\geq -1}$ of $(\hat M,\hat B)$
are obtained from $\bigl\{L_m^{(B)}\bigr\}_{m\geq -1}$ by
the replacement $t^{0,p}\mapsto \hat t^{\hat B,p}$,
$t^{1,p}\mapsto \hat t^{1,p}$ and $t^{B,p}\mapsto \hat t^{0,p}$.
These operators are the Virasoro operators of the (dispersionless) extended $q$-deformed KdV hierarchy \eqref{KdV-BPH-2},
which are given in \cite{GFM1}.

\subsection{Solutions to their loop equations}

As it is shown in \cite{normal-form},
the loop equation \eqref{loop equation-2308} of the Frobenius manifold
$M$ with potential \eqref{KdV FM potential} reads
\begin{align}
&
  \sum_{s\geq 0}
    \pfrac{\Delta\mcalF}{v^{(s)}}
    \left[
      \p_x^s\frac{1}{v-\lmd}
     +\sum_{k=1}^{s}
        {s\choose k}
        \left(\p_x^{k-1}\frac{1}{\sqrt{v-\lmd}}\right)
        \left(\p_x^{s+1-k}\frac{1}{\sqrt{v-\lmd}}\right)
    \right]
\notag \\
&\quad=
  \frac{\veps^2}{2}
  \sum_{k,\ell\geq 0}
    \left(
      \frac{\p^2\Delta\mcalF}{\p v^{(k)}\p v^{(\ell)}}
     +\pfrac{\Delta\mcalF}{v^{(k)}}
      \pfrac{\Delta\mcalF}{v^{(\ell)}}
    \right)
    \left(
      \p_x^{k+1}\frac{1}{\sqrt{v-\lmd}}
    \right)
    \left(
      \p_x^{\ell+1}\frac{1}{\sqrt{v-\lmd}}
    \right)
\notag \\
&\qquad
  -\frac{\veps^2}{16}
   \sum_{k\geq 0}
     \pfrac{\Delta\mcalF}{v^{(k)}}
     \left(
       \p_x^{k+2}
       \frac{1}{(v-\lmd)^2}
     \right)
   -\frac{1}{16}\frac{1}{(v-\lmd)^2},
\end{align}
where $v^{(s)}:=\p_x^sv$ for $s\geq 0$, and the unknown function has the form
\[\Delta\mcalF=\sum_{g\geq 1}\veps^{2g-2}\mcalF_g(v,v_x,...,v^{(3g-2)}).\]
The above loop equation is required to hold true identically with respect to the
parameter $\lmd$. The first three free energies of are given by
\begin{align*}
  \mcalF_1 &=\, \frac{1}{24}\log v_x, \qquad
  \mcalF_2 = \frac{v^{(4)}}{1152 v_x^2} - \frac{7v^{(3)}v_{xx}}{1920 v_x^3}
               +\frac{v_{xx}^3}{360 v_x^4}, \\
  \mcalF_3 &=\,
  \frac{v^{(7)}}{82944 v_x^3}
  -\frac{7 v^{(6)} v_{xx}}{46080 v_x^4}
  -\frac{53 v^{(5)} v^{(3)}}{161280 v_x^4}
  +\frac{353 v^{(5)} v_{xx}^2}{322560 v_x^5}
  -\frac{103 \left(v^{(4)}\right)^2}{483840 v_x^4}
  +\frac{1273 v^{(4)}v^{(3)} v_{xx}}{322560 v_x^5} \\
&\quad
  -\frac{83 v^{(4)}v_{xx}^3}{15120 v_x^6}
  +\frac{59 \left(v^{(3)}\right)^3}{64512 v_x^5}
  -\frac{83 \left(v^{(3)}\right)^2 v_{xx}^2}{7168 v_x^6}
  +\frac{59 v^{(3)} v_{xx}^4}{3024 v_x^7}
  -\frac{5 v_{xx}^6}{648 v_x^8}.
\end{align*}
On the other hand, as it is shown in \cite{GFM2,GFM1},
the loop equation of the generalized Frobenius manifold $\hat M$ with potential \eqref{v^4/12} has the form
 \begin{align}
&
   \sum_{s\geq 0}
   \pfrac{\Delta\hat\mcalF}{\hat v^{(s)}}
   \p_{\hat x}^s\frac{1}{2\hat v(\hat v^2-\lmd)}
+
  \sum_{s\geq 1}
   \pfrac{\Delta\hat\mcalF}{\hat v^{(s)}}
   s\p_{\hat x}^s
   \left[
     \frac{1}{2\lmd}
     \left(
       \frac{1}{\sqrt{\hat v^2-\lmd}}-\frac 1{\hat v}
     \right)
   \right]  \notag \\
&\qquad
  +\sum_{s\geq 1}
   \pfrac{\Delta\hat\mcalF}{\hat v^{(s)}}
   \sum_{k=1}^{s}
   {s\choose k}
   \left[
     \p_{\hat x}^{k-1}
     \frac{1}{2\lmd}
     \left(
       \frac{\hat v}{\sqrt{\hat v^2-\lmd}}-1
     \right)
   \right]
   \left(
     \p_{\hat x}^{s+1-k}
     \frac{1}{\sqrt{\hat v^2-\lmd}}
   \right)  \notag \\
&\quad=
  \frac12\veps^2
  \sum_{k,\ell\geq 0}
  \left(
    \pfrac{\Delta\hat\mcalF}{\hat v^{(k)}}
    \pfrac{\Delta\hat\mcalF}{\hat v^{(\ell)}}
   +\frac{\p^2\Delta\hat\mcalF}{\p \hat v^{(k)}\p \hat v^{(\ell)}}
  \right)
  \left(
    \p_{\hat x}^{k+1}
    \frac{1}{\sqrt{\hat v^2-\lmd}}
  \right)
  \left(
    \p_{\hat x}^{\ell+1}
    \frac{1}{\sqrt{\hat v^2-\lmd}}
  \right)  \notag \\
&\qquad+
   \frac12\veps^2
     \sum_{k\geq 0}
     \pfrac{\Delta\hat\mcalF}{\hat v^{(k)}}
     \p_{\hat x}^{k+1}
     \frac{\hat v^2\hat v_{\hat x}}{(\hat v^2-\lmd)^3}
 -
   \frac{1}{16}\frac{1}{(\hat v^2-\lmd)^2},\label{Loop eqn for v^4/12}
\end{align}
where $\hat v^{(s)}=\p_{\hat x}^s\hat v$, and the unknown function has the form
\[\Delta\hat\mcalF=\sum_{g\geq 1}\veps^{2g-2}\hat\mcalF_g(\hat v,\hat v_{\hat x},...,\hat v^{(3g-2)}).\]
The first three free energies are given by
\begin{align*}
\hat\mcalF_1 &=\, \frac{1}{24}\log \hat v_{\hat x}+\frac{1}{12}\log \hat v, \\
\hat\mcalF_2 &=\,
 \frac{ \hat v^{(4)} \hat v}{576 \hat v_{\hat x}^2}
-\frac{7 \hat v^{(3)} \hat v_{\hat x\hat x}\hat v}{960 \hat v_{\hat x}^3}
+\frac{37 \hat v^{(3)}}{2880 \hat v_{\hat x}}
+\frac{\hat v_{\hat x\hat x}^3 \hat v}{180 \hat v_{\hat x}^4}
-\frac{11 \hat v_{\hat x\hat x}^2}{960 \hat v_{\hat x}^2}
+\frac{\hat v_{\hat x\hat x}}{120 \hat v}
-\frac{\hat v_{\hat x}^2}{120 \hat v^2},
\\
\hat\mcalF_3 &=\,
  \frac{    \hat{v}^{(7)}\hat{v}^2 }{20736 \hat{v}_{\hat{x}}^3}
 -\frac{7   \hat{v}^{(6)}\hat{v}_{\hat{x}\hat{x}}\hat{v}^2}{11520 \hat{v}_{\hat{x}}^4}
 +\frac{91  \hat{v}^{(6)}\hat{v} }{103680 \hat{v}_{\hat{x}}^2}
 -\frac{53  \hat{v}^{(5)}\hat{v}^{(3)}\hat{v}^2}{40320 \hat{v}_{\hat{x}}^4}
 +\frac{353 \hat{v}^{(5)}\hat{v}_{\hat{x}\hat{x}}^2 \hat{v}^2}{80640 \hat{v}_{\hat{x}}^5} \notag \\
&\quad
 -\frac{419 \hat{v}^{(5)}\hat{v}_{\hat{x}\hat{x}}\hat{v} }{60480 \hat{v}_{\hat{x}}^3}
 +\frac{913 \hat{v}^{(5)}}{241920 \hat{v}}
 -\frac{103 (\hat{v}^{(4)})^2 \hat{v}^2 }{120960 \hat{v}_{\hat{x}}^4}
 +\frac{1273  \hat{v}^{(4)} \hat{v}^{(3)}\hat{v}_{\hat{x}\hat{x}}\hat{v}^2 }{80640 \hat{v}_{\hat{x}}^5}\notag\\
 &\quad
 -\frac{9169 \hat{v}^{(4)} \hat{v}^{(3)} \hat{v} }{725760 \hat{v}_{\hat{x}}^3}
 -\frac{83  \hat{v}^{(4)} \hat{v}_{\hat{x}\hat{x}}^3 \hat{v}^2}{3780 \hat{v}_{\hat{x}}^6}
 +\frac{545 \hat{v}^{(4)} \hat{v}_{\hat{x}\hat{x}}^2 \hat{v}}{16128 \hat{v}_{\hat{x}}^4}
 -\frac{3727\hat{v}^{(4)}\hat{v}_{\hat{x}\hat{x}}}{241920 \hat{v}_{\hat{x}}^2}
 +\frac{\hat{v}^{(4)}}{1512 \hat{v}}\notag\\
 &\quad
 +\frac{59 (\hat{v}^{(3)})^3 \hat{v}^2 }{16128 \hat{v}_{\hat{x}}^5}
 -\frac{83  (\hat{v}^{(3)})^2 \hat{v}_{\hat{x}\hat{x}}^2 \hat{v}^2 }{1792 \hat{v}_{\hat{x}}^6}
 +\frac{97  (\hat{v}^{(3)})^2\hat{v}_{\hat{x}\hat{x}}\hat{v}}{2016 \hat{v}_{\hat{x}}^4}
 -\frac{1669 (\hat{v}^{(3)})^2}{145152 \hat{v}_{\hat{x}}^2}\notag\\
 &\quad
 +\frac{59 \hat{v}^{(3)}  \hat{v}_{\hat{x}\hat{x}}^4 \hat{v}^2}{756 \hat{v}_{\hat{x}}^7}
 -\frac{5555  \hat{v}^{(3)}\hat{v}_{\hat{x}\hat{x}}^3\hat{v}  }{48384 \hat{v}_{\hat{x}}^5}
 +\frac{325  \hat{v}^{(3)} \hat{v}_{\hat{x}\hat{x}}^2}{6912 \hat{v}_{\hat{x}}^3}
 -\frac{\hat{v}^{(3)}\hat{v}_{\hat{x}}}{378 \hat{v}^2}
 -\frac{5 \hat{v}_{\hat{x}\hat{x}}^6\hat{v}^2 }{162 \hat{v}_{\hat{x}}^8}\notag\\
 &\quad
 +\frac{13 \hat{v}_{\hat{x}\hat{x}}^5\hat{v} }{252 \hat{v}_{\hat{x}}^6}
 -\frac{193 \hat{v}_{\hat{x}\hat{x}}^4}{8064 \hat{v}_{\hat{x}}^4}
 -\frac{\hat{v}_{\hat{x}\hat{x}}^2}{504 \hat{v}^2}
 +\frac{\hat{v}_{\hat{x}\hat{x}}\hat{v}_{\hat{x}}^2 }{126 \hat{v}^3}
 -\frac{\hat{v}_{\hat{x}}^4}{252 \hat{v}^4}.
\end{align*}
From the flows
\[\frac{\p v}{\p t^{0,0}}=v_x,\quad \pfrac{\hat v}{\hat t^{\hat B,0}}=2\hat v\hat v_{\hat x}\]
and \eqref{KdV-Vol coordinates}, \eqref{KdV reciproocal},
we obtain the following relationships between the coordinates
$\left\{v,v_x,v_{xx},...,v^{(4)}\right\}$ and $\left\{\hat v, \hat v_{\hat x}, \hat v_{\hat x\hat x},...,\hat v^{(4)}\right\}$
on the jet space $J^\infty(M)$:
\begin{align*}
  &v=\,\hat v^2, \quad v_x = 4\hat v^2 \hat v_{\hat x},
\quad
  v_{xx} = 8 \hat v^3 \hat v_{\hat x\hat x}+24 \hat v^2 \hat v_{\hat x}^2, \\
  &v^{(3)} =\, 16 \hat v^4 \hat v^{(3)}
     +192 \hat v^3 \hat v_{\hat x\hat x}\hat v_{\hat x}
     +192 \hat v^2 \hat v_{\hat x}^3, \\
 & v^{(4)} =\,
    32\hat v^5\hat v^{(4)} +640\hat v^4\hat v^{(3)} \hat v_{\hat x}
   +480\hat v^2\hat v_{\hat x\hat x}^2
   +3840\hat v^3\hat v_{\hat x\hat x}\hat v_{\hat x}^2
   +1920\hat v^2\hat v_{\hat x}^4.
\end{align*}
By using these relations, we can verify that
\[
  \mcalF_1 = \hat\mcalF_1 + \text{const.},\qquad
  \mcalF_2 = \hat\mcalF_2.
\]
The relation $\mcalF_3 = \hat\mcalF_3$ can also be verified in the same way.

\subsection{Linear reciprocal transformation between hierarchies}

The solutions $\Delta\mcalF$, $\Delta\hat\mcalF$ to the loop equations of
  $M$ and $\hat M$ yield the topological deformations of the
  Legendre-extended Principal Hierarchies.
  The topological deformations of the flows $\left\{\pp{t^{1,p}}\right\}_{p\geq 0}$
  form the KdV hierarchy which controls the 2D topological gravity as we learn from Witten's conjecture and its proof by Kontsevich
  \cite{Kontsevich,Witten},
  and the topological deformations of the Legendre flows $\left\{\pp{t^{B,p}}\right\}_{p\in\bbZ}$
  extend the KdV hierarchy.
  Precisely speaking, if $v$ satisfies the Legendre-extended Principal Hierarchy \eqref{KdV-BPH-1} of $(M,B)$,
  then
  \begin{align}
      U&=\, \veps^2\frac{\p^2\log\tau}{\p x\p t^{1,0}}
      =
        v+\veps^2(\Delta\mcalF)_{xx}
      \notag \\
    &=\,
      v+
      \left(
        \frac{v^{(3)}}{24 v_x}
       -\frac{v_{xx}^2}{24 v_x^2}
      \right)\veps^2
    +
      \left(
        \frac{v^{(6)}}{1152 v_x^2}
       -\frac{41 v^{(5)} v_{xx}}{5760 v_x^3}
       -\frac{73 v^{(4)} v^{(3)}}{5760 v_x^3} \right. \notag \\
    &\qquad
      \left.
       +\frac{17 v^{(4)} v_{xx}^2}{480 v_x^4}
       +\frac{19 \left(v^{(3)}\right)^2 v_{xx}}{384 v_x^4}
       -\frac{35 v^{(3)} v_{xx}^3}{288 v_x^5}
       +\frac{v_{xx}^5}{18 v_x^6}
      \right)\veps^4 + O(\veps^6)  \label{full-genera KdV U}
  \end{align}
  satisfies the following well-known KdV hierarchy
  \begin{equation}\label{KdV-hierarchy}
    \veps\pfrac{L}{t^{1,p}}
  =
    \frac{2^{p+\frac12}}{(2p+1)!!}
    \left[
      \left(L^{p+\frac12}\right)_+ , L
    \right],\qquad p\geq 0,
  \end{equation}
  and topological deformations of the Legendre flows $\left\{\pp{t^{B,p}}\right\}_{p\in\bbZ}$,
  here the Lax operator
  \begin{equation}\label{Lax KdV}
    L = \frac{\veps^2}{2}\p_x^2 + U.
  \end{equation}
  Note that the first few flows of the KdV hierarchy \eqref{KdV-hierarchy} reads
  \begin{align}
    \pfrac{U}{t^{1,0}} &=\, U_x, \notag\\
    \pfrac{U}{t^{1,1}} &=\, UU_x + \frac{\veps^2}{12}U^{(3)}, \\
    \pfrac{U}{t^{1,2}} &=\,
      \frac12 U^2U_x + \frac{\veps^2}{12}
      \left(
        UU^{(3)}+2U_{xx}U_x
      \right)
    +
      \frac{\veps^4}{240}U^{(5)}, \notag
  \end{align}
and it can be verified directly that $U$ satisfies
\begin{align}
  \pfrac{U}{t^{B,0}}
&=\,
  \frac{1}{2} U^{-\frac{1}{2}} U_x
+
  \left(
    -\frac{U^{(3)}}{48U^{\frac{3}{2}}}
    +\frac{U_{xx} U_x}{16U^{\frac{5}{2}}}
    -\frac{5U_x^3}{128U^{\frac{7}{2}}}
  \right)\veps^2 \notag\\
&\quad +
\left(
  \frac{{U^{(5)}}}{640U^{\frac{5}{2}}}
 -\frac{3{U^{(4)}} U_x}{256U^{\frac{7}{2}} }
 -\frac{5{U^{(3)}}U_{xx}}{256U^{\frac{7}{2}}}
 +\frac{161U^{(3)}U_x^2}{3072 U^{\frac{9}{2}}}\right.\notag\\
&\qquad\left.
 +\frac{217 U_{xx}^2 U_x}{3072U^{\frac{9}{2}}}
 -\frac{21U_{xx}U_x^3}{128U^{\frac{11}{2}}}
 +\frac{1155 U_x^5}{16384U^{\frac{13}{2}}}
\right)\veps^4 + O(\veps^6),  \label{full-genera Legendre t^B,0 flow}
\end{align}
which is the topological deformation of the Legendre flow $\pfrac{v}{t^{B,0}}=\frac12 v^{-\frac12}v_x$.

On the other hand, it is proved in \cite{GFM2} that the topological deformations of
  the Legendre-extended Principal Hierarchy \eqref{KdV-BPH-2} of $\hat M$ coincide, up to a Miura-type transformation, with the extended $q$-deformed KdV hierarchy.
More precisely, if $\hat v$ satisfies the Legendre-extended Principal Hierarchy \eqref{KdV-BPH-2} of $\hat M$, then
  \begin{align}
    \hat U&=\,
      \veps\frac{\Lmd-\Lmd^{-1}}{2\sqrt{2}\,\rmi}
      \pfrac{\log\hat\tau}{\hat t^{1,0}}
    =
      \frac{\Lmd-\Lmd^{-1}}{2\sqrt{2}\,\rmi\,\veps\p_x}
      \left(
        \hat v + \frac{\p^2\Delta\hat{\mcalF}}{\p\hat x\p\hat t^{1,0}}
      \right) \\
  &=\,
    \hat v +
    \left(
      \frac{\hat v \hat v^{(3)}}{12 \hat v_{\hat x}}
     -\frac{\hat v \hat v_{\hat x\hat x}^2}{12 \hat v_{\hat x}^2}
      \right)\veps^2
     +
     \left(
     \frac{\hat{v}^2 \hat{v}^{(6)}}{288 \hat{v}_{\hat x}^2}
    -\frac{41 \hat{v}^2 \hat{v}^{(5)} \hat{v}_{\hat x\hat x}}{1440 \hat{v}_{\hat x}^3}
    +\frac{3 \hat{v} \hat{v}^{(5)}}{160 \hat{v}_{\hat x}}
    \right.\notag\\
  &\qquad
    -\frac{73 \hat{v}^2 \hat{v}^{(4)}\hat{v}^{(3)} }{1440 \hat{v}_{\hat x}^3}
    +\frac{17 \hat{v}^2 \hat{v}^{(4)}\hat{v}_{\hat x\hat x}^2}{120 \hat{v}_{\hat x}^4}
    -\frac{13 \hat{v} \hat{v}^{(4)}\hat{v}_{\hat x\hat x}}{160 \hat{v}_{\hat x}^2}
    +\frac{19 \hat{v}^2 \left(\hat{v}^{(3)}\right)^2 \hat{v}_{\hat x\hat x}}{96 \hat{v}_{\hat x}^4}
    \notag\\
  &\qquad\left.
    -\frac{\hat{v} \left(\hat{v}^{(3)}\right)^2}{18 \hat{v}_{\hat x}^2}
    -\frac{35 \hat{v}^2 \hat{v}^{(3)} \hat{v}_{\hat x\hat x}^3}{72 \hat{v}_{\hat x}^5}
    +\frac{71 \hat{v} \hat{v}^{(3)} \hat{v}_{\hat x\hat x}^2}{288 \hat{v}_{\hat x}^3}
    +\frac{2 \hat{v}^2 \hat{v}_{\hat x\hat x}^5}{9 \hat{v}_{\hat x}^6}
    -\frac{37 \hat{v} \hat{v}_{\hat x\hat x}^4}{288 \hat{v}_{\hat x}^4}
     \right)\veps^4 + O(\veps^6) \label{full-genera qKdV hatU}
  \end{align}
satisfies the following so-called \textit{extended $q$-deformed KdV hierarchy}
\begin{align}
  \sqrt{2}\,\rmi\,\veps\pfrac{\hat L}{\hat t^{1,p}}
&=\,
  (-1)^{p+1}\frac{2^{3p+2}}{(2p+1)!!}
  \left[
    \left(
      \hat L^{p+\frac12}
    \right)_+ , \hat L
  \right],& p\geq 0,
\label{qKdV-1-241024}\\
  \sqrt{2}\,\rmi\,\veps\pfrac{\hat L}{\hat t^{0,-p}}
&=\,
  -\frac{(p-1)!}{2^{2p}}
  \left[
    \left(
      \hat L^{-p}
    \right)_- , \hat L
  \right],& p\geq 1,
\label{qKdV-2-241024}\\
  \sqrt{2}\,\rmi\,\veps\pfrac{\hat L}{\hat t^{0,p}}
&=\,
  (-1)^{p+1}\frac{2^{2p-1}}{p!}
  \left[
    \left(
      \hat L^{p}\log\hat L
    \right)_+ , \hat L
  \right],& p\geq 0, \label{qKdV-3-241024}
\end{align}
where
\begin{equation}\label{Lax qKdV}
  \hat L = \Lmd^2 + \hat U\Lmd,\qquad
  \Lmd = \rme^{\sqrt{2}\,\rmi\,\veps\p_{\hat x}},
\end{equation}
the definitions of $\hat L^{-1}$, $\log\hat L$ and more details can be found in Sect.\,4 of  \cite{GFM2}.
The first few flows of this hierarchy have the following forms:
\begin{align}
  \sqrt{2}\,\rmi\,\veps
  \pfrac{\hat U}{\hat t^{1,0}}&=
    4\hat U\left(\frac{\Lmd-1}{\Lmd+1}\hat U\right),  \label{hat t 1,0-flow}
\\
  \sqrt{2}\,\rmi\,\veps\pfrac{\hat U}{\hat t^{1,1}}&=\,
    \frac{32}{3}
    \left(
      \hat U\circ\frac{\Lmd-1}{\Lmd+1}\circ \hat U\circ \frac{\Lmd}{\Lmd+1}
    \right)
    \left[
      \left(
        \frac{1}{\Lmd+1}\hat U
      \right)^2
    \right],
\\
  \pfrac{\hat U}{\hat t^{0,0}} &=\hat U_{\hat x},\qquad
  \sqrt{2}\,\rmi\,\veps\pfrac{\hat U}{\hat t^{0,-1}}  =
  \frac14\left(\frac{1}{\hat U^+}-\frac{1}{\hat U^-}\right),
\end{align}
where $\hat U^{\pm}=\Lambda^{\pm1} \hat U$.

Based on the discussion given in Section \ref{subsection:full genera}, we arrive at the equivalence relation
  \[
    \left(
      \text{KdV Hierarchy} \atop
      \text{with extended flows $\left\{\pp{t^{B,p}}\right\}_{p\in\bbZ}$}
    \right)
    \xlongequal{\text{\eqref{KdV reciproocal}}}
    \left(
      \text{$q$-deformed KdV Hierarchy} \atop
      \text{with extended flows $\left\{\pp{\hat t^{0,p}}\right\}_{p\in\bbZ}$}
    \right)
  \]
  up to a certain Miura-type transformation.
Moreover, we have the following theorem.
\begin{thm}
If $\hat U$ satisfies the extended $q$-deformed KdV hierarchy \eqref{qKdV-1-241024}--\eqref{qKdV-3-241024},
then
\begin{align}
  U&:=\,\frac{8\Lmd}{\Lmd+1}
    \left[
    \left(
      \frac{1}{\Lmd+1}\hat U
    \right)^2
    \right] \notag
\\
  &\phantom:=\,
  \hat U^2+
  \left(
    \hat U \hat U_{\hat x\hat x}
   +\frac{1}{2} \hat U_{\hat x}^2
  \right)\veps^2
 +\left(
   \frac{1}{3} \hat U \hat U^{(4)}
  +\frac{2}{3} \hat U^{(3)} \hat U_{\hat x}
  +\frac{1}{2} \hat U_{\hat x\hat x}^2
 \right)\veps^4 + O(\veps^6)\label{U and hat U}
\end{align}
satisfies the KdV hierarchy \eqref{KdV-hierarchy}
and the topological deformations of the Legendre flows $\left\{\pp{t^{B,p}}\right\}_{p\in\bbZ}$,
under the identifications \eqref{KdV reciproocal}.
In particular, $U$ satisfies the KdV equation
\begin{equation}
  \pfrac{U}{\hat t^{1,1}}
=
  U\pfrac{U}{\hat t^{1,0}}
+\frac{\veps^2}{12}
 \frac{\p^3 U}{\left(\p \hat t^{1,0}\right)^3}.
\end{equation}
\end{thm}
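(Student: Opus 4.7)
The plan is to deduce \eqref{U and hat U} from Theorem \ref{thm: sol to loop eqn}, which identifies the tau functions of the topologically deformed Legendre-extended Principal Hierarchies of $(M,B)$ and $(\hat M,\hat B)$ under the linear reciprocal transformation \eqref{KdV reciproocal}. Since the Frobenius manifold $M$ with potential $v^3/6$ and the generalized Frobenius manifold $\hat M$ with potential $\hat v^4/12$ are one-dimensional and hence trivially semisimple, Theorem \ref{thm: sol to loop eqn} applies and yields $\tau=\hat\tau$ up to an additive constant in $\log\tau$. In particular, $\p_{t^{1,0}}\log\tau=\p_{\hat t^{1,0}}\log\hat\tau$ under $\hat t^{1,0}=t^{1,0}$, so it suffices to introduce the common quantity $H:=\p_{t^{1,0}}\log\tau$ and to rewrite both $U$ and $\hat U$ in terms of $H$.

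Using the factorization $\Lmd-\Lmd^{-1}=\Lmd^{-1}(\Lmd-1)(\Lmd+1)$, the formulas \eqref{full-genera KdV U} and \eqref{full-genera qKdV hatU} become
\begin{equation*}
  U=\veps^2\p_{t^{0,0}}H,\qquad
  \frac{\hat U}{\Lmd+1}=\frac{\veps}{2\sqrt{2}\,\rmi}(1-\Lmd^{-1})H,
\end{equation*}
where $\Lmd=\rme^{\sqrt{2}\,\rmi\,\veps\,\p_{t^{B,0}}}$ since $\hat x=t^{B,0}$ under \eqref{KdV reciproocal}. Squaring the second identity and applying $8\Lmd/(\Lmd+1)$ reduce the claim \eqref{U and hat U} to the single tau-function identity
\begin{equation*}
  \p_{t^{0,0}}H=-\frac{\Lmd}{\Lmd+1}\bigl[\bigl((1-\Lmd^{-1})H\bigr)^2\bigr],
\end{equation*}
which at leading order in $\veps$ correctly reproduces $\p_{t^{0,0}}H=v/\veps^2$, consistent with $(1-\Lmd^{-1})H\approx\sqrt{2}\,\rmi\,\veps\,\hat v/\veps^2$ and $\Lmd/(\Lmd+1)\approx\tfrac12$. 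I would verify this identity by matching the $\p_{t^{0,0}}$-evolution of $H$ on the $M$ side, where it equals $U/\veps^2$, with the $\hat M$-side evolution coming from the topological deformation of the Legendre flow $\p\hat v/\p\hat t^{\hat B,0}=2\hat v\hat v_{\hat x}$, using the identification $t^{0,0}=\hat t^{\hat B,0}$. A pragmatic route is a genus-by-genus check: both sides are differential polynomials of bounded $\veps$-order at each genus, and equality at each genus follows from the already-verified identity $\mcalF_g=\hat\mcalF_g+\mathrm{const.}$ of the free energies computed in the previous subsection, combined with the chain-rule relations $v=\hat v^2$, $v_x=4\hat v^2\hat v_{\hat x}$, and so on for higher jets.

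Once \eqref{U and hat U} is established, the remaining assertions are almost immediate. That $U$ satisfies the KdV hierarchy \eqref{KdV-hierarchy} is precisely the content of the Witten conjecture proved by Kontsevich \cite{Witten,Kontsevich}, since the topological deformation of the Principal Hierarchy of $M$ coincides with the KdV hierarchy; that $U$ also solves the topological deformations of the Legendre flows $\{\p/\p t^{B,p}\}_{p\in\bbZ}$ is built into the construction of the Legendre-extended Principal Hierarchy of $(M,B)$, as exemplified by \eqref{full-genera Legendre t^B,0 flow}. The KdV equation for $U$ then arises as the $\p/\p t^{1,1}$ flow, using $\hat t^{1,1}=t^{1,1}$. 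The main technical obstacle is the clean, all-genera verification of the displayed tau-function identity: the formal inverse $(\Lmd+1)^{-1}$ demands a careful choice of expansion convention, and one must ensure that the resulting differential-polynomial identity holds to every order in $\veps$. The genus-by-genus strategy sidesteps this issue by reducing the verification to a finite computation at each genus, at the cost of foregoing a closed-form proof.
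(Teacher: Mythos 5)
Your first step coincides with the paper's: invoke Theorem \ref{thm: sol to loop eqn} to identify the two tau functions, so that $U=\veps^2\p_x\p_{t^{1,0}}\log\tau=\veps^2\,\p^2\log\hat\tau/(\p\hat t^{1,0})^2$, and your algebraic reduction of \eqref{U and hat U} to the single identity
\begin{equation*}
  \p_{t^{0,0}}H=-\frac{\Lmd}{\Lmd+1}\Bigl[\bigl((1-\Lmd^{-1})H\bigr)^2\Bigr],\qquad H:=\pfrac{\log\tau}{t^{1,0}},
\end{equation*}
is correct (including the leading-order consistency check). The gap is that this identity is precisely the nontrivial content of the theorem, and neither of your two proposed ways of establishing it closes the argument. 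The paper proves it in closed form by one external input that your proposal lacks: the relation between the $q$-deformed KdV hierarchy and the fractional Volterra hierarchy together with the FVH tau structure, which yield
\begin{equation*}
  \veps^2\frac{\p^2\log\hat\tau}{\p\hat t^{1,0}\p\hat t^{1,0}}
  =-8\,\mathrm{Res}\bigl(\Lmd\hat L^{\frac12}\bigr)
  =\frac{8\Lmd}{\Lmd+1}\left[\left(\frac{1}{\Lmd+1}\hat U\right)^2\right]
\end{equation*}
for all genera at once.

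Concretely: the equality $\mcalF_g=\hat\mcalF_g+\text{const.}$ of free energies, which you propose to feed into a genus-by-genus check, does not by itself imply the displayed tau-function identity. That identity encodes the Lax/residue structure of the deformed hierarchy, not merely the agreement of the two free energies, so at each order in $\veps$ you would still have to verify a nontrivial differential-polynomial identity using the explicit form of $\mcalF_g$ — which is available only for small $g$. Your alternative route via matching the $\p_{t^{0,0}}$-evolution of $H$ against the deformed Legendre flow runs into the same obstruction, since that deformed flow is itself known only through the loop-equation solution. As written, the proposal therefore verifies the statement to finitely many orders but does not prove it; to complete it you need the all-genera residue formula for $\mathrm{Res}(\Lmd\hat L^{1/2})$ coming from the FVH tau structure (or an equivalent Lax-theoretic input), which is exactly the ingredient the paper supplies.
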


\begin{proof}
  It suffices to verify that the functions $U$ and $\hat U$ defined in \eqref{full-genera KdV U} and \eqref{full-genera qKdV hatU} respectively are related by \eqref{U and hat U}.
  From Theorem \ref{thm: sol to loop eqn} we know that
  the topological deformations of the
  Legendre-extended Principal Hierarchies of $M$ and $\hat M$ share the same tau function, therefore
  \[
    U = \veps^2\frac{\p^2\log\tau}{\p x\p t^{1,0}}
    =
      \veps^2\frac{\p^2\log\hat\tau}{\p \hat t^{1,0}\p\hat t^{1,0}}.
  \]
  By the relationships between the $q$-deformed KdV hierarchy and the \textit{fractional Volterra hierarchy} (FVH) together with its
tau structure (see details in \cite{GFM2,Hodge-FVH,FVH}),
we can verify that
  \[
    \veps^2\frac{\p^2\log\hat\tau}{\p \hat t^{1,0}\p\hat t^{1,0}}
  =
    -8\mathrm{Res}
     \left(
       \Lmd\hat L^{\frac12}
     \right)
  =
    \frac{8\Lmd}{\Lmd+1}
    \left[
    \left(
      \frac{1}{\Lmd+1}\hat U
    \right)^2
    \right],
  \]
thus we arrive at a proof of the theorem.
\end{proof}

\begin{rmk}
  The relation \eqref{U and hat U} given in the above theorem can also be derived from spectral problems.
  Note that the $\hat t^{1,0}$-flow given in \eqref{qKdV-1-241024}
  is the compatibility condition of the following spectral problem and the evolution of the wave function:
  \begin{align}
    \hat L\hat\psi &=\, \hat\lmd\hat\psi, \\
    \sqrt{2}\,\rmi\,\veps\hat\psi_{\hat t^{1,0}} &=\, -4
    \left[
      \Lmd + \left(\frac{1}{1+\Lmd}\hat U\right)
    \right]\hat\psi,
  \end{align}
here $\hat L$ is the Lax operator \eqref{Lax qKdV},
$\hat\lmd$ is the spectral parameter, $\hat\psi$ is the wave function.
Denote
\[
  \lmd = -4\hat\lmd,\qquad \psi = \Lmd\hat\psi,
\]
then by using the following identity
\begin{align}
    &\,
      (\Lmd-1)
      \left[
        \left(\frac{1}{\Lmd+1}\hat U\right)^2
      \right]
    =
      \left(
        \frac{\Lmd}{\Lmd+1}\hat U
      \right)^2
     -\left(
       \frac{1}{\Lmd+1}\hat U
      \right)^2 \notag\\
    =&\,
      \left(
        \hat U - \frac{1}{\Lmd+1}\hat U
      \right)^2
     -\left(
       \frac{1}{\Lmd+1}\hat U
      \right)^2
    =
      \hat U^2 - \hat U\left(\frac{2}{\Lmd+1}\hat U\right)
    =
      \hat U\left(\frac{\Lmd-1}{\Lmd+1}\hat U\right), \label{241118-1357}
    \end{align}
we can show that
\[
  \left[
    \frac{\veps^2}{2}\p_x^2+
    \frac{8\Lmd}{\Lmd+1}
    \left(
      \frac{1}{\Lmd+1}\hat U
    \right)^2
  \right]\psi = \lmd\psi,
\]
which is the spectral problem of the KdV hierarchy \eqref{Lax KdV} with $U$ given by \eqref{U and hat U},
here $x=\hat{t}^{1,0}$.
\end{rmk}

\begin{cor}
  Suppose $U$ satisfies the KdV hierarchy \eqref{KdV-hierarchy}, then the following assertions hold true.
  \begin{enumerate}
    \item There exists a function
    \[
      W=U+\sum_{k=1}^{\infty}U_{[k]}\veps^k
    \]
    uniquely determined by the equation
    \begin{equation}\label{241025-1702}
      W-U = \rmi\,\veps\left(\sqrt{W+U}\right)_x,
    \end{equation}
    here $\{U_{[k]}\}_{k\geq 1}$ are differential polynomials of $U$.
    \item The above-mentioned function $W$ yields a symmetry $U\mapsto W$ of the KdV hierarchy.
    Moreover, this symmetry is generated by the topological deformation of the Legendre flow $\pp{t^{B,0}}$ given in
    \eqref{full-genera Legendre t^B,0 flow}, i.e.
    \begin{equation}\label{241025-1709}
      W = \exp\left(\sqrt{2}\,\rmi\,\veps\pp{t^{B,0}}\right)U.
    \end{equation}
  \end{enumerate}
\end{cor}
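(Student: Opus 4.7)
The plan is to construct $W$ explicitly via the action of the topologically deformed Legendre flow $\pp{t^{B,0}}$ and to derive (1) from the uniqueness of formal power series solutions of the defining equation.

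For part (1), I substitute the ansatz $W = U + \sum_{k\geq 1} U_{[k]}\veps^k$ into \eqref{241025-1702}. The left hand side at order $\veps^k$ contributes simply $U_{[k]}$. On the right hand side, expand the square root as the formal Taylor series
\[
\sqrt{W+U} \;=\; \sqrt{2U}\sum_{j\geq 0}\binom{1/2}{j}\left(\frac{W-U}{2U}\right)^j,
\]
and observe that the coefficient of $\veps^{k-1}$ in $\sqrt{W+U}$ involves only $U, U_{[1]},\dots,U_{[k-1]}$ and their $x$-derivatives. Hence $\rmi\veps(\sqrt{W+U})_x$ at order $\veps^k$ is determined by the lower-order $U_{[j]}$, and matching coefficients yields a recursion that uniquely determines $U_{[k]}$, with seed $U_{[1]}=\rmi\,\partial_x\sqrt{2U}$.

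For part (2), set $W := \Lmd\, U$ with $\Lmd = \exp(\sqrt{2}\,\rmi\,\veps\,\pp{t^{B,0}})$, using that $\hat{x} = \hat t^{0,0} = t^{B,0}$ under \eqref{KdV reciproocal}. I verify that this $W$ satisfies \eqref{241025-1702} using the identity $U = \frac{8\Lmd}{\Lmd+1}[A^2]$ with $A = \frac{1}{\Lmd+1}\hat U$ from \eqref{U and hat U}. Summing gives
\[
W+U \;=\; (\Lmd+1)\cdot\frac{8\Lmd}{\Lmd+1}[A^2] \;=\; 8\Lmd[A^2] \;=\; 8(\Lmd A)^2,
\]
so $\sqrt{W+U} = 2\sqrt{2}\,\Lmd A = 2\sqrt{2}\,\frac{\Lmd}{\Lmd+1}\hat U$. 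For the difference, the key identity \eqref{241118-1357} combined with the $\hat t^{1,0}$-flow \eqref{hat t 1,0-flow} gives $(\Lmd-1)[A^2] = \hat U\cdot\frac{\Lmd-1}{\Lmd+1}\hat U = \frac{\sqrt{2}\,\rmi\,\veps}{4}\hat U_{\hat t^{1,0}}$, and substituting into
\[
W-U \;=\; \frac{8\Lmd(\Lmd-1)}{\Lmd+1}[A^2]
\]
produces $W-U = \rmi\,\veps\,\partial_{\hat t^{1,0}}\!\sqrt{W+U}$. Under the identification $\hat t^{1,0} = t^{1,0} = x$ from Theorem \ref{thm:linear reciprocal}, this is exactly \eqref{241025-1702}; the uniqueness from (1) then yields \eqref{241025-1709}.

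It remains to show that $U\mapsto W$ is a symmetry of the KdV hierarchy. This follows because $\pp{t^{B,0}}$ is a flow of the Legendre-extended Principal Hierarchy of $(M,B)$, and by construction all its flows pairwise commute; commutativity is preserved under the quasi-Miura transformation yielding the topological deformation, so $\pp{t^{B,0}}$ still commutes with the KdV flows $\pp{t^{1,p}}$, and hence its exponential maps KdV solutions to KdV solutions. The main obstacle is the precise shift-operator manipulation in the verification of \eqref{241025-1702}: one must apply \eqref{241118-1357} in exactly the right way to extract a $\hat t^{1,0}$-derivative, and then correctly identify $\hat t^{1,0}$ with the KdV spatial variable via the reciprocal transformation, so that the fully deformed expressions collapse into precisely $\rmi\,\veps\,\partial_x\sqrt{W+U}$.
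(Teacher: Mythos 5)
Your proof is correct and follows essentially the same route as the paper's: verify that $W:=\Lmd U$ solves \eqref{241025-1702} by combining \eqref{U and hat U}, the flow \eqref{hat t 1,0-flow} and the identity \eqref{241118-1357}, then invoke the uniqueness established in part (1). The one step you gloss over is the existence of the intermediate $\hat U$: since the corollary starts from an arbitrary KdV solution $U$, you must first exhibit a solution $\hat U$ of the extended $q$-deformed KdV hierarchy satisfying \eqref{U and hat U} before you are entitled to use \eqref{hat t 1,0-flow}; the paper does this by setting $\hat U=\frac{\Lmd+1}{2\sqrt{2}}\sqrt{U+\Lmd^{-1}U}$ (your own relation $\sqrt{W+U}=2\sqrt{2}\,\frac{\Lmd}{\Lmd+1}\hat U$ read backwards) and checking that it satisfies \eqref{U and hat U}. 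With that supplied, your computation coincides with the paper's.
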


  \begin{proof}
It is easy to see that we can solve $U_{[k]}$ recursively from \eqref{241025-1702}, so the first assertion of the corollary holds true.
In order to prove the second assertion of the corollary, it suffices to verify that the function $W$ defined by \eqref{241025-1709} solves the equation \eqref{241025-1702}. Indeed, from \eqref{KdV reciproocal} it follows that
    \[
      \exp\left(\sqrt{2}\,\rmi\,\veps\pp{t^{B,0}}\right)
    =
      \rme^{\sqrt{2}\,\rmi\,\veps\p_{\hat x}} = \Lmd.
    \]
We denote $U^{\pm}=\Lmd^{\pm}U$, then
  \[
    \hat U = \frac{\Lmd+1}{2\sqrt{2}}\sqrt{U+U^-}
  \]
 satisfies the relation \eqref{U and hat U}, so it is a solution
of the extended $q$-deformed KdV hierarchy.
 Therefore from \eqref{hat t 1,0-flow} and \eqref{241118-1357} we arrive at
  \begin{align*}
    (\Lmd-1)U
  &=\,
    \frac{8\Lmd}{\Lmd+1}(\Lmd-1)
    \left[
      \left(
        \frac{1}{\Lmd+1}\hat U
      \right)^2
    \right]
  =
    \left(
      \frac{8\Lmd}{\Lmd+1}\circ\hat U\circ\frac{\Lmd-1}{\Lmd+1}
    \right)\hat U
  \\
  &=\,
    \frac{2\sqrt{2}\,\rmi\,\veps\Lmd}{\Lmd+1}
    \pfrac{\hat U}{\hat t^{1,0}}
  =
    \frac{2\sqrt{2}\,\rmi\,\veps\Lmd}{\Lmd+1}
    \frac{\Lmd+1}{2\sqrt{2}}
    \left(
      \sqrt{U+U^-}
    \right)_x
  =
    \rmi\,\veps
    \left(
      \sqrt{U^++U}
    \right)_x,
  \end{align*}
i.e. $W:= U^+$ solves the equation \eqref{241025-1702}.
The corollary is proved.
  \end{proof}

In fact, the solution $W$ to the equation \eqref{241025-1702} has the form
\begin{align*}
  W &=\, U + \frac{U_x}{2U^{\frac12}} \epsilon
    +
    \left(
      \frac{U_{xx}}{8 U}-\frac{U_x^2}{8 U^2}
    \right)\epsilon ^2
   +
     \left(
       \frac{U^{(3)}}{32 U^{\frac 32}}
      +\frac{25 U_x^3}{256 U^{\frac 72}}
      -\frac{U_{xx}U_x}{8 U^{\frac 52}}
     \right) \epsilon^3 \\
   &\quad +
   \left(
     \frac{U^{(4)}}{128 U^2}
    -\frac{13 U^{(3)} U_x}{256 U^3}
    -\frac{9 U_{xx}^2}{256 U^3}
    +\frac{99 U_{xx} U_x^2}{512 U^4}
    -\frac{15 U_x^4}{128 U^5}
   \right)\epsilon ^4 + O(\veps^5),
\end{align*}
where $\epsilon = \sqrt{2}\,\rmi\,\veps$.

\section{Example: the Toda and the Ablowitz-Ladik hierarchies}

\subsection{Generalized Frobenius manifolds and their complete data}

Consider the 2-dimensional generalized Frobenius manifold $M$ with potential
\begin{equation} \label{AL-GFM-potential}
  F=\frac12 v^2u+v\rme^u+\frac12 v^2\log v,
\end{equation}
where $(v,u):=(v^1,v^2)$ are the flat coordinates with respect to the flat metric
\[
  \eta = \td v\otimes\td u + \td u\otimes \td v.
\]
The unity $e$, Euler vector field $E$ and monodromy data $\mu, R$ of $M$ are given by
\begin{equation} \label{e, E of AL}
  e= \frac{v\p_v-\p_u}{v-\rme^u},\quad
  E=v\p_v+\p_u,\quad
  \mu = \begin{pmatrix}
          -\frac12 & 0 \\
          0 & \frac12
        \end{pmatrix},
  \quad
  R=R_1=\begin{pmatrix}
          0 & 0 \\
          2 & 0
        \end{pmatrix}.
\end{equation}
This generalized Frobenius manifold and its relation to the \textit{Ablowitz-Ladik hierarchy}
are presented in \cite{Brini,AL-triham,GFM1,ExtendedAL}, and it is shown in \cite{GFM1,ExtendedAL} that the flows of the Ablowitz-Ladik hierarchy are contained in the topological deformation of the Principal Hierarchy of $M$.

Consider the vector field
\begin{equation} \label{AL-GFM-legendre field}
  B=\p_v
\end{equation}
on $M$,
which is an invertible quasi-homogeneous Legendre field such that
\[
  \mu_B=\mu_1=-\frac12.
\]
It is clear that the flows $\pp{t^{B,p}}$ in the Legendre-extended Principal Hierarchy
coincide with the flows $\pp{t^{1,p}}$ for $p\in\bbZ$, and the
Legendre-extended calibrations $\xi_{B,p}$, the Hamiltonian densities $\theta_{B,p}$
and the 2-point functions $\Omg_{B,p;i,q}$ can be chosen as
\begin{equation}\label{240701-1101}
  \xi_{B,p}=\xi_{1,p},\qquad
  \theta_{B,p}=\theta_{1,p},\qquad
  \Omg_{B,p;i,q}=\Omg_{1,p;i,q}
\end{equation}
for $p,q\in\bbZ$ and $i\in\{0,B,1,2\}$.
As shown in \cite{AL-Legendre},
this Legendre field $B$ transforms $M$ to the Frobenius manifold $\hat M$ with potential
\begin{equation}\label{F-Toda}
  \hat F = \frac12\hat v^2\hat u+\rme^{\hat u},
\end{equation}
where $(\hat v,\hat u):=(\hat v^1, \hat v^2)$ are the flat coordinates with respect to the new flat matric
\[
  \hat \eta = \td\hat v\otimes\td\hat u+\td\hat u\otimes\td\hat v,
\]
and they are related with $(v,u)$ by
\begin{equation}\label{AL-Toda coordinates}
     \hat v = v+\rme^u, \quad
    \hat u = \log v + u.
 \end{equation}
The relation of the Frobenius manifold $\hat M$ with the \textit{extended Toda hierarchy}
and the Gromov--Witten invariants of $\bbC\bbP^1$ are studied in
\cite{Extended Toda-1,normal-form,Extended Toda-2,Getzler,CP1}.
In the coordinates $\hat v, \hat u$, the unit vector field $e$ and the Euler vector field $E$ given in \eqref{e, E of AL} have the form
\begin{equation}
  e = \p_{\hat v},\qquad
  E = \hat v\p_{\hat v} + 2\p_{\hat u}.
\end{equation}
Note that the unit vector field $e=\p_{\hat v}$ is flat with respect to the new metric $\hat\eta$,
therefore the $\pp{\hat t^{0,p}}$-flows coincide with the $\pp{\hat t^{1,p}}$-flows in the Principal Hierarchy of $\hat M$.
From \eqref{hat v-afa2} and \eqref{e, E of AL} it follows that the inversion $\hat B:=B^{-1}$ has the form
\begin{align}\label{Toda Legendre field}
  \hat B &=\, \hat B\cdot e
         = \frac{v(\hat B\cdot\p_v) - (\hat B\cdot \p_u)}{v-\rme^u} =
    \frac{v\p_{\hat v}-\p_{\hat u}}{v-\rme^u}.
\end{align}
According to \eqref{240609-2127-1}, \eqref{240701-1058}, \eqref{240701-1059} and \eqref{240701-1101},
the Hamiltonian densities $\hat\theta_{\hat B,p}$ of the flows $\pp{\hat t^{\hat B,p}}$
of the Legendre-extended Principal Hierarchy of $(\hat M,\hat B)$ have the form
\begin{equation}
  \hat\theta_{\hat B,p} = \hat\Omg_{\hat B,p;0,0}
  =\Omg_{0,p;B,0} = \Omg_{0,p;1,0}
  =\pair{\xi_{0,p+1}}{\xi_{1,0}}
  =\pfrac{\theta_{0,p+1}}{v},\quad p\in\bbZ,
\end{equation}
where the first few terms and explicit formulas of $\theta_{0,p}$ of $M$
can be found in \cite{GFM1} and \cite{ExtendedAL} respectively.
For example, we have
\begin{align}
 & \hat\theta_{\hat B,-3} =
      -\frac{2 \left(v^2+4 v \rme^u+\rme^{2 u}\right)}{\left(v-\rme^u\right)^5},\quad
  \hat\theta_{\hat B,-2} = \frac{v+\rme^u}{\left(v-\rme^u\right)^3}, \notag\\
 & \hat\theta_{\hat B,-1}=
      -\frac{1}{v-\rme^u}, \notag\quad
  \hat\theta_{\hat B,0} = u,\quad
  \hat\theta_{\hat B,1} = v(u+1) +\rme^u (u-1), \\
&  \hat\theta_{\hat B,2}=
      \frac{1}{4} \left( v^2(2 u+3)+4 v\rme^u (2 u-1)+\rme^{2 u} (2 u-3)\right), \notag\\
  &\hat\theta_{\hat B,3} =
      \frac{1}{36} \left(v^3(6 u+11)+9 v^2\rme^u (6 u-1)+9 v \rme^{2 u} (6 u-7)+\rme^{3 u} (6 u-11)\right),\notag
\end{align}
here $v,u$ and $\hat v,\hat u$ are related by \eqref{AL-Toda coordinates}.
The Hamiltonian densities $\hat\theta_{0,p}=\hat\theta_{1,p}$ and $\hat\theta_{2,p}$
of the Frobenius manifold $\hat M$ with potential \eqref{F-Toda} are given in \cite{normal-form,Extended Toda-2} as follows:
\begin{align}
  \hat\theta_1(z)&=
    \sum_{p\geq 0}\theta_{1,p}z^p
  =
    -2\rme^{z\hat v}
    \left(
      \rmK_0(2z\rme^{\frac12\hat u})
     +(\log z+\gamma)\rmI_0(2z\rme^{\frac12\hat u})
    \right), \\
  \hat\theta_2(z)&=
    \sum_{p\geq 0}\theta_{2,p}z^p
  =
    \frac 1z
    \left(
      \rme^{z\hat v}\rmI_0(2z\rme^{\frac12\hat u})-1
    \right),
\end{align}
where $\rmI_0$, $\rmK_0$ are the modified Bessel functions, and $\gamma=0.577\cdots$
is the Euler's constant.
Then from \eqref{240701-1433} and \eqref{AL-Toda coordinates} we obtain
\begin{align*}
  \pfrac{\theta_1(z)}{u}
&=
  2zv^{\frac12}\rme^u
  \rme^{z(v+\rme^u)}
  \left(
    \rmK_1(2zv^{\frac12}\rme^u)
   -(\log z+\gamma)\rmI_1(2zv^{\frac12}\rme^u)
  \right),
\\
  \pfrac{\theta_1(z)}{v}
&=
  -2z\rme^{z(v+\rme^u)}
  \left(
    \rmK_0(2zv^{\frac12}\rme^u)
   +(\log z+\gamma)\rmI_0(2zv^{\frac12}\rme^u)
  \right),
\\
  \pfrac{\theta_2(z)}{u}
&=
  v^{\frac12}\rme^{\frac u2}
  \rme^{z(v+\rme^u)}
  \rmI_1(2zv^{\frac12}\rme^u),
\\
  \pfrac{\theta_2(z)}{v}
&=
  \rme^{z(v+\rme^u)}
  \rmI_0(2zv^{\frac12}\rme^u),
\end{align*}
which provide an alternative formulation of the Hamiltonian densities $\{\theta_{i,p}\}_{i=1,2,\,p\geq 0}$
of the dispersionless extended Ablowitz-Ladik hierarchy \cite{Brini,ExtendedAL}.

Now let us compute the complete data $\widetilde{\mu}_B, \widetilde{R}_B$ of $(M,B)$.
It is not difficult to see that some of the 2-point functions of $(M,B)$ are given by
\begin{align*}
  \Omg_{0,0;0,0} &=\, \theta_{0,0} = u-\log(v-\rme^u), \\
  \Omg_{0,0;B,0} &=\, \theta_{B,0} = \theta_{1,0} = u, \\
  \Omg_{B,0;B,0} &=\, \Omg_{1,0;1,0} = u+\log v,
\end{align*}
therefore $\mcalL_E\Omg_{0,0;0,0}=0$, $
  \mcalL_E\Omg_{0,0;B,0}=1$ and
  $\mcalL_E\Omg_{B,0;B,0}=2$.
Then from \eqref{p1Omg} and other basic properties of the complete data it follows that
\[
  \widetilde{\mu}_B
=
  \left(
    \begin{array}{cccccc}
      -\frac12 &&&&&\\
      &-\frac12&&&&\\
      &&-\frac12&&&\\
      &&&\frac12&&\\
      &&&&\frac12&\\
      &&&&&\frac12
    \end{array}
  \right),
\quad
  \widetilde{R}_B = \widetilde{R}_{B;1}
=
    \left(
    \begin{array}{cc|cc|cc}
      0 &0&&&&\\
      0&0&&&&\\
      \hline
      0&0&0&0&&\\
      1&2&2&0&&\\
      \hline
      0&1&1&0&0&0\\
      1&2&2&0&0&0
    \end{array}
  \right).
\]
On the other hand, the complete data $\hat{\widetilde{\mu}}_{\hat B},
\hat{\widetilde{R}}_{\hat B}$ of $(\hat M,\hat B)$ can be obtained from the above $\widetilde{\mu}_B, \widetilde{R}_B$
by using Theorem \ref{thm:tau correspendence}.

\subsection{Solutions to the loop equations}
In this subsection, we compare the solutions to the loop equations of $\hat M$ and $M$.
The loop equation of the Frobenius manifold $\hat M$ can be found in Sect.\,5 of \cite{Extended Toda-2}
(see also in Example 3.10.27 of \cite{normal-form}).
The genus-one solution $\hat\mcalF_1$ to this loop equation has the expression
\begin{equation}\label{Toda F1}
  \hat\mcalF_1 =
  \frac{1}{24}\log\left(\hat v^2_{\hat x}-\rme^{\hat u}\hat u^2_{\hat x}\right)-\frac{1}{24}\hat u.
\end{equation}
Under the linear reciprocal transformation \eqref{reciprocal}--\eqref{240701-1622} and \eqref{AL-Toda coordinates},
the relation between jet variables $\hat v_{\hat x}, \hat u_{\hat x}$ and $v_x, u_x$ are given by
\begin{align*}
  \hat v_{\hat x} &= \frac 1v\left((v+\rme^u)v_x + 2v\rme^uu_x\right), \\
  \hat u_{\hat x} &= \frac 1v\left(2v_x+(v+\rme^u)u_x\right).
\end{align*}
Then $\hat\mcalF_1$ given in \eqref{Toda F1} can be rewritten as
\begin{equation}
  \hat\mcalF_1
=
  \frac{1}{24}\log\left(v_x^2-v\rme^uu_x^2\right)
 +\frac{1}{12}\log(v-\rme^u)
 -\frac{1}{8}\log v
 -\frac{1}{24}u,
\end{equation}
which coincides with the genus-one solution $\mcalF_1$ to the loop equation of generalized Frobenius manifold $M$,
see Example 11.2 of \cite{GFM1}.

As it is shown in \cite{Extended Toda-2},
the topological deformation of the $\pp{\hat t^{i,p}}$-flows $(i=1,2)$
of $\hat M$ yields the so-called extended Toda hierarchy \cite{Extended Toda-1} up to a certain Miura-type transformation,
and the additional Legendre flows $\pp{\hat t^{\hat B,p}}$ give a further extension of the extended Toda hierarchy.
On the other hand, it is proved in \cite{ExtendedAL} that the topological deformation of the Legendre-extended Principal Hierarchy of $(M,B)$
coincides with the so-called extended Ablowitz-Ladik hierarchy up to a certain Miura-type transformation.
Based on the discussion given in Section \ref{subsection:full genera}
and the fact that Legendre flows $\pp{t^{B,p}}=\pp{t^{1,p}}$,
we know that the above-mentioned extension of the extended Toda hierarchy
and the extended Ablowitz-Ladik hierarchy are related by the linear reciprocal transformation \eqref{reciprocal}
up to a certain Miura-type transformation.

\subsection{The extended Toda hierarchy}
In this subsection, let us recall the definition and some basic properties of the extended Toda hierarchy.
For more details, see \cite{Extended Toda-1, Extended Toda-2}.
Introduce the Lax operator
\begin{equation} \label{Toda Lax Operator}
  \hat L = \hat\Lmd+\hat V+\rme^{\hat U}\hat\Lmd^{-1},
\end{equation}
where $\hat V, \hat U$ are unknown functions, and the shift operator
\begin{equation}
  \hat\Lmd=\rme^{\veps\p_{\hat x}}.
\end{equation}
The \textit{extended Toda hierarchy} consists of the following flows
\begin{equation}\label{Extended Tdoa Hierarchy}
  \veps\pfrac{\hat L}{\hat t^{\afa,p}}=[A_{\afa,p}, \hat L],\quad \afa=1,2,\, p\geq 0,
\end{equation}
where
\begin{align*}
  A_{1,p} = \frac{2}{p!}
  \left(
    \hat L^p
    (
      \log \hat L - c_p
    )
  \right)_+, \quad
  A_{2,p} =\frac{1}{(p+1)!}\left(\hat L^{p+1}\right)_+,
\end{align*}
and $c_p=1+\frac12+\cdots+\frac 1p$ for $p\geq 1$, $c_0=0$.
Note that for an operator of the form $X=\sum\limits_{k\in\bbZ}X_k\hat\Lmd^k$,
the positive part $X_+$ is defined as $\sum\limits_{k\geq 0}X_k\hat\Lmd^k$, and
$\res X = X_0$.
The definition of $\log \hat L$ can be found in \cite{Extended Toda-1} and is omitted here.

The $\pp{\hat t^{2,p}}$-flows for $p \geq 0$, known as the positive flows,
constitute the usual Toda hierarchy,
while the logarithmic flows $\pp{\hat t^{1,p}}$ serve as an extension of this hierarchy.
The first flows of the extended Toda hierarchy \eqref{Extended Tdoa Hierarchy} are
$\pp{\hat t^{1,0}}=\p_{\hat x}$ and
\begin{equation}\label{first flow of ETH}
\veps\pfrac{\hat V}{\hat t^{2,0}} = \rme^{\hat U^{\oplus}}-\rme^{\hat U},\qquad
\veps\pfrac{\hat U}{\hat t^{2,0}} = \hat V-\hat V^{\ominus},
\end{equation}
here we introduce the following short notation: for a function $f$ we denote
\begin{equation}
  f^{\oplus} = \hat\Lmd f,\qquad f^{\ominus} = \hat\Lmd^{-1} f.
\end{equation}

The extended Toda hierarchy \eqref{Extended Tdoa Hierarchy} possesses a tau structure.
For any solution of the extended Toda hierarchy there exists a function $\hat\tau$ such that
\begin{align}
  \hat V&=\,\veps(\hat\Lmd-1)\pfrac{\log\hat\tau}{\hat t^{2,0}}, \label{Toda-tau-1}\\
  \hat U&=\, (\hat\Lmd-1)(1-\hat\Lmd^{-1})\log\hat\tau,  \label{Toda-tau-2}
\end{align}
\begin{equation}\label{tau structure of ETH-3}
  \veps(\hat\Lmd-1)
  \frac{\p^2\log\hat\tau}{\p\hat t^{\afa,p}\p\hat t^{\beta,q}}
=
  \pp{\hat t^{\beta,q}}\res A_{\afa,p}
\end{equation}
for $\afa,\beta=1,2$ and $p,q\geq 0$.
For example, we have the following two-point functions
\begin{align}
  \veps^2\frac{\p^2\log\hat\tau}{\p\hat t^{2,0}\p\hat t^{2,0}} &=\, \rme^{\hat U}, \label{Toda-2p-1}\\
  \veps^2\frac{\p^2\log\hat\tau}{\p\hat t^{2,0}\p\hat t^{2,1}}
  &=\, \frac12 (\hat V+\hat V^\ominus )\rme^{\hat U},   \label{Toda-2p-2}
\end{align}
which can be derived from \eqref{first flow of ETH} and \eqref{tau structure of ETH-3} directly.

The extended Toda hierarchy is the topological deformation of the Principal Hierarchy of $\hat M$ \eqref{F-Toda}
up to a certain Miura-type transformation.
Precisely speaking, let
\[\Delta\hat\mcalF=\sum\limits_{g\geq 1}\veps^{2g-2}\hat\mcalF_g\]
be the solution to the loop equation of $\hat M$,
and suppose $\hat v,\hat u$ satisfy the Principal Hierarchy of $\hat M$, then
\begin{align*}
  \hat V &=\, \frac{\hat\Lmd-1}{\veps\p_{\hat x}}
  \left(
    \hat v+\veps^2\frac{\p^2\Delta\hat\mcalF}{\p\hat x\p\hat t^{2,0}}
  \right), \\
  \hat U &=\, \frac{(\hat\Lmd-1)(1-\hat\Lmd^{-1})}{\veps^2\p_{\hat x}^2}
  \left(
    \hat u + \veps^2\frac{\p^2\Delta\hat\mcalF}{\p\hat x\p\hat t^{1,0}}
  \right)
\end{align*}
satisfy the extended Toda hierarchy \eqref{Extended Tdoa Hierarchy},
and the corresponding tau function satisfies
\[
  \log\hat\tau = \veps^{-2}\hat f + \Delta\hat\mcalF,
\]
where $\hat f$ be the genus-zero free energy of the tau-cover \eqref{tau cover}.

We remark that in the Principal Hierarchy of $\hat M$, we have
\[
  \pp{\hat t^{0,p}} = \pp{\hat t^{1,p}}\qquad\text{for $p\geq 0$},
\]
and $\pp{\hat t^{0,p}}=0$ for $p\leq -1$.
The Legendre field $\hat B$  of $\hat M$ defined in \eqref{Toda Legendre field}
yields a family of extended flows $\left\{\pp{\hat t^{\hat B,p}}\right\}_{p\in\bbZ}$,
the topological deformation of which gives a further extension of the extended Toda hierarchy.

\subsection{The extended Ablowitz-Ladik hierarchy}
In this subsection, let us recall the definition and some basic properties of the extended Ablowitz-Ladik hierarchy.
For more details, see \cite{AL-triham, ExtendedAL} and Sect.\,11 of \cite{GFM1}.
Introduce the Lax operator
\begin{align}
  L &=\, (1-Q\Lmd^{-1})^{-1}(\Lmd-P) \notag\\
  &=\,
    \Lmd + (Q-P) + Q(Q^--P^-)\Lmd^{-1} + \cdots, \label{AL-Lax operator}
\end{align}
where $P,Q$ are unknown functions, and the shift operator
\begin{equation}
  \Lmd = \rme^{\veps\p_x}.
\end{equation}
Here we introduce the short notation
\begin{equation}
  f^+=\Lmd f,\qquad f^-=\Lmd^{-1}f
\end{equation}
for any function $f$. The Ablowitz-Ladik hierarchy consists of following flows:
\begin{align}
  \veps\pfrac{L}{t^{2,p}} &=\, \frac{1}{(p+1)!}\left[\left(L^{p+1}\right)_+,L\right], \label{AL-positive flows}\\
  \veps\pfrac{L}{t^{0,-p-1}} &=\, (-1)^pp!\left[\left(L^{-p-1}\right)_-,L\right]\label{AL-negative flows}
\end{align}
for $p\geq 0$, which are known as the positive and negative flows respectively.
For example, the first positive flow has the form
\begin{equation}\label{AL-t20flow}
  \veps\pfrac{P}{t^{2,0}} = P(Q^+-Q), \quad
  \veps\pfrac{Q}{t^{2,0}} = Q(Q^+-Q^--P+P^-).
\end{equation}

Introduce new coordinates
\begin{equation}\label{AL-VU coordinate}
  V=Q-P,\qquad U=\log Q,
\end{equation}
then the $t^{2,0}$-flow \eqref{AL-t20flow} can be rewritten as
\begin{equation}\label{AL-t20flow-2}
  \veps\pfrac{V}{t^{2,0}} = V\rme^{U^+}-V^-\rme^U, \quad
  \veps\pfrac{U}{t^{2,0}} = \rme^{U^+}-\rme^U+V-V^{-}.
\end{equation}
In general, the dispersionless limit of Ablowitz-Ladik hierarchy \eqref{AL-positive flows}--\eqref{AL-negative flows}
coincides with the $t^{2,p}$- and $t^{0,-p-1}$-flows
in the Principal Hierarchy of the generalized Frobenius manifold $M$ with potential \eqref{AL-GFM-potential}
under the change of variables $V\mapsto v$ and $U\mapsto u$.

The Ablowitz-Ladik hierarchy \eqref{AL-positive flows}--\eqref{AL-negative flows} admits two families of extended flows
\begin{equation}\label{AL-log flow}
  \pp{t^{0,p}},\ \pp{t^{1,p}},\quad p\geq 0
\end{equation}
with
\begin{equation*}
  \pp{t^{0,0}} = \p_x,
\end{equation*}
and their dispersionless limits coincide which the $t^{0,p}$- and $t^{1,p}$-flows
of the Principal Hierarchy of $M$ under the change variables $V\mapsto v$ and $U\mapsto u$.
The Ablowitz-Ladik hierarchy \eqref{AL-positive flows}--\eqref{AL-negative flows} together with extended flows \eqref{AL-log flow}
is known as the \textit{extended Ablowitz-Ladik hierarchy} \cite{ExtendedAL}.
Although we do not know at the moment the explicit formulations of the extended flows \eqref{AL-log flow} in terms of Lax equations,
we have the following result.

\begin{prop}(Theorem 3.1 of \cite{ExtendedAL}).
The extended Ablowitz-Ladik hierarchy admits the following discrete symmetry
  \begin{align}
    P&\mapsto\, P^\oplus = \frac{P(Q^{++}-P^+)}{Q^+-P}, \label{discrete-sym of PQ-1}\\
    Q&\mapsto\, Q^\oplus = \frac{Q^+(Q^{++}-P^+)}{Q^+-P}. \label{discrete-sym of PQ-2}
  \end{align}
Moreover, the above symmetry is generated by the extended flow $\pp{t^{1,0}}$, i.e.
\begin{equation}
  P^\oplus = \rme^{\veps\pp{t^{1,0}}}P,\quad
  Q^\oplus = \rme^{\veps\pp{t^{1,0}}}Q.
\end{equation}
\end{prop}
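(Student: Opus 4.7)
The plan is to derive both parts of the proposition by translating the symmetry to the Toda side via the Legendre-transformation framework developed in the paper. The crucial observation is that in this example $B=\p_v=\xi_{1,0}$, so the Legendre time $t^{B,p}$ equals $t^{1,p}$, and in particular $t^{1,0}=t^{B,0}$. The linear reciprocal transformation \eqref{reciprocal} then identifies $\hat x=\hat t^{0,0}=t^{B,0}=t^{1,0}$; combined with the common tau function $\tau=\hat\tau$ furnished by Theorem \ref{thm: sol to loop eqn}, this gives the key identity $\rme^{\veps\pp{t^{1,0}}}\tau=\rme^{\veps\p_{\hat x}}\tau=\hat\Lmd\tau$.

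With this in place, the symmetry-preservation part of the proposition is essentially automatic: the Toda shift $\hat\Lmd$ commutes with every flow $\p/\p\hat t^I$ of the extended Toda hierarchy, and these flows are identified (modulo the Miura-type transformation recalled at the end of the preceding subsection) with the flows of the extended Ablowitz-Ladik hierarchy under the reciprocal identification. Hence $\rme^{\veps\pp{t^{1,0}}}$ sends solutions of the extended Ablowitz-Ladik hierarchy to solutions.

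It remains to derive the closed-form expressions \eqref{discrete-sym of PQ-1}--\eqref{discrete-sym of PQ-2}. I would do this by expressing $P,Q$ in terms of the tau function via the Ablowitz-Ladik tau structure recorded in \cite{ExtendedAL}, which involves the AL spatial shift $\Lmd=\rme^{\veps\p_x}$; note that this is a \emph{different} direction from the Toda shift $\hat\Lmd$, since under the reciprocal identification $x=t^{0,0}$ becomes the new extension time $\hat t^{\hat B,0}$. Applying $\hat\Lmd$ to these tau-function expressions and simplifying using a Hirota-type bilinear identity that interchanges the two shift directions on $\tau$ should collapse to the stated rational formulas.

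The main obstacle is precisely this Hirota-type manipulation in the last step. An alternative that avoids it is to verify the formulas perturbatively: compute the leading $\veps$-coefficient $\pp{t^{1,0}}P$ directly from the dispersionless Legendre relation $\hat v=v+\rme^u$, $\hat u=\log v+u$ and check that it matches the linearization of $P(Q^{++}-P^+)/(Q^+-P)$ in $\veps$, then bootstrap to all orders in $\veps$ using commutativity of $\pp{t^{1,0}}$ with the Lax flows \eqref{AL-positive flows}--\eqref{AL-negative flows}, which uniquely determines the full expression. In spirit this mirrors the argument that produced the analogous KdV-side identity $W-U=\rmi\veps(\sqrt{W+U})_x$ in the previous example.
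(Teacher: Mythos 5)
First, note that the paper offers no proof of this proposition at all: it is imported verbatim as Theorem 3.1 of \cite{ExtendedAL}, and the present paper then \emph{uses} it (via \eqref{V ominus}--\eqref{U ominus}) in the proof of the Toda--Ablowitz-Ladik correspondence in the final subsection. Your conceptual skeleton --- identifying $\rme^{\veps\pp{t^{1,0}}}$ with the Toda spatial shift $\hat\Lmd$ through the reciprocal identification \eqref{reciprocal} and the shared tau function of Theorem \ref{thm: sol to loop eqn} --- is consistent with how the paper exploits the result, and the observation $t^{B,p}=t^{1,p}$, $\hat x=\hat t^{0,0}=t^{1,0}$ is correct.

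The genuine gap is that the actual content of the proposition is never established. The first half (``the maps \eqref{discrete-sym of PQ-1}--\eqref{discrete-sym of PQ-2} are a symmetry'') carries no independent weight: \emph{any} flow of the hierarchy exponentiates to a symmetry, so the entire burden is to prove that $\rme^{\veps\pp{t^{1,0}}}$ acts on $P,Q$ by precisely those rational expressions --- and this is nontrivial because, as the paper itself remarks, the extended flows \eqref{AL-log flow} have no known Lax formulation. Your first route defers exactly this step to an unstated ``Hirota-type bilinear identity that interchanges the two shift directions,'' which you acknowledge is the main obstacle; without exhibiting that identity the argument is incomplete. Your fallback route checks only the $O(\veps)$ term and then asserts that commutativity with the Lax flows \eqref{AL-positive flows}--\eqref{AL-negative flows} ``uniquely determines the full expression''; that uniqueness claim is not justified (commuting symmetries of an integrable hierarchy are far from unique, and matching a dispersionless limit does not pin down the $\veps$-corrections without an additional rigidity statement). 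There is also a circularity hazard: the tau-structure formulas \eqref{AL-tau-1}--\eqref{AL-tau-2} and the identification of the topological deformation of the Legendre-extended Principal Hierarchy with the extended Ablowitz-Ladik hierarchy are themselves results of \cite{ExtendedAL}, the very paper whose Theorem 3.1 you are trying to reprove, so you would need to verify that none of those inputs already rests on the discrete symmetry before the argument can stand on its own.
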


We introduce the shift operator
\begin{equation}
  \hat\Lmd = \rme^{\veps\pp{t^{1,0}}},
\end{equation}
and we denote
\[
  f^\oplus = \hat\Lmd f,\qquad f^\ominus = \hat\Lmd^{-1}f.
\]
Here we use the same notations $\hat\Lmd$, $\oplus$ and $\ominus$ as in previous subsection, this is because
the flow $\pp{t^{1,0}}$ turns out to be identified with the flow $\pp{\hat t^{1,0}}=\pp{\hat t^{0,0}}=\p_{\hat x}$
of the extended Toda hierarchy via the linear reciprocal transformation \eqref{reciprocal}.

From \eqref{discrete-sym of PQ-1}--\eqref{discrete-sym of PQ-2} it follows that
\[
  P^\ominus = \frac{P(Q^--P^-)}{Q-P},\quad
  Q^\ominus = \frac{Q^-(Q^{--}-P^{--})}{Q^--P^-},
\]
so in the coordinates $V, U$ defined by \eqref{AL-VU coordinate}, the above discrete symmetry of the extended Ablowitz-Ladik hierarchy can be reformulated as
\begin{align}
  V^\oplus &=\, V^+ + \rme^{U^{++}}-\rme^{U^+}, \\
  \rme^{U^\oplus} &=\, \rme^{U^+}\frac{V^+ + \rme^{U^{++}}-\rme^{U^+}}{V + \rme^{U^{+}}-\rme^{U}}, \\
  V^\ominus &=\, V^- + \rme^{U^-}\frac{V^{--}}{V^-}-\rme^U\frac{V^-}{V}, \label{V ominus}\\
  \rme^{U^\ominus} &=\, \rme^{U^-}\frac{V^{--}}{V^-}.  \label{U ominus}
\end{align}

The extended Ablowitz-Ladik hierarchy admits a tau structure.
For any solution of the extended Ablowitz-Ladik hierarchy there exists a function $\tau$ such that
\begin{align}
  V&=\, \veps(\Lmd-1)\pfrac{\log\tau}{t^{2,0}}, \label{AL-tau-1}\\
  U&=\, (1-\Lmd^{-1})(\hat\Lmd-1)\log\tau, \label{AL-tau-2}\\
\label{AL-tau structure}
  \veps(\Lmd-1)\frac{\p^2\log\tau}{\p t^{2,p}\p t^{2,q}}
&=
\pp{t^{2,q}}
  \left(
    \frac{1}{(p+1)!}
    \res L^{p+1}
  \right),\quad p,q\geq 0.
\end{align}
For example, we have the following two-point functions
\begin{align}
  \veps^2\frac{\p^2\log\tau}{\p t^{2,0}\p t^{2,0}}
  &=\, (Q^--P^-)Q = V^-\rme^U,  \label{AL-2p-1}\\
  \veps^2\frac{\p^2\log\tau}{\p t^{2,0}\p t^{2,1}}
  &=\,
    \frac12\Big(
      Q^+Q(Q^--P^-)+QQ^-(Q^{--}-P^{--})
  \notag\\
  &\qquad
    +Q(Q-P)(Q^--P^-)+Q(Q^--P^-)^2\Big)  \notag\\
  &=\,\frac12\Big(
    \rme^{U+U^+}V^- + \rme^{U^-+U}V^{--} + (V^-+V)V^-\rme^U
  \Big),  \label{AL-2p-2}
\end{align}
which can be derived from \eqref{AL-t20flow} and \eqref{AL-tau structure} directly.

The extended Ablowitz-Ladik hierarchy is the topological deformation of the Principal Hierarchy of $M$ with potential \eqref{AL-GFM-potential}
up to a certain Miura-type transformation.
More precisely, let
\[\Delta\mcalF=\sum\limits_{g\geq 1}\veps^{2g-2}\mcalF_g\]
be the solution to the loop equation of $M$,
and suppose $v,u$ satisfy the Principal Hierarchy of $M$, then
it was conjectured in \cite{GFM1} and was proved in \cite{ExtendedAL} that
\begin{align*}
  V&=\, \frac{\Lmd-1}{\veps\p_x}
  \left(
    v+\veps^2\frac{\p^2\Delta\mcalF}{\p x\p t^{2,0}}
  \right), \\
  U&=\, \frac{1-\Lmd^{-1}}{\veps\p_x}
        \frac{\hat\Lmd-1}{\veps\p_{t^{1,0}}}
        \left(
          u+\veps^2\frac{\p^2\Delta\mcalF}{\p x\p t^{1,0}}
        \right)
\end{align*}
satisfy the extended Ablowitz-Ladik hierarchy under the change of variables \eqref{AL-VU coordinate},
and the corresponding tau function satisfies
\begin{equation}
  \log\tau = \veps^{-2}f + \Delta\mcalF,
\end{equation}
where $f$ is the genus-zero free energy of the tau-cover \eqref{tau cover}.

We remark that the Legendre-extended flows $\pp{t^{B,p}}$ of $M$
associated with the Legendre field $B=\p_v$ \eqref{AL-GFM-legendre field} coincide with $\pp{t^{1,p}}$ for $p\geq 0$, and
$\pp{t^{B,p}}=0$ for $p\leq -1$.

\subsection{Linear reciprocal transformation between hierarchies}

Since the generalized Frobenius manifold $M$ and the Frobenius manifold $\hat M$ with potentials \eqref{AL-GFM-potential} and \eqref{F-Toda} respectively are related by a generalized Legendre transformation,
Theorem \ref{thm: sol to loop eqn} asserts 
that the topological deformations of the Legendre-extended Principal Hierarchies of them
are related by the linear reciprocal transformation
\begin{equation}\label{AL-Toda reciprocal}
  \hat t^{1,p}=t^{1,p},\quad
  \hat t^{2,p}=t^{2,p},\quad
  \hat t^{\hat B,p}=t^{0,p},\quad
  \hat t^{\hat B,-p-1}=t^{0,-p-1},\quad p\ge 0,
\end{equation}
therefore,
the extended Toda hierarchy and the extended Ablowitz-Ladik hierarchy are related by this linear reciprocal transformation.

\begin{thm}
Suppose $V, U$ satisfy the extended Ablowitz-Ladik hierarchy
\eqref{AL-positive flows}, \eqref{AL-negative flows} and \eqref{AL-log flow} under the change of variables \eqref{AL-VU coordinate}, then
\begin{equation}\label{AL-Toda full-genera coord transf}
  \hat V = V+\rme^{U^+},\qquad
  \hat U = \log V^- + U
\end{equation}
satisfy the extended Toda hierarchy \eqref{Extended Tdoa Hierarchy} and its Legendre-extended flows $\pp{\hat t^{\hat B,p}}$
via the linear reciprocal transformation \eqref{AL-Toda reciprocal}.
In particular, $\hat V, \hat U$ satisfy the Toda equation \eqref{first flow of ETH}.
\end{thm}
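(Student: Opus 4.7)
The strategy is to invoke Theorem \ref{thm: sol to loop eqn} to identify the tau functions of the two hierarchies, and then to derive the full-genus coordinate change \eqref{AL-Toda full-genera coord transf} from the tau-structure formulas on both sides. Recalling from the preceding subsections (and proved in \cite{ExtendedAL,Extended Toda-2}) that the extended Ablowitz--Ladik hierarchy is the topological deformation of the Legendre-extended Principal Hierarchy of $(M,B)$, while the extended Toda hierarchy together with its Legendre-extended flows $\pp{\hat t^{\hat B,p}}$ is the topological deformation of that of $(\hat M,\hat B)$ (both identifications up to a Miura-type transformation), Theorem \ref{thm: sol to loop eqn} yields under the reciprocal transformation \eqref{AL-Toda reciprocal} the identification $\tau=\hat\tau$ of the two tau functions. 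Consequently, defining $\hat V,\hat U$ from $\tau$ via the Toda tau-structure relations \eqref{Toda-tau-1}--\eqref{Toda-tau-2} automatically produces a solution of the extended Toda hierarchy, and the Legendre-extended Toda flows $\pp{\hat t^{\hat B,p}}$ correspond to the AL flows $\pp{t^{0,p}}$ by construction. It remains only to verify that the $\hat V,\hat U$ so defined equal those prescribed by \eqref{AL-Toda full-genera coord transf}.

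The identity $\hat U=\log V^-+U$ is immediate: since $\hat t^{2,0}=t^{2,0}$ and $\hat\tau=\tau$, the two-point-function formulas \eqref{Toda-2p-1} and \eqref{AL-2p-1} both compute $\varepsilon^2\partial_{t^{2,0}}^2\log\tau$, yielding $e^{\hat U}=V^-e^U$.

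For $\hat V=V+e^{U^+}$, combining \eqref{Toda-tau-1} and \eqref{AL-tau-1} with $\hat\tau=\tau$ gives $\hat V=\varepsilon(\hat\Lambda-1)\partial_{t^{2,0}}\log\tau$ and $V=\varepsilon(\Lambda-1)\partial_{t^{2,0}}\log\tau$. Applying $(1-\Lambda^{-1})$ to the first expression and using its commutativity with $(\hat\Lambda-1)$ together with \eqref{AL-tau-2} yields $\varepsilon\partial_{t^{2,0}}U=\hat V-\hat V^-$; comparing this with the AL $t^{2,0}$-flow formula $\varepsilon\partial_{t^{2,0}}U=e^{U^+}-e^U+V-V^-$ coming from \eqref{AL-t20flow-2} gives the functional equation
\[
(1-\Lambda^{-1})\bigl(\hat V-V-e^{U^+}\bigr)=0.
\]
Thus $\hat V-V-e^{U^+}$ is $\Lambda$-invariant as a differential-difference polynomial in $v,u$; combined with the vanishing of its dispersionless leading term $\hat v-v-e^u=0$ coming from \eqref{AL-Toda coordinates}, and with the fact that the quasi-Miura transformations producing $V,U,\hat V$ contribute at positive orders in $\varepsilon$ only jet-dependent corrections (with no free constants), this forces $\hat V-V-e^{U^+}=0$.

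The main obstacle in this plan is the final step: justifying rigorously that $\Lambda$-invariance together with a vanishing dispersionless leading term forces the full $\varepsilon$-series to vanish. This requires a careful analysis of the grading structure of the quasi-Miura corrections of the tau function, ruling out any nonzero jet-independent contributions order by order in $\varepsilon$.
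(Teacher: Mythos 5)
Your overall strategy coincides with the paper's: invoke Theorem \ref{thm: sol to loop eqn} to identify $\tau=\hat\tau$, then recover \eqref{AL-Toda full-genera coord transf} by matching tau-structure data on the two sides; your derivation of $\hat U=\log V^-+U$ from \eqref{Toda-2p-1} and \eqref{AL-2p-1} is exactly the paper's. The difference is in how you pin down $\hat V$, and this is where your argument has a genuine gap. You derive $(1-\Lambda^{-1})\bigl(\hat V-V-\rme^{U^+}\bigr)=0$ from \eqref{AL-tau-2} and the flow \eqref{AL-t20flow-2}, but the operator $1-\Lambda^{-1}=\veps\p_x-\tfrac{\veps^2}{2}\p_x^2+\cdots$ has a nontrivial kernel: order by order in $\veps$ it only forces each coefficient to be annihilated by $\p_x$, i.e.\ to be a jet-independent constant. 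The vanishing of the dispersionless leading term $\hat v-v-\rme^u$ via \eqref{AL-Toda coordinates} says nothing about these possible constants at orders $\veps^k$, $k\geq 1$, and your closing remark that this "requires a careful analysis of the grading structure" is precisely the missing step, not a proof of it. The gap is fillable (e.g.\ by the homogeneity of the quasi-Miura corrections under the grading $\deg v^{(k)}=\deg u^{(k)}=k$, which forces each order-$\veps^k$ coefficient to have positive degree and hence to vanish if constant), but as written the argument is incomplete.

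The paper sidesteps this issue entirely by comparing the \emph{second} two-point functions \eqref{Toda-2p-2} and \eqref{AL-2p-2}, which (after dividing by $\rme^{\hat U}=V^-\rme^U$) give $(1+\hat\Lambda^{-1})\hat V=\rme^{U^+}+\rme^{U^-}V^{--}/V^-+V^-+V$; the operator $1+\hat\Lambda^{-1}=2-\veps\p_{\hat x}+\cdots$ is invertible on formal $\veps$-series, so $\hat V$ is determined uniquely, and one checks with \eqref{V ominus}--\eqref{U ominus} that $V+\rme^{U^+}$ satisfies this equation. If you want to keep your route, you must supply the grading argument excluding constant kernel elements; otherwise, switch to the paper's comparison of $\p^2\log\tau/\p t^{2,0}\p t^{2,1}$, which requires no such supplementary analysis.
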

\begin{proof}
  According to Theorem \ref{thm: sol to loop eqn},
  the topological deformations of the Legendre-extended Principal Hierarchies of
  $(M,B)$ and $(\hat M,\hat B)$ share the same tau function
  \[
    \tau = \hat\tau.
  \]
  Therefore, it suffices to verify that the $V, U$ and $\hat V,\hat U$ defined as in
  \eqref{AL-tau-1}--\eqref{AL-tau-2} and
  \eqref{Toda-tau-1}--\eqref{Toda-tau-2} are related by \eqref{AL-Toda full-genera coord transf}.
  Comparing their two-point functions
  \eqref{Toda-2p-1}--\eqref{Toda-2p-2} and
  \eqref{AL-2p-1}--\eqref{AL-2p-2}, we arrive at
  \begin{align*}
    \rme^{\hat U} &=\, V^-\rme^U, \\
    \hat V+\hat V^{\ominus} &=\,
    \rme^{U^+}+\rme^{U^-}\frac{V^{--}}{V^-}+V^-+V,
  \end{align*}
  therefore $\hat U = \log V^- + U$, and from \eqref{V ominus}--\eqref{U ominus} we obtain
  \[
    \hat V = \frac{1}{1+\hat\Lmd^{-1}}
    \left(
      \rme^{U^+}+\rme^{U^-}\frac{V^{--}}{V^-}+V^-+V
    \right)
  =
    V+\rme^{U^+}.
  \]
  The theorem is proved.
\end{proof}

\begin{rmk}
  The relation \eqref{AL-Toda full-genera coord transf} given in the above theorem can also be derived from spectral problems.
  Note that the Lax operator $L$ of the Ablowitz-Ladik hierarchy given in \eqref{AL-Lax operator} can also be represented in the form
  \[
    L=(\Lmd-\tilde P)(1-\tilde Q\Lmd^{-1})^{-1},
  \]
  where
  \[\tilde P=\frac{P^+(Q-P)}{Q^+-P^+},\quad
  \tilde Q=\frac{Q(Q^--P^-)}{Q-P}.\]
  Then it can be verified that the discrete symmetry
  \eqref{discrete-sym of PQ-1}--\eqref{discrete-sym of PQ-2} is the compatibility condition
  of the following spectral problem and the discrete evolution of the wave function
  \begin{align}
    L\psi &=\, \lmd\psi, \label{AL-spec problem}\\
    \hat\Lmd\psi &=\, (\Lmd-Q^+)\psi,
  \end{align}
  where $\lmd, \psi$ are spectral parameter and wave function respectively.
  The evolution equation of the wave function can be rewritten as
  \begin{equation} \label{241120-2107-1}
    \psi^+=\psi^\oplus + Q^+\psi.
  \end{equation}
  By applying $\Lmd$ on both sides of the above equation
  and by using \eqref{discrete-sym of PQ-1}--\eqref{discrete-sym of PQ-2}, we obtain
  \begin{equation} \label{241120-2107-2}
    \psi^{++} = \psi^{\oplus\oplus}
    + (Q^{\oplus +} + Q^{++})\psi^\oplus
    + Q^{++}Q^+\psi.
  \end{equation}
  The spectral problem \eqref{AL-spec problem} implies
  \[
    \psi^{++}-(\lmd+P^+)\psi^++\lmd Q^+ \psi = 0.
  \]
By substituting \eqref{241120-2107-1}--\eqref{241120-2107-2} into the above equation,
  and by using \eqref{discrete-sym of PQ-1}--\eqref{discrete-sym of PQ-2}, we obtain
  \[
    \psi^{\oplus\oplus}
    +\left(Q^++Q-P\right)^\oplus\psi^\oplus
    +\left[\left(Q^--P^-\right)Q\right]^\oplus\psi = \lmd\psi^\oplus,
  \]
  therefore
  \[
    \left(
      \hat\Lmd + (Q^++Q-P) + (Q^--P^-)Q\hat\Lmd^{-1}
    \right)\psi = \lmd\psi,
  \]
  which is the spectral problem \eqref{Toda Lax Operator},
  \eqref{AL-VU coordinate}, \eqref{AL-Toda full-genera coord transf} of the Toda hierarchy.
\end{rmk}

\section{Conclusion}
In this paper we establish the relationship between the Principal Hierarchies, the tau structures,
the Virasoro operators and the full-genera free energies
of two semisimple generalized Frobenius manifolds $M$ and $\hat M$ that are related by a generalized Legendre transformation.
An important step in the establishing of this relationship is the construction of the hierarchies of flows associated with the Legendre fields. We call these flows the Legendre flows, and we add them to the Principal Hierarchies of the generalized Frobenius manifolds to obtain the Legendre extended Principal Hierarchies. We show that the Legendre extended Principal Hierarchies and their topological deformations associated with the generalized Frobenius manifolds $M$ and $\hat M$ are related by a linear reciprocal transformation, and we also give two examples to illustrate such a relationship between integrable hierarchies associated with generalized Frobenius manifolds.

We remark that a generic generalized Frobenius manifold $M$ with non-flat unity can be transformed to a usual Frobenius manifold $\hat M$ with flat unity by a certain generalized Legendre transformation.
To see this, let $\{v^1,\dots v^n\}$ be a system of flat coordinates of $M$ such that its Euler vector field takes the form \eqref{DDE=0},
then all the vector fields $\pp{v^\afa}$ are quasi-homogeneous and Legendre; under the assumption that $\pp{v^\afa}$ is invertible for a certain $\afa\in\{1,\dots, n\}$, the Legendre transformation associated with
$B=\pp{v^\afa}$ transforms $M$ to a Frobenius manifold $\hat M$ with flat unity $e = \pp{\hat v^\afa}$ with respect to the flat metric of $\hat M$.

Note that the quasi-Miura transformation
 \eqref{quasi-Miura transf} yielded by the solution to the loop equation \eqref{loop equation-2308}
 transforms the bihamiltonian structure \eqref{Bihamiltonian structure} of the Legendre-extended Principal Hierarchy of a semisimple generalized Frobenius manifold $M$
 to that of its topological deformation.
 An important problem is whether this deformed bihamiltonian structure possesses the polynomiality property, i.e., whether the deformation terms of the bihamiltonian structure can be represented in terms of differential polynomials of the unknown functions of the integrable hierarchy.
 In the case of a usual semisimple Frobenius manifold with flat unity,
 the polynomiality of its deformed bihamiltonian structure is studied in  \cite{shadrin-1, shadrin-2, normal-form, Iglesias-Shadrin}, and it is proved recently in \cite{LWZ1}. The fact that a generic generalized Frobenius manifold with non-flat unity can be transformed to a usual Frobenius manifold with flat unity by a generalized Legendre transformation leads us to conjecture that
 the polynomiality property also holds true for
 the deformed bihamiltonian structure of a semisimple generalized Frobenius manifold with non-flat unity.
We will study this topic in subsequent publications.
\vskip 0.3cm
\noindent \textbf{Acknowledgements.}
This work is supported by
National Key R\&D Program of China (Grant No.\,2020YFE0204200),
NSFC No.\,12171268 and No.\,11725104. We would like to thank Di Yang and Zhe Wang for very helpful comments and discussions on this work.

\end{document}